\def\BibTeX{{\rm B\kern-.05em{\sc i\kern-.025em b}\kern-.08em
		T\kern-.1667em\lower.7ex\hbox{E}\kern-.125emX}}
\begin{document}
\newtheorem{Asump}{\bf Assumption}
\newtheorem{lemma}{\bf Lemma}
\newtheorem{theorem}{\bf Theorem}
\newtheorem{proof}{Proof}
\setlength{\belowcaptionskip}{-8pt}
\setlength{\abovedisplayskip}{1pt}
\setlength{\belowdisplayskip}{1pt}
\title{AMP-based Joint Activity Detection and Channel Estimation for Massive Grant-Free Access in OFDM-based Wideband Systems\\
}
\author{Zhiyan Li,~\IEEEmembership{Graduate student member,~IEEE,} Ying Cui,~\IEEEmembership{Member,~IEEE,} Danny H.K. Tsang,~\IEEEmembership{Life Fellow,~IEEE}
	
	\thanks{The authors are with the IoT Thrust, Hong Kong University of Science and Technology (Guangzhou), Guangzhou 511400, China. This paper was presented in part at IEEE GLOBECOM Workshop 2024 \cite{zhiyanGlobecom}.
	}

}

\maketitle

\begin{abstract}
To realize orthogonal frequency division multiplexing (OFDM)-based grant-free access for wideband systems under frequency-selective fading, existing device activity detection and channel estimation methods need substantial accuracy improvement or computation time reduction. In this paper, we aim to resolve this issue. First, we present an exact time-domain signal model for OFDM-based grant-free access under frequency-selective fading. 
Then, we present a maximum a posteriori (MAP)-based device activity detection problem and two minimum mean square error (MMSE)-based channel estimation problems. The MAP-based device activity detection problem and \textcolor{black}{one of the} MMSE-based channel estimation \textcolor{black}{problems} are formulated for the first time.
Next, we build a new factor graph that captures the exact statistics of time-domain channels and device activities. Based on it, we propose two approximate message passing (AMP)-based algorithms, \emph{AMP-A-EC} and \emph{AMP-A-AC}, \textcolor{black}{to approximately solve} the MAP-based device activity detection problem and two MMSE-based channel estimation problems. Both proposed algorithms alleviate the AMP's inherent convergence problem when the pilot length is smaller or comparable to the number of active devices. Then, we analyze \emph{AMP-A-EC}'s error probability of activity detection and mean square error (MSE) of channel estimation via state evolution and show that \emph{AMP-A-AC} has the lower computational complexity (in dominant term). Finally, numerical results show the two proposed AMP-based algorithms' superior performance and respective preferable regions, revealing their significant values for OFDM-based grant-free access.
\end{abstract}
\begin{IEEEkeywords}
 mMTC, grant-free access, OFDM, activity detection, channel estimation, and AMP.
\end{IEEEkeywords}

\section{Introduction}

As the global IoT device count continues to \textcolor{black}{increase}, massive machine-type communication (mMTC) has emerged as one of the primary use cases for both 5G and beyond 5G (B5G) wireless networks \cite{Erik,Chawla19TSP,Dey20TWC}. However, supporting many energy-limited and intermittently active IoT devices transmitting small packets \textcolor{black}{remain challenging}. Grant-free access has been proposed as a \textcolor{black}{promising} solution, with device activity detection and channel estimation being critical components.

Existing work investigates activity detection and channel estimation mainly for narrowband systems under flat fading \cite{Caire18ISIT,Yu19ICC,zhongGLOBECOM,Yu21SPAWC,Jiang21TWC,Chen18TSP,Liu18TSP,Senel18TCOM,Qiu,Shao19IoTJ,QiuSPL,Zhang23TSP,JSAC_li}.
Most of them adopt statistical estimation methods \textcolor{black}{that} rely on channel, noise, and activity statistics, as illustrated in Table~\ref{tab:table_method}. Specifically, statistical estimation methods include classical statistical estimation methods (which assume parameters of interest to be deterministic but unknown), e.g., maximum likelihood estimation (MLE) and Bayesian estimation methods (which view the parameters of interest as realizations of random variables with known prior distributions), e.g., maximum a posterior estimation (MAPE) and minimum mean square error (MMSE).
First, MLE \cite{Caire18ISIT,Yu19ICC,zhongGLOBECOM,Yu21SPAWC} and MAPE \cite{Jiang21TWC} are applied to device activity detection, and
the block coordinate descent (BCD) method is proposed to obtain stationary points of the MLE \cite{Caire18ISIT,Yu19ICC,zhongGLOBECOM,Yu21SPAWC} and MAPE \cite{Jiang21TWC} problems for device activities. The resulting BCD-based algorithms have convergence guarantees and achieve high detection accuracies, but with long computation times due to their sequential updates in each iteration. Given very accurate device activity detection results, channel estimations of active devices are readily obtained via classic channel estimation methods. 
Next, MMSE \cite{Chen18TSP,Liu18TSP,Senel18TCOM,Shao19IoTJ,Qiu,QiuSPL,Zhang23TSP} is adopted to estimate the effective channel of each device, which is the channel if the device is active and zero otherwise. 
The approximate message passing (AMP) \cite{Chen18TSP,Liu18TSP,Senel18TCOM,Shao19IoTJ} and its variants \cite{Qiu,QiuSPL,Zhang23TSP} are adopted to approximately solve the MMSE problems. The resulting AMP-based algorithms have short computation times thanks to the low-complexity parallel update mechanism. However, they have more restrictive requirements of system parameters than MLE and MAPE-based ones, and the estimation errors may not monotonically decrease over iterations when the pilot length is smaller or comparable to the number of active devices. Based on the estimated effective channel, the device activity can be obtained using thresholding. Compared to MLE-based and MAPE-based methods, MMSE-based methods can achieve short computation times at the sacrifice of detection accuracy due to the inherent approximation mechanism for enforcing computational complexity reduction in solving optimization problems. \textcolor{black}{Some existing work formulates} effective channel estimation problems as regularized norm approximation problems, called GROUP LASSO \cite{JSAC_li,ADMM,OMPold}, assuming small noise and sparse device activities without using any statistics. They employ optimization algorithms, e.g., alternating direction method of multipliers (ADMM) \cite{ADMM,JSAC_li}, or greedy algorithms, e.g., orthogonal matching pursuit (OMP) \cite{OMPold}, to solve the GROUP LASSO problems. The resulting accuracy is typically worse than those of MLE, MAPE, and MMSE-based methods due to the lack of statistical information.

Notably, the abovementioned state-of-the-art methods for narrowband systems with flat fading are no longer suitable for wideband systems under frequency-selective fading. Orthogonal frequency division multiplexing (OFDM) is widely adopted for wideband systems to cope with channel frequency selectivity. Activity detection and channel estimation for OFDM-based grant-free access for wideband systems under frequency-selective fading has drawn increasing attention \cite{OMP,JiaTWC, JiangMP_TWC,GaoAMP_OFDM_TWC} and is the focus of this paper. 
For instance, \cite{JiaTWC} adopts an exact time-domain signal model under the assumption that the pilot length and the number of subcarriers are identical;
\cite{JiangMP_TWC, GaoAMP_OFDM_TWC,OMP} adopt simplified frequency-domain signal models which completely ignore the dependence caused by the shared device activities and the underlying time-domain frequency-selective fading channels \cite{GaoAMP_OFDM_TWC} or approximate capture the dependence caused by the underlying time-domain frequency-selective fading channels \cite{JiangMP_TWC,OMP}. Specifically, \cite{JiaTWC} proposes MLE and MAPE-based device activity detection methods via BCD; \cite{GaoAMP_OFDM_TWC} proposes MMSE-based device activity detection and channel estimation methods via AMP; \cite{JiangMP_TWC} proposes MMSE-based device activity detection and channel estimation methods via message passing (MP) method's variants, which have higher computational complexity than AMP;
\cite{OMP} proposes a GROUP-LASSO-based activity detection and channel estimation method via OMP for a single antenna base station (BS). The accuracies and computation times of the MLE, MAPE, MMSE, and GROUP-LASSO-based methods for OFDM-based wideband systems are consistent with their correspondences for narrowband systems.


\begin{table}[t] 
	\caption{Activity Detection and Channel Estimation. \label{tab:table_method}}
	\centering
	\scriptsize
	\begin{tabular}{|c|c|c|c|}
		\hline
		\textbf{\makecell{Application\\cases}} &\textbf{\makecell{Estimation\\ problems}} & \textbf{Methods} & \textbf{\makecell{Prior \\information}}\\
		\hline
		\multirow{2}{*}{\makecell{Activity \\ detection}}&MLE  & {\makecell{BCD\\ \cite{Caire18ISIT,Yu19ICC,zhongGLOBECOM,Yu21SPAWC}\cite{JiaTWC}}} & Gaussian noise \\
		\cline{2-4} 
		&	MAPE  & {\makecell{BCD\\ \cite{Jiang21TWC}\cite{JiaTWC}}} & \makecell{Gaussian noise \\ Bernoulli activities} \\
		\hline
		\multirow{3}{*}{ \makecell[c]{Activity\\ detection and \\channel\\ estimation}}&MMSE  & \makecell{AMP \\ \cite{Chen18TSP,Liu18TSP,Senel18TCOM,Shao19IoTJ,Qiu,QiuSPL,Zhang23TSP} \cite{GaoAMP_OFDM_TWC}} & \makecell{Gaussian noise \\ Bernoulli activities \\ Gaussian channels} \\ 
				\cline{2-4}
		&\multirow{2}{*}{\makecell{GROUP \\LASSO}}  & ADMM  \cite{JSAC_li}\cite{ADMM} &  \multirow{2}{*}{\makecell{Small noise \\ Sparse effective channels}} \\
		\cline{3-3}
		&	& OMP \cite{OMPold} \cite{OMP} &  \\
		\hline
	\end{tabular}
\end{table}

Four primary limitations exist in the existing work for OFDM-based grant-free access under frequency-selective fading. Firstly, the existing AMP-based algorithms \cite{Chen18TSP,Liu18TSP,Senel18TCOM,Qiu,Shao19IoTJ,QiuSPL,Zhang23TSP, GaoAMP_OFDM_TWC} return the estimates of the effective channels that may not be the best among all iterates when the pilot length is smaller or comparable to the number of active devices. Secondly, the AMP-based algorithm \cite{GaoAMP_OFDM_TWC}, MP-based algorithm \cite{JiangMP_TWC}, and OMP-based algorithm \cite{OMP} incur additional accuracy loss partially due to the approximations on effective channels. Thirdly, the BCD-based algorithm \cite{JiaTWC} and MP-based algorithm \cite{JiangMP_TWC} are computationally expensive for practical systems. 
Finally, the performance analysis of activity detection and channel estimation remains unresolved. 

In this paper, we would like to address the above limitations. Specifically, we investigate joint device activity detection and channel estimation for OFDM-based grant-free access in a wideband system under frequency-selective fading. The main contributions are summarized as follows. 

\noindent\textbullet\ We extend the exact time-domain signal model in \cite{JiaTWC} to incorporate more flexible \textcolor{black}{configurations} for the pilot length and number of subcarriers and precisely characterize the relationship among frequency-domain channels. Based on this model, we present a MAP-based device activity detection problem and two MMSE-based channel estimation problems for estimating effective channels and active devices' channels, respectively. Notably, neither the MAP-based device activity detection problem nor the MMSE-based channel estimation problem for active devices has ever been considered.

\noindent\textbullet\  We build a new factor graph that captures the exact statistics of time-domain channels and device activities. Based on this factor graph, we present the MP method and its approximation, which lay a foundation for solving the MAP-based device activity detection problem and the two MMSE-based channel estimation problems.

\noindent\textbullet\  We develop two AMP-based algorithms, \emph{AMP-A-EC} and \emph{AMP-A-AC}, relying on the approximations of the MP method with different parameter simplification operations. \emph{AMP-A-EC} approximately solves the MAP-based device activity detection problem and MMSE-based effective channel estimation problem. In contrast, \emph{AMP-A-AC} approximately solves the MAP-based device activity detection problem and MMSE-based actual channel estimation problem for active devices. Both proposed algorithms keep tracking the best estimates over iterations, alleviating the AMP method's inherent convergence problem when the pilot length is smaller or comparable to the number of active devices \cite{Chen18TSP,Liu18TSP,Senel18TCOM,Qiu,Shao19IoTJ,QiuSPL,Zhang23TSP, GaoAMP_OFDM_TWC}.

\noindent\textbullet\  We analyze \emph{AMP-A-EC}'s error probability of activity detection and mean square error (MSE) of channel estimation using the classic state evolution (SE) technique. \emph{AMP-A-AC} does not follow SE, making the analysis intractable, but exhibits lower computational complexity (in dominant term) than \emph{AMP-A-EC}.

\noindent\textbullet\  Numerical results show that \emph{AMP-A-EC} and \emph{AMP-A-AC} can reduce the error probability and MSE by up to $94\%$ and $33\%$, respectively, compared to the existing AMP-based algorithm in \cite{GaoAMP_OFDM_TWC} for an OFDM-based wideband system, the extension of the AMP-based algorithms in \cite{Chen18TSP,Liu18TSP,Senel18TCOM} to OFDM-based wideband systems, and the extension of OMP in \cite{OMP} to a multi-antenna BS, and reduce the computation time by up to $96\%$, compared to the MLE-based method in \cite{JiaTWC}. \textcolor{black}{The} numerical results also validate the performance analysis of \emph{AMP-A-EC}, and the respective preferable regions of \emph{AMP-A-EC} (preferable at short pilot lengths and small numbers of antennas) and \emph{AMP-A-AC} (preferable at long pilot lengths and large numbers of antennas). The gains of the proposed algorithms reveal their significant practical value in OFDM-based massive grant-free access.


{\emph{Notation:}} Use bold uppercase (e.g., $\mathbf{X}$) and bold lowercase (e.g., $\mathbf{a}$) to denote matrices and vectors, respectively. $\mathbf{X}_{i,m:n}$ denotes a row vector consisting of the elements from columns $m$ to $n$ in the $i$-th row of matrix $\mathbf{X}$, $\mathbf{X}_{i:j,m}$ denotes a column vector consisting of the elements from rows $i$ to $j$ in the $m$-th column, and $\mathbf{X}_{i:j,m:n}$ denotes a block containing the elements from rows $i$ to $j$ and columns $m$ to $n$. $\mathbf{X}_{\overline{n,p,m}}$ represents all elements in $\mathbf{X}$ except the element indexed by $(n,p,m)$. $\mathbf{I}$ denotes the unit matrix. Use  uppercase letters \textcolor{black}{in calligraphy} (e.g., $\mathcal{N}$) to denote sets. The conjugate and conjugate transpose operators are denoted by $(\cdot)^*$ and $(\cdot)^H$, respectively. Use $f_{\mathcal{CN}}(x;a,b)$ to denote the complex Gaussian probability density function (PDF) with mean $a$ and variance $b$. 
Use $f_{\mathcal{B}}(x;c)$ to denote the probability mass function (PMF) of the Bernoulli distribution with parameter $c$.
Use $\delta(\cdot)$ to denote the Delta function.
Use $f_{\mathcal{B-CN}}(x;a,b,c) \triangleq c f_{\mathcal{CN}}(x;a,b) + (1-c)\delta(x)$ to denote the Bernoulli-Gaussian distribution. Let $\|\mu_1-\mu_2\|_K \triangleq \sup_{a \in \mathbb{R}} |\int_{-\infty}^a \left(\mu_1(x)-\mu_2(x)\right)  dx |$ denote the Kolmogorov distance between distributions $\mu_1$ and $\mu_2$.
We denote the convergence in  \textcolor{black}{the} Kolmogorov distance as $\xrightarrow{\|\cdot\|_K}$.
\section{System model}
We study a single-cell cellular network that consists of one $M$-antenna BS and $N$ single-antenna IoT devices and operates on a wide band. Let $\mathcal M\triangleq \{1,2,\cdots,M\}$ and $\mathcal N\triangleq \{1,2,\cdots,N\}$.
The wireless channel is characterized by large-scale fading and small-scale fading. For all $n \in \mathcal{N}$, let $\beta_n>0$ denote the large-scale fading power of the channel between device $n$ and the BS. Suppose that $\beta_n,n\in \mathcal{N}$ are known to the BS \cite{Liu18TSP,JiaTWC}. We consider block fading and frequency-selective fading models for small-scale fading with $P$ channel taps. Denote $\mathcal P\triangleq \{1,2,\cdots,P\}$. For all $n\in\mathcal{N}$, $p\in\mathcal{P}$, and $m \in \mathcal{M}$, let $g_{n,p,m}$ denote the small-scale fading coefficient of the $p$-th tap of the channel between device $n$ and the BS's antenna $m$. Suppose that the small-scale fading coefficients $g_{n,p,m}, n\in\mathcal{N}, p \in \mathcal{P}, m \in\mathcal{M}$ are unknown to the BS and follow i.i.d. $\mathcal{CN}(0,1)$. 
For all $n\in\mathcal{N}$, $p\in\mathcal{P}$, and $m \in \mathcal{M}$, the overall wireless channel between device $n$ and the BS's antenna $m$ can be represented as $h_{n,p,m} = \sqrt{\beta_n}g_{n,p,m} \in \mathbb{C}$.

We investigate the massive access scenario stemming from mMTC. In particular, only a few devices activate and access the BS within each coherence block. For all $n\in\mathcal{N}$, $a_n\in\{0,1\}$ denotes the activity state of device $n$, where $a_n=1$ indicates that device $n$ is active and $a_n=0$ otherwise. In the  scenario \textcolor{black}{considered}, $\sum_{n\in\mathcal{N}} a_n =N_a \ll N$, i.e., $\boldsymbol{a}\triangleq (a_n)_{n\in\mathcal N} \in \{0,1\}^N$ is sparse. As in \cite{Liu18TSP}, suppose that $a_n,n\in\mathcal{N}$ are i.i.d. Bernoulli random variables, each with probability $\rho_n \in (0,1)$. For ease of exposition, suppose that $\rho_n,n\in\mathcal{N}$ are known to the BS. The probability mass function (PMF) of $\boldsymbol{a}$ is given by:
\begin{align} \label{eq:PMF_a}
	p(\boldsymbol{a}) = \prod_{n \in \mathcal{N}} p(a_n) = \prod_{n \in \mathcal{N}}\rho_n^{a_n}(1-\rho_n)^{1-a_n}.
\end{align} 
For all $n\in\mathcal{N}, p\in\mathcal{P}$, and $m \in \mathcal{M}$, let $x_{n,p,m}=a_n h_{n,p,m}$ represent the $p$-th effective channel between device $n$ and the BS's antenna $m$. By (\ref{eq:PMF_a}) and $h_{n,p,m} \sim \mathcal{CN}(0,\beta_n)$, we have: 
\begin{align} \label{eq:prior_xa}
	p(x_{n,p,m}|a_n) = f_{\mathcal{B-CN}}(x_{n,p,m}; 0,\beta_n,a_n).
\end{align}
For all $n\in N$, $x_{n,p,m}=a_n h_{n,p,m},p\in P,m \in M$ are unconditionally dependent but independently conditioned on $a_n$. \textcolor{black}{Noting} that $a_n$ and $h_{n,p,m}$ are unknown, we will estimate active devices' channel conditions from the estimate of $x_{n,p,m},p\in P,m \in M$, which will be illustrated shortly.

As illustrated in Fig. \ref{fig_pilot}, we consider an OFDM-based massive grant-free access scheme to prompt efficient uplink transmission over the wideband system in an arrive-and-go manner \cite{JiaTWC}. Denote $K$ as the number of subcarriers, and $\mathcal K\triangleq \{1,2,\cdots,K\}$. Each device $n\in\mathcal{N}$ is allocated a unique length-$L$ pilot sequence. Suppose $L = KQ \ll N$, for some $Q \in \mathbb{Z}_{++}$. Denote $\mathcal L\triangleq \{1,2,\cdots,L\}$ and $\mathcal Q\triangleq \{1,2,\cdots,Q\}$. 
For all $n \in \mathcal{N}$, device $n$ inserts its $L$ pilot symbols\footnote{When $L$ increases, more time-frequency communication resources are consumed in the pilot transmission phase, and the accuracies of activity detection and channel estimation increase.} into the $K$ subcarriers, forming $Q$ OFDM pilot symbols, denoted by $\tilde{\mathbf s}_{q,n} \in \mathbb C^{K \times 1},q\in\mathcal{Q}$. 
\begin{figure}[!t]
	\centering
	\includegraphics[width=4.5cm]{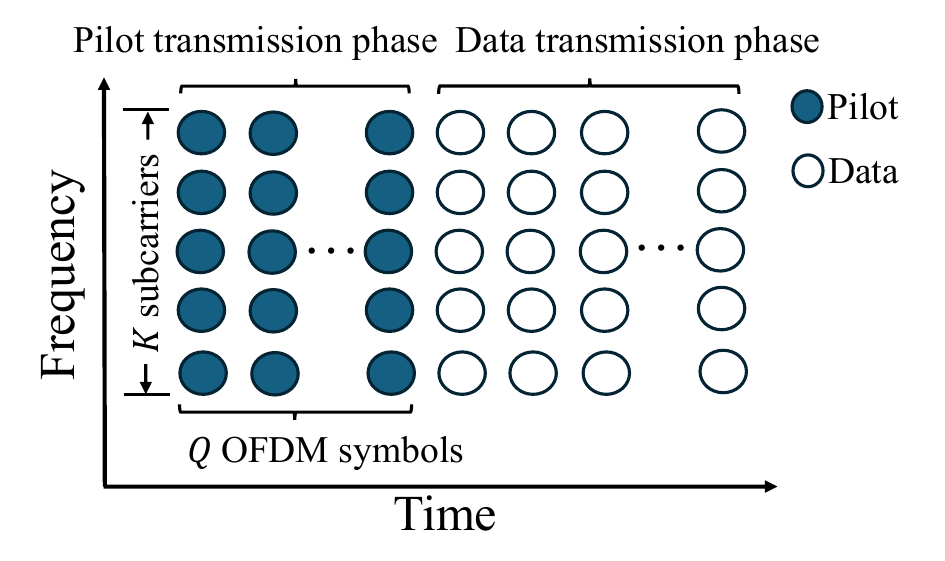}
	\vspace{-3mm}
	\caption{Illustration of time-frequency structure.}
	\label{fig_pilot}
\end{figure}

Assume $\sum_{q \in \mathcal{Q}} \|\tilde{\mathbf s}_{q,n}\|^2_2 = K$, $n \in \mathcal{N}$ and \textcolor{black}{that} each element of $\tilde{\mathbf s}_{q,n}$ is $\mathcal{O}(\frac{K}{\sqrt{L}})$, as $L\rightarrow\infty$.
For all $q\in\mathcal{Q}$ and $n\in\mathcal{N}$, the time-domain representation of $\tilde{\mathbf{s}}_{q,n}$ is given by its normalized inverse discrete Fourier transform (IDFT):
\begin{align}
	\mathbf{s}_{q,n} =\mathbf F^H\tilde{\mathbf{s}}_{q,n} \in \mathbb{C}^{K \times 1},\quad q\in\mathcal{Q},n \in \mathcal{N}.
\end{align}
Here, $\mathbf F \triangleq (F_{k,k'})_{k,k'\in\mathcal{K}} \in\mathbb C^{K\times K}$ represents the normalized discrete Fourier transform (DFT) matrix, where $F_{k,k'}\triangleq \frac{1}{\sqrt{K}}e^{-\frac{j2\pi (k-1) (k'-1)}{K}}$. Each device $n\in\mathcal{N}$ appends a cyclic prefix \textcolor{black}{with length} no smaller than $P-1$  to $\mathbf{s}_{q,n}$, for all $q\in\mathcal{Q}$, and transmits the $Q$ OFDM pilot symbols with cyclic prefixes. After removing the respective cyclic prefixes, the received signal for the $q$-th OFDM pilot symbol at antenna $m\in\mathcal{M}$ of the BS, $\mathbf y_{q,m} \triangleq (y_{q,k,m})_{k\in\mathcal K}\in\mathbb{C}^{K}$, can be written as:
\begin{align}
	\mathbf y_{q,m} = \sum_{n\in\mathcal N} a_n \mathbf H_{n,m}\mathbf{F}^{H} \tilde{\mathbf s}_{q,n} + \mathbf n_{q,m}, q\in\mathcal{Q},m\in\mathcal{M},
	\label{eq:receivesignal}
 \end{align}
where
\begin{align} \nonumber
	\mathbf H_{n,m}\triangleq & \begin{bmatrix}
		h_{n,1,m} & h_{n,K,m} &\cdots & h_{n,2,m}\\
		h_{n,2,m} & h_{n,1,m} &\cdots & h_{n,3,m}\\
		\vdots & \vdots &\ddots & \vdots\\
		h_{n,K,m} & h_{n,K-1,m}&\cdots & h_{n,1,m}\\
	\end{bmatrix} \in \mathbb{C}^{K\times K},
\end{align}
 and $\mathbf{n}_{q,m} \triangleq (n_{q,k,m})_{k\in\mathcal{K}} \in \mathbb{C}^{K}$ is the additive white Gaussian noise (AWGN)  with $n_{q,k,m},q\in\mathcal{Q},k \in \mathcal{K},m\in\mathcal{M}$ following i.i.d. $\mathcal {CN}(0,\sigma^2)$.
Here, for notation convenience, we let $h_{n,k,m}=0$, $k \in \mathcal{K}\backslash\mathcal{P}, n\in\mathcal N$, $m\in\mathcal M$. For all $n\in\mathcal{N},m\in\mathcal{M}$,  $\mathbf{H}_{n,m}$ is a circulant matrix that is specified by the first column or row vector.

To facilitate device activity detection and channel estimation, we present an equivalent model for received signals. 
By left-multiplying a DFT matrix to $\mathbf y_{q,m}$ in (\ref{eq:receivesignal}), we have: 
\begin{align}\nonumber
	&\tilde{\mathbf y}_{q,m}=\mathbf F\mathbf y_{q,m} = \sum_{n\in\mathcal N}a_n \mathbf F\mathbf H_{n,m}\mathbf F^H\tilde{\mathbf s}_{q,n} + \mathbf{F}\mathbf{n}_{q,m}\\ \nonumber
	&= \sum_{n\in\mathcal N}a_n{\rm diag}(\tilde{\mathbf{s}}_{q,n})\mathbf{F}(\mathbf H_{n,m})_{:,1}+\tilde{\mathbf n}_{q,m},q\in\mathcal{Q},m \in \mathcal{M}, 
\end{align}
where the last equality is because $\mathbf F\mathbf H_{n,m}\mathbf F^H = {\rm diag}(\mathbf{F}(\mathbf H_{n,m})_{:,1}) \in \mathbb{C}^{K\times K}$ is a diagonal matrix. Then,
by left-multiplying the normalized IDFT matrix to $\tilde{\mathbf y}_{q,m}$, we rewrite $\mathbf{y}_{q,m}$ in \eqref{eq:receivesignal} as:
\begin{align} \nonumber
	\mathbf y_{q,m} = \mathbf F^H\tilde{\mathbf y}_{q,m}  
	=  \sum_{n\in\mathcal N} a_n  \tilde{\mathbf{A}}_{q,n} \mathbf h_{n,m} + \mathbf n_{q,m}, q\in\mathcal{Q},m\in\mathcal{M}, 
\end{align}
where $\tilde{\mathbf{A}}_{q,n} \triangleq (\mathbf F^H{\rm diag}(\tilde{\mathbf{s}}_{q,n})\mathbf F)_{:,1:P} \in \mathbb C^{K\times P}$, $\mathbf h_{n,m} \triangleq (h_{n,p,m})_{p\in\mathcal{P}} \in \mathbb C^{P}$, and the last equality is due to $\mathbf{F}^H\mathbf{F}=\mathbf{I}_K$ and $h_{n,k,m}=0$, $k \in \mathcal{K}\backslash\mathcal{P}, n\in\mathcal N$, $m\in\mathcal M$. 
Define $\mathbf{Y}\in \mathbb{C}^{L \times M}$ with $\mathbf{Y}_{(q-1)K+1:qK,m} = \mathbf y_{q,m}$, $\mathbf{N}\in \mathbb{C}^{L \times M}$ with $\mathbf{N}_{(q-1)K+1:qK,m} = \mathbf n_{q,m}$, $\mathbf{A}\in \mathbb{C}^{L \times NP}$ with $\mathbf{A}_{(q-1)K+1:qK,(n-1)P+1:nP} = \tilde{\mathbf{A}}_{q,n}$, and $\mathbf{X}\in \mathbb{C}^{NP \times M}$ with $\mathbf{X}_{(n-1)P+1:nP,m} = a_n \mathbf h_{n,m}$. Let $x_{n,p,m}$ denote the $((n-1)P+p,m)$-th element of $\mathbf{X}$, and let $A_{l,n,p}$ denote the $(l,(n-1)P+p)$-th element of $\mathbf{A}$. By the assumptions on $\tilde{\mathbf{s}}_{q,n}$, we have $\sum_{l \in \mathcal{L}}|A_{l,n,p}|^2 = 1$, $A_{l,n,p} = \mathcal{O}(\frac{1}{\sqrt{L}})$, and $|A_{l,n,p}|^2 = \mathcal{O}(\frac{1}{L})$, as $L\rightarrow\infty$.
More compactly, the received signals for the $Q$ OFDM symbols \textcolor{black}{on} the $M$ antennas at the BS can be expressed as:
\begin{align} \label{eq:system_model}
	\mathbf{Y}  = \mathbf{A} \mathbf{X}+ \mathbf{N}.
\end{align}

\emph{Remark 1 (Comparisons for Signal Models):} The considered signal model for pilot length $L=KQ$ with $Q \in \mathbb{Z}_{++}$ under frequency-selective fading is more complex than those for pilot length $L$ under flat fading \cite{Liu18TSP} and pilot length $L=K$ under frequency-selective fading \cite{JiaTWC}.

\section{MAP Detection, MMSE Estimation, and Solutions}

\subsection{MAP Activity Detection}
For all $n\in \mathcal{N}$, one can apply the MAP detector for $a_n$, which maximizes a posterior probability $p(a_n|\mathbf{Y})$. The MAP detection problem of $a_n$ is formulated as follows\footnote{The joint detection of devices' activities $\mathbf{a}$ can be formulated as a MAP detection problem, i.e., $\max_{\mathbf{a}\in\{0,1\}^N} p(\mathbf{a}|\mathbf{Y})$, which is NP-hard. Hence, we approximate it with $N$ simpler scalar optimization problems as shown in (\ref{eq:op_prob_a_n}).}:
\begin{align}\label{eq:op_prob_a_n}
	\max_{a_n\in\{0,1\}} p(a_n|\mathbf{Y}).
\end{align} 
The optimal solution of the problem in (\ref{eq:op_prob_a_n}), denoted by $\hat{a}^{\star}_n(\mathbf{Y})$, can be written as:
\begin{align} \label{eq:detector_true}
\hat{a}^\star_n(\mathbf{Y})\triangleq\begin{cases}
	0,\ \theta_n < 0,\\
	1,\ \theta_n \geq 0,\\
\end{cases}
\end{align}
where 
\begin{align} \label{eq:theta_n}
\theta_n \triangleq \log \left(\frac{p(a_n=1|\mathbf{Y})}{p(a_n=0|\mathbf{Y})}\right).
\end{align}

\subsection{MMSE Channel Estimation}
\textcolor{black}{Noting} that $x_{n,p,m}=a_n h_{n,p,m}, n \in \mathcal{N}, p \in \mathcal{P}, m \in \mathcal{M}$, there are two methods for channel estimation. Firstly, for all $n\in\mathcal{N},p\in\mathcal{P}, m \in \mathcal{M}$, we can apply the Bayesian MMSE estimator for $x_{n,p,m}$, which minimizes the Bayesian MSE. 
The MMSE estimation problem of $x_{n,p,m}$ is formulated as follows:
\begin{align} \label{eq:op_prob_x_npm}
	\min_{\hat{x}_{n,p,m}(\mathbf{Y})}&\mathbb{E}\left[|{x}_{n,p,m}-\hat{x}_{n,p,m}(\mathbf{Y})|^2 |\mathbf{Y}\right].
\end{align}
By taking the first derivative of the objective function with respect to $\hat{x}_{n,p,m}(\mathbf{Y})$ and setting it to zero, we can obtain the optimal solution of the optimization problem in (\ref{eq:op_prob_x_npm}), given by:
\begin{align}\label{eq:MMSE_x_equation}
\hat{x}^{\star}_{n,p,m}(\mathbf{Y}) \triangleq   &\int x_{n,p,m} p(x_{n,p,m}|\mathbf{Y})d x_{n,p,m}.
\end{align}
Secondly, for all $n\in\mathcal{N},p\in\mathcal{P}, m \in \mathcal{M}$, we can apply the Bayesian MMSE estimator for $h_{n,p,m}$ by treating $a_n=1$.
The MMSE estimation problem of $h_{n,p,m}$ is formulated as follows:
\begin{align} \label{eq:op_prob_h_npm}
	\min_{\hat{h}_{n,p,m}(\mathbf{Y})}&\mathbb{E}\left[|{h}_{n,p,m}-\hat{h}_{n,p,m}(\mathbf{Y})|^2 |\mathbf{Y},a_n=1\right].
\end{align}
Similarly, by taking the first derivative of the objective function with respect to $\hat{h}_{n,p,m}(\mathbf{Y})$ and setting it to zero, we can obtain the optimal solution of the optimization problem in (\ref{eq:op_prob_h_npm}):
\begin{align} \label{eq:MMSE_h_equation}
	\hat{h}^{\star}_{n,p,m}(\mathbf{Y}) \triangleq  & \int h_{n,p,m} p(h_{n,p,m}|\mathbf{Y},a_n=1)d h_{n,p,m}.
\end{align}
Consequently, for all $n\in\mathcal{N}$, if device $n$ is detected \textcolor{black}{to be} active, the MMSE estimations of $h_{n,p,m},p\in\mathcal{P}, m \in \mathcal{M}$ are $\hat{h}^{\star}_{n,p,m}(\mathbf{Y}),p\in\mathcal{P}, m \in \mathcal{M}$; otherwise,  the MMSE estimation of $h_{n,p,m},p\in\mathcal{P}, m \in \mathcal{M}$ are void.

%
\subsection{\textcolor{black}{Solutions}} \label{sec:3c}

To calculate $\hat{a}^{\star}_n(\mathbf{Y})$ in (\ref{eq:detector_true}), $\hat{x}^{\star}_{n,p,m}(\mathbf{Y})$ in (\ref{eq:MMSE_x_equation}), and $\hat{h}^{\star}_{n,p,m}(\mathbf{Y})$ in (\ref{eq:MMSE_h_equation}), we need to first obtain $p(a_n|\mathbf{Y})$ (which determines the threshold $\theta_n$), $p(x_{n,p,m}|\mathbf{Y})$, and $p(h_{n,p,m}|\mathbf{Y},a_n=1)$, respectively.
\begin{lemma}\label{lemma:MMSE}
	We have
	\begin{align}\label{eq:mmse_a}
		&p(a_n|\mathbf{Y}) = \frac{\sum_{\mathbf{a}_{\bar{n}}} \int p(\mathbf{X},\mathbf{a},\mathbf{Y})d\mathbf{X} }{\sum_{\mathbf{a}} \int p(\mathbf{X},\mathbf{a},\mathbf{Y})d\mathbf{X} }, \\ \label{eq:MMSE_x}
		&p(x_{n,p,m}|\mathbf{Y}) = \frac{\sum_{\mathbf{a}}\int p(\mathbf{X},\mathbf{a},\mathbf{Y})d\mathbf{X}_{\overline{n,p,m}} }{\int \sum_{\mathbf{a}}\int p(\mathbf{X},\mathbf{a},\mathbf{Y})d\mathbf{X}_{\overline{n,p,m}} d x_{n,p,m}}, \\ \nonumber
		&{\color{black}p(h_{n,p,m}|\mathbf{Y},a_n) = p(x_{n,p,m}|\mathbf{Y},a_n)}\\ \label{eq:MMSE_h}
		&{\color{black}=\frac{\sum_{\mathbf{a}_{\bar{n}}}\int p(\mathbf{X},\mathbf{a},\mathbf{Y})d\mathbf{X}_{\overline{n,p,m}} }{\int \sum_{\mathbf{a}_{\bar{n}}}\int p(\mathbf{X},\mathbf{a},\mathbf{Y})d\mathbf{X}_{\overline{n,p,m}} d x_{n,p,m}}}, 
	\end{align}
where $n\in\mathcal{N}, p \in \mathcal{P}, m \in\mathcal{M}$, and 
\begin{align} \nonumber
	&p(\mathbf{X},\mathbf{a},\mathbf{Y}) =\prod_{l \in \mathcal{L}}\prod_{m \in \mathcal{M}}p(y_{l,m}|\mathbf{X}_{:,m})\prod_{n \in \mathcal{N}}p(a_n)\\ \label{eq:joint_distribution}&\times \prod_{p \in \mathcal{P}}\prod_{m \in \mathcal{M}} p(x_{n,p,m}|a_n).
\end{align}
Here, $p(y_{l,m}|\mathbf{X}_{:,m}) \triangleq f_{\mathcal{CN}}(y_{l,m};\mathbf{A}_{l,:}\mathbf{X}_{:,m},\sigma^2)$.
\end{lemma}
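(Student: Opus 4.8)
The plan is to establish the factorization (\ref{eq:joint_distribution}) of the joint density first, and then recover the three posterior expressions (\ref{eq:mmse_a})--(\ref{eq:MMSE_h}) by a routine application of Bayes' rule together with the law of total probability. The work concentrates almost entirely in the factorization step; the marginalizations afterward are bookkeeping.

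For the factorization I would start from the chain rule
\[
p(\mathbf{X},\mathbf{a},\mathbf{Y}) = p(\mathbf{Y}\mid\mathbf{X},\mathbf{a})\, p(\mathbf{X}\mid\mathbf{a})\, p(\mathbf{a}),
\]
and simplify each factor in turn. For $p(\mathbf{a})$ I would invoke (\ref{eq:PMF_a}), which already supplies $p(\mathbf{a})=\prod_{n\in\mathcal{N}}p(a_n)$. For the likelihood, I would note that in the system model (\ref{eq:system_model}) the activities enter $\mathbf{Y}$ only through $\mathbf{X}$, since $\mathbf{X}_{(n-1)P+1:nP,m}=a_n\mathbf{h}_{n,m}$ and the noise $\mathbf{N}$ is independent of $\mathbf{a}$; hence $p(\mathbf{Y}\mid\mathbf{X},\mathbf{a})=p(\mathbf{Y}\mid\mathbf{X})$. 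Writing $y_{l,m}=\mathbf{A}_{l,:}\mathbf{X}_{:,m}+n_{l,m}$ and using that the entries $n_{l,m}$ are i.i.d. $\mathcal{CN}(0,\sigma^2)$, each $y_{l,m}$ depends only on the $m$-th column of $\mathbf{X}$ and the entries of $\mathbf{Y}$ are conditionally independent given $\mathbf{X}$, which yields $p(\mathbf{Y}\mid\mathbf{X})=\prod_{l\in\mathcal{L}}\prod_{m\in\mathcal{M}}f_{\mathcal{CN}}(y_{l,m};\mathbf{A}_{l,:}\mathbf{X}_{:,m},\sigma^2)$.

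The crux is the conditional prior $p(\mathbf{X}\mid\mathbf{a})$, and I expect this to be the only real (if modest) obstacle, as it is where the modeling assumptions must be translated carefully into product form. Because the small-scale fading coefficients $g_{n,p,m}$ are i.i.d. $\mathcal{CN}(0,1)$ across $n,p,m$ and $x_{n,p,m}=a_n\sqrt{\beta_n}\,g_{n,p,m}$, the effective channels are mutually independent once $\mathbf{a}$ is fixed; this is exactly the statement that the $x_{n,p,m}$ are independent conditioned on $a_n$. Consequently $p(\mathbf{X}\mid\mathbf{a})=\prod_{n\in\mathcal{N}}\prod_{p\in\mathcal{P}}\prod_{m\in\mathcal{M}}p(x_{n,p,m}\mid a_n)$, and each factor equals the Bernoulli--Gaussian PDF of (\ref{eq:prior_xa}). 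Collecting the three pieces gives (\ref{eq:joint_distribution}).

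For the posteriors I would apply Bayes' rule and marginalize the factorized joint density. For (\ref{eq:mmse_a}), $p(a_n\mid\mathbf{Y})=p(a_n,\mathbf{Y})/p(\mathbf{Y})$, where $p(a_n,\mathbf{Y})$ is obtained by summing over the remaining activities $\mathbf{a}_{\bar{n}}$ and integrating out all of $\mathbf{X}$, while $p(\mathbf{Y})$ is the same expression summed additionally over $a_n$, i.e. over all $\mathbf{a}$. For (\ref{eq:MMSE_x}), the identical recipe produces $p(x_{n,p,m},\mathbf{Y})$ by summing over all $\mathbf{a}$ and integrating out $\mathbf{X}_{\overline{n,p,m}}$, with normalizer obtained by integrating the numerator over $x_{n,p,m}$. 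For (\ref{eq:MMSE_h}), I would first observe that conditioned on $a_n=1$ one has $x_{n,p,m}=h_{n,p,m}$, so $p(h_{n,p,m}\mid\mathbf{Y},a_n)=p(x_{n,p,m}\mid\mathbf{Y},a_n)$; the latter then follows by the same marginalization but holding $a_n$ fixed, summing only over $\mathbf{a}_{\bar{n}}$, which yields the stated ratio. Each expression is thus a direct consequence of the law of total probability applied to (\ref{eq:joint_distribution}).
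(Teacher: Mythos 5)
Your proposal is correct and follows essentially the same route as the paper's own proof: the three posterior expressions (\ref{eq:mmse_a})--(\ref{eq:MMSE_h}) are obtained exactly as in Appendix~\ref{appendix:marginal}, by Bayes' theorem together with the relationship between joint and marginal distributions (summing over $\mathbf{a}_{\bar{n}}$ or $\mathbf{a}$ and integrating out the appropriate part of $\mathbf{X}$), and your identification $p(h_{n,p,m}|\mathbf{Y},a_n=1)=p(x_{n,p,m}|\mathbf{Y},a_n=1)$ matches the paper's implicit use of $x_{n,p,m}=h_{n,p,m}$ when $a_n=1$. The only difference is that you additionally derive the factorization (\ref{eq:joint_distribution}) from the chain rule and the model's independence assumptions, a step the paper treats as immediate from the system model; this is a harmless (indeed slightly more complete) addition rather than a different approach.
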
 
\begin{figure}[!t]
	\centering
	\includegraphics[width=4.5cm]{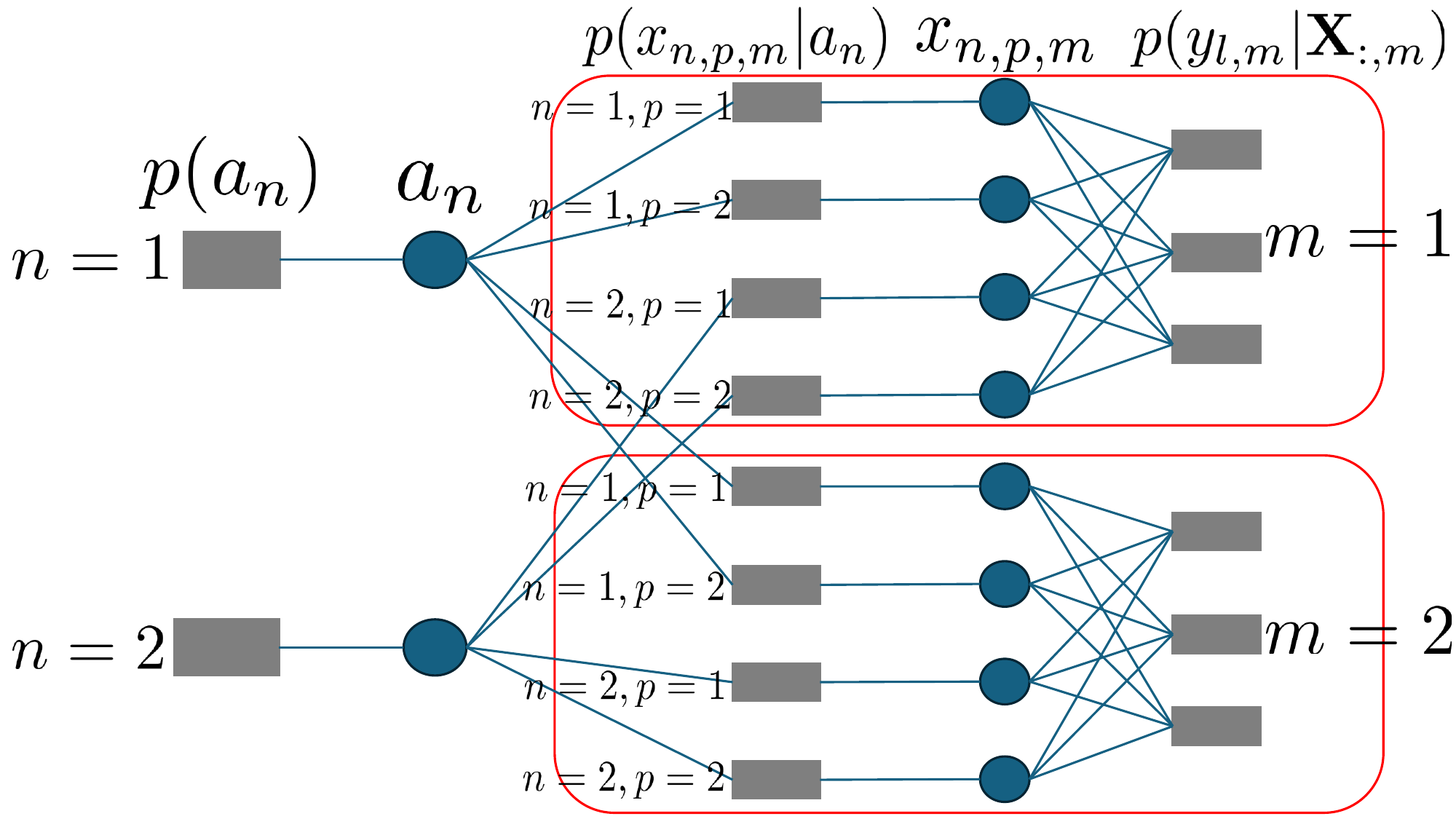}
	\caption{Illustration of the factor graph of $p(\mathbf{X},\mathbf{a},\mathbf{Y})$ in (\ref{eq:joint_distribution}).}
	\label{fig_factorgraph}
\end{figure}
\begin{proof}
See Appendix~\ref{appendix:marginal}.
\end{proof}

Lemma \ref{lemma:MMSE} indicates that multiple integrals have to be computed to successfully obtain  $p(a_n|\mathbf{Y})$ in (\ref{eq:mmse_a}), $p(x_{n,p,m}|\mathbf{Y})$ in (\ref{eq:MMSE_x}), and $p(h_{n,p,m}|\mathbf{Y},a_n)$ in (\ref{eq:MMSE_h}). The computation costs for the multiple integrals are significant, especially for large $N$ and $M$, rendering the exact computation of $\hat{a}^{\star}_n(\mathbf{Y})$, $\hat{x}^{\star}_{n,p,m}(\mathbf{Y})$, and $\hat{h}^{\star}_{n,p,m}(\mathbf{Y})$ nearly impractical{\footnote{The  closed-form expressions of $\hat{a}^{\star}_n(\mathbf{Y})$, $\hat{x}^{\star}_{n,p,m}(\mathbf{Y})$, and $\hat{h}^{\star}_{n,p,m}(\mathbf{Y})$ do not exist, primarily due to the complexities of deriving closed-form expression of $p(a_n|\mathbf{Y})$, $p(x_{n,p,m}|\mathbf{Y})$, and $p(h_{n,p,m}|\mathbf{Y},a_n)$ and the expectation w.r.t $x_{n,p,m}$ and $h_{n,p,m}$ \cite{Chen18TSP,Liu18TSP,Senel18TCOM,Shao19IoTJ,Qiu,QiuSPL,Zhang23TSP,GaoAMP_OFDM_TWC}.}.


To reduce the \textcolor{black}{computational} cost, we compute the multiple integrals in (\ref{eq:mmse_a}), (\ref{eq:MMSE_x}), and (\ref{eq:MMSE_h}) in a parallel and distributed manner, following the MP method. Specifically, we first graphically represent the factorable integrand of the multiple integrals in (\ref{eq:mmse_a}), (\ref{eq:MMSE_x}), and (\ref{eq:MMSE_h}), i.e., $p(\mathbf{X},\mathbf{a},\mathbf{Y})$, with a factor graph according to the exact factorization in (\ref{eq:joint_distribution}), as illustrated in Fig. \ref{fig_factorgraph}. 
Note that this factor graph reflects the facts that, for all $n \in \mathcal{N}$, $x_{n,p,m}, p\in \mathcal{P},m \in \mathcal{M}$ are conditionally independent given $a_n$, (i.e., unconditionally dependent), and $\mathbf{X}_{(n-1)P+1:nP,1:M},n \in \mathcal{N}$ are independent.
Then, we iteratively compute the exchanging messages between each variable node (circle) and each function node (rectangle) in the factor graph, which captures the exact relationship of $x_{n,p,m}, p\in \mathcal{P},m \in \mathcal{M}$ for all $n \in \mathcal{N}$. The definitions of the messages are given in Table \ref{tab:table_message}. We first initialize $\mu^{(0)}_{f_{l,m}\rightarrow x_{n,p,m}}(x_{n,p,m})$, $\mu^{(0)}_{x_{n,p,m} \rightarrow q_{n,p,m}}(x_{n,p,m})$, and $\mu^{(0)}_{q_{n,p,m}\rightarrow a_{n}}(a_{n})$. At each iteration $t$, we calculate:
\begin{align}  \nonumber
	&\mu^{(t)}_{f_{l,m} \rightarrow x_{n,p,m}}(x_{n,p,m}) \propto \int p(y_{l,m}|\mathbf{X}_{:,m}) \\ \label{eq:message_f_x}
	&\times \prod_{ \substack{i \in \mathcal{N}, j \in \mathcal{P},\\(i-1)\times P +j \neq (n-1)\times P +p}}  \mu^{(t)}_{x_{i,j,m} \rightarrow f_{l,m}}(x_{i,j,m}) d (\mathbf{X}_{:,m})_ {\overline{n,p,m}},\\
	\label{eq:message_product_f_x}
	&\mu^{(t)}_{x_{n,p,m} \rightarrow q_{n,p,m}}(x_{n,p,m}) 	\propto	 \prod_{l \in \mathcal{L}} \mu^{(t)}_{f_{l,m} \rightarrow x_{n,p,m}}(x_{n,p,m}),\\ \nonumber
	&\mu^{(t)}_{q_{n,p,m} \rightarrow a_{n}}(a_{n})\\ \label{eq:message_integral_x}
	&\propto \int  p(x_{n,p,m}|a_n) \mu^{(t)}_{x_{n,p,m} \rightarrow q_{n,p,m}}(x_{n,p,m}) d x_{n,p,m},
\end{align}
\begin{align}\nonumber
	&\mu^{(t+1)}_{q_{n,p,m} \rightarrow x_{n,p,m}}(x_{n,p,m})\\ \label{eq:message_q_x}
	&\propto \sum_{a_n = 0}^{1} \mu^{(t)}_{a_{n} \rightarrow q_{n,p,m}}(a_{n}) p(x_{n,p,m}|a_n) , \\ \label{eq:message_a_q}
	&\mu^{(t+1)}_{a_{n} \rightarrow q_{n,p,m}}(a_{n}) \propto p(a_n)\prod_{\substack{k\in \mathcal{M},\\ k \neq m}}\prod_{\substack{j \in \mathcal{P},\\j \neq p}} \mu^{(t)}_{q_{n,j,k} \rightarrow a_{n}}(a_{n}),\\ \nonumber
	&\mu^{(t+1)}_{x_{n,p,m} \rightarrow f_{l,m}}(x_{n,p,m})\propto \mu^{(t)}_{q_{n,p,m} \rightarrow x_{n,p,m}}(x_{n,p,m})\\ \label{eq:posterior_x_except}
	&\times \prod_{b \in \mathcal{L},b \neq l} \mu^{(t)}_{f_{b,m} \rightarrow x_{n,p,m}}(x_{n,p,m}). 
\end{align}
The notation ``$\propto$'' implies that the right-hand side differs from the left-hand side by a normalized constant to ensure that the integral result on the left-hand side equals 1. Moreover, since each message is derived from the multiplication, integration, or summation of probability distributions, it is inherently non-negative. Therefore, these messages can be considered as probability distributions. Besides, note that the messages in (\ref{eq:message_f_x}), (\ref{eq:message_product_f_x}), and (\ref{eq:message_integral_x}) are updated in parallel, and then the messages in (\ref{eq:message_q_x}), (\ref{eq:message_a_q}), and (\ref{eq:posterior_x_except}) are updated in parallel.
{
The approximations of (\ref{eq:mmse_a}), (\ref{eq:MMSE_x}), and (\ref{eq:MMSE_h}) are given by:
\begin{align}\label{eq:posterior_a}
	&p^{(t)}_{n}(a_{n}) \triangleq C^{-1}_1 p(a_n)\prod_{m \in \mathcal{M}} \prod_{p\in \mathcal{P}} \mu^{(t)}_{q_{n,p,m} \rightarrow a_{n}}(a_{n}),\\  \nonumber
	&p^{(t)}_{n,p,m}(x_{n,p,m}) \triangleq C^{-1}_2 \mu^{(t)}_{q_{n,p,m} \rightarrow x_{n,p,m}}(x_{n,p,m})\\ \label{eq:posterior_x} 
	& \times \mu^{(t)}_{x_{n,p,m} \rightarrow q_{n,p,m}}(x_{n,p,m}),\\ \nonumber
	&\tilde{p}^{(t)}_{n,p,m}(x_{n,p,m};a_n) \triangleq  C^{-1}_3 p(x_{n,p,m}|a_n)\\ \label{eq:inte_posterior_xa}
	&\times \mu^{(t)}_{x_{n,p,m} \rightarrow q_{n,p,m}}(x_{n,p,m}),
\end{align}
 at the $t$-th iteration, respectively, with normalizing constants: 
\begin{align*}
	C_1 &\triangleq \sum_{a_n=0}^{1} p(a_n) \prod_{m \in \mathcal{M}} \prod_{p\in \mathcal{P}} \mu^{(t)}_{q_{n,p,m} \rightarrow a_{n}}(a_{n}),\\
C_2 &\triangleq \int \mu^{(t)}_{q_{n,p,m} \rightarrow x_{n,p,m}}(x_{n,p,m}) \\
&\times \mu^{(t)}_{x_{n,p,m} \rightarrow q_{n,p,m}}(x_{n,p,m}) d x_{n,p,m},\\
	C_3 &\triangleq \int p(x_{n,p,m}|a_n)\mu^{(t)}_{x_{n,p,m} \rightarrow q_{n,p,m}}(x_{n,p,m}) d x_{n,p,m}.
\end{align*} 
Therefore, the approximations of $\hat{a}^{\star}_n(\mathbf{Y})$, $\hat{x}^{\star}_{n,p,m}(\mathbf{Y})$, and $\hat{h}^{\star}_{n,p,m}(\mathbf{Y})$ at the $t$-th iteration are given by:
\begin{align} \label{eq:a_detect}
	\hat{a}^{(t)}_n&=\begin{cases}
		0,\ \theta^{(t)}_n < 0,\\
		1,\ \theta^{(t)}_n \geq 0,\\
	\end{cases}\\ \label{eq:inte_posterior_x} 
	\hat{x}^{(t)}_{n,p,m} &\triangleq \int x_{n,p,m}  p^{(t)}_{n,p,m}(x_{n,p,m}) d x_{n,p,m}, 
\end{align}
\begin{align}
 \label{eq:inte_posterior_h} 
	\hat{h}^{(t)}_{n,p,m} \triangleq \int x_{n,p,m}  \tilde{p}^{(t)}_{n,p,m}(x_{n,p,m};a_n=1) d x_{n,p,m},
\end{align} 
respectively, where $n\in\mathcal{N}, p \in \mathcal{P}, m \in\mathcal{M}$, and 
\begin{align} \label{eq:detector_MP}
	\theta^{(t)}_n = \log \frac{p^{(t)}_{n}(a_{n}=1)}{p^{(t)}_{n}(a_{n}=0)}.
\end{align}
Here $\theta^{(t)}_n$ is the approximation of $\theta_n$ in (\ref{eq:theta_n}) at the $t$-th iteration.
Note that, for all $n\in\mathcal{N}, p \in \mathcal{P}, m \in\mathcal{M}$, if the message updates converge, $p^{(t)}_{n}(a_{n})$ in (\ref{eq:posterior_a}), $p^{(t)}_{n,p,m}(x_{n,p,m})$ in (\ref{eq:posterior_x}), and $\tilde{p}^{(t)}_{n,p,m}(x_{n,p,m})$ in (\ref{eq:inte_posterior_xa}) converge to $p(a_n|\mathbf{Y})$ in (\ref{eq:mmse_a}), $p(x_{n,p,m}|\mathbf{Y})$ in (\ref{eq:MMSE_x}), and $p(h_{n,p,m}|\mathbf{Y},a_n)$ in (\ref{eq:MMSE_h}), respectively, and $\hat{a}^{(t)}_n$ in (\ref{eq:a_detect}), $\hat{x}^{(t)}_{n,p,m}$ in (\ref{eq:inte_posterior_x}), and $\hat{h}^{(t)}_{n,p,m}$ in (\ref{eq:inte_posterior_h}) converge to $\hat{a}^{\star}_n(\mathbf{Y})$ in (\ref{eq:detector_true}), $\hat{x}^{\star}_{n,p,m}(\mathbf{Y})$ in (\ref{eq:MMSE_x_equation}), and $\hat{h}^{\star}_{n,p,m}(\mathbf{Y})$ in (\ref{eq:MMSE_h_equation}), respectively.

However, the MP method is still not tractable. Because keeping track of the above messages in (\ref{eq:message_f_x})-(\ref{eq:posterior_x_except}), which are functions of $a_n,n\in\mathcal{N}$ or $x_{n,p,m},n\in\mathcal{N},p\in\mathcal{P},m\in\mathcal{M}$, is impractical. Thus, further approximations of the MP method are inevitable.

}

 \begin{table}[!t]
	\caption{Message definitions. \label{tab:table_message}}
	\centering
	\scriptsize
	\begin{tabular}{|c||c|}
		\hline
		\makecell[c]{$\mu^{(t)}_{f_{l,m} \rightarrow x_{n,p,m}}(x_{n,p,m})$} & \makecell[c]{message from $p(y_{l,m}|\mathbf{X}_{:,m})$ to $x_{n,p,m}$} \\
		\hline
		\makecell[c]{$\mu^{(t)}_{x_{n,p,m} \rightarrow f_{l,m}}(x_{n,p,m})$} & \makecell[c]{message from $x_{n,p,m}$ to $p(y_{l,m}|\mathbf{X}_{:,m})$} \\
		\hline
		\makecell[c]{$\mu^{(t)}_{x_{n,p,m} \rightarrow q_{n,p,m}}(x_{n,p,m})$} &\makecell[c]{message from $x_{n,p,m}$ to $ p(x_{n,p,m}|a_n)$} \\
		\hline
		\makecell[c]{$\mu^{(t)}_{q_{n,p,m} \rightarrow x_{n,p,m}}(x_{n,p,m})$} & \makecell[c]{message from $p(x_{n,p,m}|a_n)$ to $x_{n,p,m}$}\\
		\hline
		\makecell[c]{$\mu^{(t)}_{a_{n} \rightarrow q_{n,p,m}}(a_{n})$} & \makecell[c]{ message from $a_{n}$ to $ p(x_{n,p,m}|a_n)$}\\
		\hline
		\makecell[c]{$\mu^{(t)}_{q_{n,p,m} \rightarrow a_{n}}(a_{n})$} & \makecell[c]{message from $p(x_{n,p,m}|a_n)$ to $a_{n}$}\\
		\hline
	\end{tabular}
\end{table}

\section{Approximations of Messages and Estimates}\label{sec:approx}
In this section, we approximate the message updates in (\ref{eq:message_f_x})-(\ref{eq:posterior_x_except}) and obtain  tractable updates of the estimates $\hat{a}^{(t)}_n$ in (\ref{eq:a_detect}) (i.e., $\theta^{(t)}_n$ in (\ref{eq:detector_MP})), $\hat{x}^{(t)}_{n,p,m}$ in (\ref{eq:inte_posterior_x}), and $\hat{h}^{(t)}_{n,p,m}$ in (\ref{eq:inte_posterior_h}), following the AMP method \cite{DonohoAMP}, which is an approximation of the MP method. 
	For all $l\in\mathcal{L}, n\in\mathcal{N}, p \in \mathcal{P}, m \in\mathcal{M}$, let $\hat{x}^{(t)}_{ x_{n,p,m} \rightarrow f_{l,m}}$ and $\hat{\nu}^{(t)}_{ x_{n,p,m} \rightarrow f_{l,m}}$ denote the mean and variance of $\mu^{(t)}_{x_{n,p,m} \rightarrow f_{l,m}}(x_{n,p,m})$ in (\ref{eq:posterior_x_except}).
	First, we approximate the message $\mu^{(t)}_{f_{l,m} \rightarrow x_{n,p,m}}(x_{n,p,m})$ in (\ref{eq:message_f_x}). 
	\begin{lemma}\label{lemma:version1_1}
		For all $t$, as $N \rightarrow \infty$, we have:
		\begin{align} \nonumber
			&\mu^{(t)}_{f_{l,m} \rightarrow x_{n,p,m}}(x_{n,p,m}) \\ \nonumber
			&- f_{\mathcal{CN}}(x_{n,p,m}; \frac{\hat{z}^{(t)}_{f_{l,m} \rightarrow x_{n,p,m}}}{A_{l,n,p}}, \frac{\hat{\gamma}^{(t)}_{f_{l,m} \rightarrow x_{n,p,m}}}{|A_{l,n,p}|^2})\xrightarrow{\|\cdot\|_K} 0, 
		\end{align}
		where 
		\begin{align} \nonumber
			&\hat{z}^{(t)}_{f_{l,m} \rightarrow x_{n,p,m}}  \\ \label{eq:mean_Z_f_x}
			&\triangleq y_{l,m} - \sum\limits_{\substack{i \in \mathcal{N}, j \in \mathcal{P},\\(i-1)\times P +j \neq (n-1)\times P +p}} A_{l,i,j}\hat{x}^{(t)}_{ x_{i,j,m} \rightarrow f_{l,m}},
				\end{align}
		\begin{align}	\nonumber
			&\hat{\gamma}^{(t)}_{f_{l,m} \rightarrow x_{n,p,m}} \\ \label{eq:mean_V_f_x}
			&\triangleq \sigma^{2} + \sum\limits_{\substack{i \in \mathcal{N}, j \in \mathcal{P},\\(i-1)\times P +j \neq (n-1)\times P +p}} |A_{l,i,j}|^2\hat{\nu}^{(t)}_{ x_{i,j,m} \rightarrow f_{l,m}}.
		\end{align}
	\end{lemma}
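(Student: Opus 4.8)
The plan is to derive the claimed Gaussian form through a central limit theorem (CLT) argument that exploits the pilot scaling $A_{l,i,j} = \mathcal{O}(1/\sqrt{L})$ and $|A_{l,i,j}|^2 = \mathcal{O}(1/L)$ established in Section~II. First I would isolate the target variable by splitting the mean of the likelihood $p(y_{l,m}|\mathbf{X}_{:,m}) = f_{\mathcal{CN}}(y_{l,m}; \mathbf{A}_{l,:}\mathbf{X}_{:,m}, \sigma^2)$ as $\mathbf{A}_{l,:}\mathbf{X}_{:,m} = A_{l,n,p}x_{n,p,m} + S$, where $S \triangleq \sum_{(i,j):(i-1)P+j\neq(n-1)P+p} A_{l,i,j}x_{i,j,m}$ collects every other contribution. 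Under the recursion in (\ref{eq:message_f_x}), the variables $x_{i,j,m}$ are independent across $(i,j)$, each distributed as $\mu^{(t)}_{x_{i,j,m}\rightarrow f_{l,m}}$ with mean $\hat{x}^{(t)}_{x_{i,j,m}\rightarrow f_{l,m}}$ and variance $\hat{\nu}^{(t)}_{x_{i,j,m}\rightarrow f_{l,m}}$, so $S$ is a weighted sum of independent summands.

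The core step is to show that, as $N\rightarrow\infty$ (so the number $NP-1$ of summands diverges), $S$ converges in Kolmogorov distance to a circularly symmetric complex Gaussian with mean $\mathbb{E}[S] = \sum_{(i,j)\neq(n,p)} A_{l,i,j}\hat{x}^{(t)}_{x_{i,j,m}\rightarrow f_{l,m}}$ and variance $\mathrm{Var}[S] = \sum_{(i,j)\neq(n,p)} |A_{l,i,j}|^2 \hat{\nu}^{(t)}_{x_{i,j,m}\rightarrow f_{l,m}}$. I would invoke a Lindeberg--Lyapunov (Berry--Esseen) CLT here: each summand $A_{l,i,j}x_{i,j,m}$ is $\mathcal{O}(1/\sqrt{L})$ by the pilot scaling, the per-term second moments are uniformly bounded (so $\mathrm{Var}[S] = \Theta(1)$), and the Lyapunov ratio of the aggregate third moment to the $3/2$-power of the second moment vanishes at rate $\mathcal{O}(1/\sqrt{L})$, which yields the desired convergence in $\|\cdot\|_K$.

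With $S$ replaced by $\mathcal{CN}(\mathbb{E}[S],\mathrm{Var}[S])$, the integral in (\ref{eq:message_f_x}) reduces to a convolution of two complex Gaussians in the variable $y_{l,m} - A_{l,n,p}x_{n,p,m}$, evaluating to $f_{\mathcal{CN}}(y_{l,m}; A_{l,n,p}x_{n,p,m} + \mathbb{E}[S], \sigma^2 + \mathrm{Var}[S])$. Identifying $\hat{z}^{(t)}_{f_{l,m}\rightarrow x_{n,p,m}} = y_{l,m} - \mathbb{E}[S]$ and $\hat{\gamma}^{(t)}_{f_{l,m}\rightarrow x_{n,p,m}} = \sigma^2 + \mathrm{Var}[S]$ recovers exactly (\ref{eq:mean_Z_f_x}) and (\ref{eq:mean_V_f_x}). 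Then, using the symmetry $f_{\mathcal{CN}}(a;b,c) = f_{\mathcal{CN}}(b;a,c)$ together with $|A_{l,n,p}x_{n,p,m} - \hat{z}|^2 = |A_{l,n,p}|^2\,|x_{n,p,m} - \hat{z}/A_{l,n,p}|^2$, the expression is proportional in $x_{n,p,m}$ to $f_{\mathcal{CN}}(x_{n,p,m}; \hat{z}/A_{l,n,p}, \hat{\gamma}/|A_{l,n,p}|^2)$; after the normalization implicit in ``$\propto$'' this is precisely the stated limiting density.

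The main obstacle I anticipate is making the CLT rigorous in the Kolmogorov metric and then propagating that bound through the convolution and change of variables. Concretely I would need: (i) uniform moment bounds on the incoming messages $\mu^{(t)}_{x_{i,j,m}\rightarrow f_{l,m}}$ to certify the Lyapunov condition at each fixed $t$; (ii) a smoothing argument showing that integrating the bounded, Lipschitz kernel $f_{\mathcal{CN}}(y_{l,m}; A_{l,n,p}x_{n,p,m}+\cdot,\sigma^2)$ against the law of $S$ preserves $\|\cdot\|_K$-closeness of the two laws, so the convergence transfers to the message viewed as a density in $x_{n,p,m}$; and (iii) control of the pseudo-covariance of $S$, which vanishes because the zero-mean Bernoulli--Gaussian prior keeps the messages circularly symmetric, ensuring the limit is a proper $\mathcal{CN}$ rather than a general complex Gaussian.
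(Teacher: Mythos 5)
Your proposal is correct and takes essentially the same route as the paper: the paper's proof simply defers to \cite[Lemma~3.1]{DonohoAMP}, whose argument is exactly your decomposition $\mathbf{A}_{l,:}\mathbf{X}_{:,m}=A_{l,n,p}x_{n,p,m}+S$ with $S$ an independent weighted sum, a Berry--Esseen/CLT approximation of $S$ in Kolmogorov distance exploiting $A_{l,i,j}=\mathcal{O}(1/\sqrt{L})$, and the Gaussian convolution plus change of variables identifying the mean and variance in (\ref{eq:mean_Z_f_x}) and (\ref{eq:mean_V_f_x}). The technical points you flag (uniform moment bounds, transfer of $\|\cdot\|_K$-closeness through the smooth Gaussian kernel, and circular symmetry of the limit) are precisely what the cited lemma handles.
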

	\begin{proof}
		The proof follows \textcolor{black}{from} that of \cite[Lemma~3.1]{DonohoAMP}.
	\end{proof}
	
	Lemma \ref{lemma:version1_1} states that the message $\mu^{(t)}_{f_{l,m} \rightarrow x_{n,p,m}}(x_{n,p,m})$ in (\ref{eq:message_f_x}) can be approximated as a complex Gaussian PDF at large $N$.
	Define
		$\eta_n\left(x,y,z\right) \triangleq \frac{\beta_n x}{\left(1+ \frac{(1-z)f_{\mathcal{CN}}(0;x,y)}{zf_{\mathcal{CN}}(0;x,y+ \beta_n)}\right)\left(y + \beta_n\right)}$
	and $\eta^{'}_n\left(x,y,z\right) \triangleq \frac{\partial \eta_n\left(x,y,z\right)}{\partial x}$. Next, we approximate the remaining messages in (\ref{eq:message_product_f_x})-(\ref{eq:posterior_x_except}).
	
	\begin{lemma} \label{lemma:consistent_message}
		For all $t$, suppose
		\begin{align} \nonumber
			&\mu^{(t)}_{f_{l,m} \rightarrow x_{n,p,m}}(x_{n,p,m}) \\ \label{lemma2:eq}
			&= f_{\mathcal{CN}}(x_{n,p,m};\frac{\hat{z}^{(t)}_{f_{l,m} \rightarrow x_{n,p,m}}}{A_{l,n,p}},\frac{\hat{\gamma}^{(t)}_{f_{l,m} \rightarrow x_{n,p,m}}}{|A_{l,n,p}|^2}),
		\end{align} 
		where $\hat{z}^{(t)}_{f_{l,m} \rightarrow x_{n,p,m}}$ and $\hat{\gamma}^{(t)}_{f_{l,m} \rightarrow x_{n,p,m}}$ are given by (\ref{eq:mean_Z_f_x}) and (\ref{eq:mean_V_f_x}), respectively. (i) We have:
		\begin{align} \label{eq:productL}
			&\mu^{(t)}_{x_{n,p,m} \rightarrow q_{n,p,m}}(x_{n,p,m})= f_{\mathcal{CN}}(x_{n,p,m};\hat{r}^{(t)}_{x_{n,p,m}},\hat{\sigma}^{(t)}_{x_{n,p,m}})	,	  
		\end{align}
		\begin{align} \label{eq:mes_q_a}
			&\mu^{(t)}_{q_{n,p,m} \rightarrow a_{n}}(a_{n}) = f_{\mathcal{B}}(a_{n};{\hat{\rho}}^{(t)}_{n,p,m}),\\ \label{eq:mes_a_q}
			&\mu^{(t+1)}_{a_{n} \rightarrow q_{n,p,m}}(a_{n}) = f_{\mathcal{B}}(a_{n};\lambda^{(t+1)}_{n,p,m}), \\ \label{eq:approximate_q_x}
			&\mu^{(t+1)}_{q_{n,p,m} \rightarrow x_{n,p,m}}(x_{n,p,m}) =  f_{\mathcal{B-CN}}(x_{n,p,m};0,\beta_n,\lambda^{(t+1)}_{n,p,m}),\\ \nonumber
			&\mu^{(t+1)}_{x_{n,p,m} \rightarrow f_{l,m}}(x_{n,p,m}) = C^{-1}_4 f_{\mathcal{B-CN}}(x_{n,p,m};0,\beta_n,\lambda^{(t+1)}_{n,p,m}) \\ \label{eq:approximate_x_f}
			&\times f_{\mathcal{CN}}(x_{n,p,m}; \hat{r}^{(t)}_{x_{n,p,m} \rightarrow f_{l,m}}, \hat{\sigma}^{(t)}_{x_{n,p,m}\rightarrow f_{l,m}}), \\ \label{eq:mean_x_f_x}
			&\hat{x}^{(t+1)}_{ x_{n,p,m} \rightarrow f_{l,m}} =  \eta_n(\hat{r}^{(t)}_{x_{n,p,m} \rightarrow f_{l,m}},\hat{\sigma}^{(t)}_{x_{n,p,m}\rightarrow f_{l,m}},\lambda^{(t+1)}_{n,p,m}),\\ \nonumber
			&\hat{\nu}^{(t+1)}_{ x_{n,p,m} \rightarrow f_{l,m}} = \hat{\sigma}^{(t)}_{x_{n,p,m}\rightarrow f_{l,m}}\\ \label{eq:mean_nu_f_x}
			&\times \eta^{'}_n(\hat{r}^{(t)}_{x_{n,p,m} \rightarrow f_{l,m}},\hat{\sigma}^{(t)}_{x_{n,p,m}\rightarrow f_{l,m}},\lambda^{(t+1)}_{n,p,m}),
		\end{align}
		where 
		\begin{align}
			\label{eq:lemma_rho_npm}
			&\hat{\rho}^{(t)}_{n,p,m} \triangleq\frac{1}{1 + \frac{f_{\mathcal{CN}}(0;\hat{r}^{(t)}_{x_{n,p,m}},\hat{\sigma}^{(t)}_{x_{n,p,m}} )}{f_{\mathcal{CN}}(0;\hat{r}^{(t)}_{x_{n,p,m}},\hat{\sigma}^{(t)}_{x_{n,p,m}} + \beta_n)} },\\ \label{eq:lemma_lambda_npm}
			&\lambda^{(t+1)}_{n,p,m} \triangleq \frac{1}{1 + \frac{(1-\rho_n)}{\rho_n}\prod\limits_{k\in \mathcal{M},k \neq m}\prod\limits_{j \in \mathcal{P}, j \neq p} \frac{(1-\hat{\rho}^{(t)}_{n,j,k})}{\hat{\rho}^{(t)}_{n,j,k}}}, 
\\ \label{eq:lemma_sigma_x}
			&\hat{\sigma}^{(t)}_{x_{n,p,m}} \triangleq \Big(\sum_{l \in \mathcal{L}}\frac{|A_{l,n,p}|^2}{\hat{\gamma}^{(t)}_{f_{l,m} \rightarrow x_{n,p,m}}}\Big)^{-1},\\ \label{eq:lemma_r_x}
			&\hat{r}^{(t)}_{x_{n,p,m} } \triangleq \hat{\sigma}^{(t)}_{x_{n,p,m}} \Big(\sum_{l \in \mathcal{L}}\frac{A^{*}_{l,n,p}\hat{z}^{(t)}_{f_{l,m} \rightarrow x_{n,p,m}}}{\hat{\gamma}^{(t)}_{f_{l,m} \rightarrow x_{n,p,m}}}\Big),\\\label{eq:lemma_sigma_f_x}
			&\hat{\sigma}^{(t)}_{x_{n,p,m}\rightarrow f_{l,m}} \triangleq \Big( \sum_{b \in \mathcal{L}, b \neq l}\frac{|A_{b,n,p}|^2}{\hat{\gamma}^{(t)}_{f_{b,m} \rightarrow x_{n,p,m}}}\Big)^{-1},\\   \label{eq:lemma_r_f_x}
			&\hat{r}^{(t)}_{x_{n,p,m} \rightarrow f_{l,m}} \triangleq \hat{\sigma}^{(t)}_{x_{n,p,m}\rightarrow f_{l,m}} \Big(\sum_{\substack{b \in \mathcal{L},\\ b \neq l}}\frac{A^{*}_{b,n,p}\hat{z}^{(t)}_{f_{b,m} \rightarrow x_{n,p,m}}}{\hat{\gamma}^{(t)}_{f_{b,m} \rightarrow x_{n,p,m}}}\Big) ,\\ \nonumber
			&C_4 \triangleq \int f_{\mathcal{B-CN}}(x_{n,p,m};0,\beta_n,\lambda^{(t+1)}_{n,p,m})\\
			&\times f_{\mathcal{CN}}(x_{n,p,m}; \hat{r}^{(t)}_{x_{n,p,m} \rightarrow f_{l,m}}, \hat{\sigma}^{(t)}_{x_{n,p,m}\rightarrow f_{l,m}}) d x_{n,p,m}.
		\end{align}
		(ii) We have:
		\begin{align} \label{eq:lemma_x}
			&\hat{x}^{(t+1)}_{ n,p,m} = \eta_n(\hat{r}^{(t)}_{x_{n,p,m}},\hat{\sigma}^{(t)}_{x_{n,p,m}},\lambda^{(t+1)}_{n,p,m}),\\ \label{eq:lemma_h}
			&\hat{h}^{(t+1)}_{ n,p,m} = \frac{\beta_n \hat{r}^{(t)}_{x_{n,p,m}}}{\hat{\sigma}^{(t)}_{x_{n,p,m}} + \beta_n},\\ \label{eq:lemma_llr}
			&\theta^{(t+1)}_n = \log \frac{\rho_n \prod\limits_{m \in \mathcal{M}} \prod\limits_{p \in \mathcal{P}} f_{\mathcal{CN}}(0;\hat{r}^{(t)}_{x_{n,p,m}},\hat{\sigma}^{(t)}_{x_{n,p,m}} + \beta_n)}{(1-\rho_n ) \prod\limits_{m \in \mathcal{M}} \prod\limits_{p \in \mathcal{P}} f_{\mathcal{CN}}(0;\hat{r}^{(t)}_{x_{n,p,m}},\hat{\sigma}^{(t)}_{x_{n,p,m}})}.
		\end{align}
	\end{lemma}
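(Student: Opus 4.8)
The plan is to substitute the Gaussian form \eqref{lemma2:eq} into the message-update rules \eqref{eq:message_product_f_x}--\eqref{eq:posterior_x_except} and simplify each message in the order produced by one sweep of the MP schedule, relying on two elementary facts about complex Gaussians: (a) a product of complex Gaussian densities in $x$ is again proportional to a complex Gaussian whose inverse variance is the sum of the inverse variances and whose mean is the precision-weighted average of the means; and (b) the marginalization identity $\int f_{\mathcal{CN}}(x;0,\beta)f_{\mathcal{CN}}(x;r,\sigma)\,dx = f_{\mathcal{CN}}(0;r,\sigma+\beta)$. Everything in the lemma is an ``if--then'' consequence of \eqref{lemma2:eq}, so the whole proof is a chain of applications of (a) and (b) together with Bernoulli algebra on the discrete messages.

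For part (i), I would first apply fact (a) to the $L$-fold product in \eqref{eq:message_product_f_x}: collecting the factors \eqref{lemma2:eq} yields the Gaussian \eqref{eq:productL}, where summing inverse variances produces \eqref{eq:lemma_sigma_x} and the precision-weighted mean produces \eqref{eq:lemma_r_x}, with the $1/|A_{l,n,p}|^2$ and $A_{l,n,p}$ in \eqref{lemma2:eq} generating the $|A_{l,n,p}|^2$ and $A^{*}_{l,n,p}$ weights. Inserting the prior \eqref{eq:prior_xa} into \eqref{eq:message_integral_x} and invoking fact (b), the $a_n{=}1$ branch integrates \eqref{eq:productL} against $f_{\mathcal{CN}}(x;0,\beta_n)$ to give $f_{\mathcal{CN}}(0;\hat r,\hat\sigma+\beta_n)$, while the $a_n{=}0$ branch, using $p(x\mid a_n{=}0)=\delta(x)$, gives $f_{\mathcal{CN}}(0;\hat r,\hat\sigma)$; normalizing the resulting two-point mass yields \eqref{eq:mes_q_a} with \eqref{eq:lemma_rho_npm}. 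The remaining discrete messages are pure Bernoulli algebra: multiplying $p(a_n)$ by the Bernoulli factors in \eqref{eq:message_a_q} and normalizing gives \eqref{eq:mes_a_q} with \eqref{eq:lemma_lambda_npm}, and summing over $a_n$ in \eqref{eq:message_q_x} against \eqref{eq:prior_xa} reassembles the Bernoulli--Gaussian \eqref{eq:approximate_q_x}.

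For the message to the function node, I would apply fact (a) to the $b\neq l$ product in \eqref{eq:posterior_x_except} (giving the ``$\rightarrow f_{l,m}$'' statistics \eqref{eq:lemma_sigma_f_x}--\eqref{eq:lemma_r_f_x}) and multiply by the Bernoulli--Gaussian \eqref{eq:approximate_q_x}, producing \eqref{eq:approximate_x_f}. The mean \eqref{eq:mean_x_f_x} then follows by computing the first moment of \eqref{eq:approximate_x_f}: splitting the Bernoulli--Gaussian into its Gaussian and $\delta$ parts, each multiplied by $f_{\mathcal{CN}}(\cdot;\hat r_{x\to f},\hat\sigma_{x\to f})$, and using facts (a)--(b) on the numerator and denominator, collapses exactly into the function $\eta_n$ defined just before the lemma. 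The variance update \eqref{eq:mean_nu_f_x} is the step I expect to be the main obstacle: it rests on the AMP identity that the posterior variance equals the effective noise variance times the derivative of the conditional-mean denoiser, i.e.\ $\hat\nu=\hat\sigma\,\partial\eta_n/\partial\hat r$. I would establish this by differentiating the closed form of $\eta_n$ under the integral sign (equivalently, a Stein-type identity for the complex Gaussian), verifying that $\partial\eta_n/\partial\hat r$ reproduces the second central moment of \eqref{eq:approximate_x_f} divided by $\hat\sigma$.

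Finally, for part (ii) I would reuse the part-(i) computations with the full statistics $\hat r_{x_{n,p,m}},\hat\sigma_{x_{n,p,m}}$ replacing their $\rightarrow f_{l,m}$ counterparts. The effective-channel estimate \eqref{eq:lemma_x} is the first moment of \eqref{eq:posterior_x}, identical to the $\eta_n$ computation above, hence $\eta_n(\hat r_{x_{n,p,m}},\hat\sigma_{x_{n,p,m}},\lambda^{(t+1)}_{n,p,m})$. The actual-channel estimate \eqref{eq:lemma_h} is the first moment of \eqref{eq:inte_posterior_xa} at $a_n{=}1$, which by \eqref{eq:prior_xa} is just a product of two Gaussians, so fact (a) returns the standard conditional mean $\beta_n\hat r_{x_{n,p,m}}/(\hat\sigma_{x_{n,p,m}}+\beta_n)$. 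The detector \eqref{eq:lemma_llr} follows from \eqref{eq:detector_MP} and \eqref{eq:posterior_a}: substituting the unnormalized $a_n{=}1$ and $a_n{=}0$ values $f_{\mathcal{CN}}(0;\hat r,\hat\sigma+\beta_n)$ and $f_{\mathcal{CN}}(0;\hat r,\hat\sigma)$ into $p^{(t)}_n(a_n)$, the per-edge normalizers cancel in the log-ratio and leave \eqref{eq:lemma_llr}. The only non-mechanical care is tracking the $\mathcal{O}(1/\sqrt{L})$ scaling of $A_{l,n,p}$ and the consistent exclusion of the $(n,p,m)$ index, which I would keep explicit so that the excluded-sum forms in \eqref{eq:lemma_sigma_f_x}--\eqref{eq:lemma_r_f_x} come out correctly.
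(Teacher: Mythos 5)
Your proposal is correct and follows essentially the same route as the paper's proof in Appendix B: substitute the assumed Gaussian form into the message updates in schedule order, apply the Gaussian product/marginalization identities and Bernoulli algebra to obtain each parametric form in part (i), then read off the relevant posterior means for part (ii). The only cosmetic difference is the variance step (\ref{eq:mean_nu_f_x}), which the paper computes directly as the second moment minus the squared mean via the Gaussian product theorem, whereas you verify the equivalent identity $\hat{\nu}=\hat{\sigma}\,\partial\eta_n/\partial\hat{r}$ by differentiation under the integral sign---the same fact checked from the opposite direction.
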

	\begin{proof}
	See Appendix~\ref{appendix:a}.
	\end{proof}
	
	Lemma~\ref{lemma:consistent_message} (i) indicates that under the assumption that $\mu^{(t)}_{f_{l,m} \rightarrow x_{n,p,m}}(x_{n,p,m})$ in (\ref{eq:message_f_x}) takes the form of a complex Gaussian PDF, the other messages take the forms of complex Gaussian, Bernoulli, and Bernoulli-Gaussian distributions, determined by parameters. Therefore, the message updates in (\ref{eq:message_f_x})-(\ref{eq:posterior_x_except}) can be \textcolor{black}{made} through the parameter updates in (\ref{eq:mean_Z_f_x})-(\ref{eq:mean_V_f_x}) and (\ref{eq:mean_x_f_x})-(\ref{eq:lemma_r_f_x}) with reduced computational complexity. Consequently, Lemma~\ref{lemma:consistent_message} (ii) indicates that, under the same assumption, the estimates $\hat{x}^{(t+1)}_{ n,p,m}$, $\hat{h}^{(t+1)}_{ n,p,m}$, and $\theta^{(t+1)}_n$ can be obtained based on the parameter updates, as shown in (\ref{eq:lemma_x}), (\ref{eq:lemma_h}), and (\ref{eq:lemma_llr}), respectively.
	Specifically, we first update the parameters $\hat{z}^{(t)}_{f_{l,m} \rightarrow x_{n,p,m}}$ and $\hat{\gamma}^{(t)}_{f_{l,m} \rightarrow x_{n,p,m}}$ in (\ref{eq:mean_Z_f_x}) and (\ref{eq:mean_V_f_x}), respectively. Then, we calculate the other parameters in (\ref{eq:mean_x_f_x})-(\ref{eq:lemma_r_f_x}) based on $\hat{z}^{(t)}_{f_{l,m} \rightarrow x_{n,p,m}}$ and $\hat{\gamma}^{(t)}_{f_{l,m} \rightarrow x_{n,p,m}}$. At last, we obtain $\hat{x}^{(t+1)}_{ n,p,m}$, $\hat{h}^{(t+1)}_{ n,p,m}$, and $\theta^{(t+1)}_n$ based on these parameters. 
	
	Since there are $6LNPM + 3NPM$ parameters, the parameter updates and calculations are still computationally expensive for large $L$ and $N$. This motivates us to further simplify the parameter operations and estimate updates. Specifically, we first introduce a new parameter:
	\begin{align} \label{eq:new_v_sec4}
		\tau^{(t)}_{l,m}  \triangleq \sigma^{2} + \sum_{n \in \mathcal{N}} \sum_{p \in \mathcal{P}} |A_{l,n,p}|^2 \hat{\nu}^{(t)}_{ x_{n,p,m} \rightarrow f_{l,m}}.
	\end{align}
	 Then, by (\ref{eq:new_v_sec4}) and Lemma~\ref{lemma:consistent_message}, we have the following theorem.
	 \begin{lemma} \label{theo:tau}
	 	For all $t$, suppose that (\ref{lemma2:eq}) holds,
	 		\begin{align} \nonumber
	 			&\sum_{n \in \mathcal{N}} \sum_{p \in \mathcal{P}} |A_{l,n,p}|^2 \hat{\nu}^{(t)}_{ x_{n,p,m} \rightarrow f_{l,m}} 
	 			\\ \label{eq:appro_tau}
	 			&= \frac{1}{L}\sum_{n \in \mathcal{N}} \sum_{p \in \mathcal{P}} \hat{\nu}^{(t)}_{ x_{n,p,m} \rightarrow f_{l,m}} + \mathcal{O}\left(\frac{1}{L}\right), \text{as $L \rightarrow \infty$}
	 		\end{align}
	 		holds, where $\mathcal{O}(\cdot)$ is uniformly in $l$, and
	 		\begin{align} \nonumber
	 			&\hat{\nu}^{(t+1)}_{ x_{n,p,m} \rightarrow f_{l,m}} =  \hat{\sigma}^{(t)}_{x_{n,p,m}}\eta^{'}_n(\hat{r}^{(t)}_{x_{n,p,m} },\hat{\sigma}^{(t)}_{x_{n,p,m}},\lambda^{(t+1)}_{n,p,m})\\ \label{eq:taylor_v}
	 			& + \mathcal{O}\left(\frac{1}{N}\right), \text{as $N \rightarrow \infty$}
	 		\end{align}
	 		holds, where $\mathcal{O}(\cdot)$ is uniformly in $(l,n)$. Then, we have:
	 		\begin{align} \label{eq:theorem_v}
	 			&\tau^{(t)}_{l,m} = \hat{\tau}^{(t)}_{m} + o(1), \text{as $L,N \rightarrow \infty$}, \\ \label{eq:theorem_gamma}
	 			&\hat{\gamma}^{(t)}_{f_{l,m} \rightarrow x_{n,p,m}} = \tau^{(t)}_{l,m} + o(1), \text{as $L \rightarrow \infty$}, \\
	 			 \label{eq:theorem_sigma}
	 			&\hat{\sigma}^{(t)}_{x_{n,p,m}} = \hat{\tau}^{(t)}_{m} + o(1) , \text{as $L,N \rightarrow \infty$}, \\ \label{eq:lemma4_sigma_x_f}
	 			&(\hat{\sigma}^{(t)}_{x_{n,p,m}\rightarrow f_{l,m}})^{-1} = (\hat{\sigma}^{(t)}_{x_{n,p,m}})^{-1} + o(1) , \text{as $L \rightarrow \infty$},
	 		\end{align}
	 		where 
	 		\begin{align} \label{eq:tau_m}
	 			\hat{\tau}^{(t)}_{m} \triangleq \sigma^{2} + \frac{\sum\limits_{n \in \mathcal{N}} \sum\limits_{p \in \mathcal{P}}\hat{\sigma}^{(t-1)}_{x_{n,p,m}}\eta_n^{'}(\hat{r}^{(t-1)}_{x_{n,p,m}},\hat{\sigma}^{(t-1)}_{x_{n,p,m}},\lambda^{(t)}_{n,p,m})}{L}  .
	 		\end{align}
	 \end{lemma}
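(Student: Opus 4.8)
The plan is to treat the four claims sequentially, pushing the two hypotheses (\ref{eq:appro_tau}) and (\ref{eq:taylor_v}) through the defining relations (\ref{eq:new_v_sec4}), (\ref{eq:mean_V_f_x}), (\ref{eq:lemma_sigma_x}), and (\ref{eq:lemma_sigma_f_x}), while carefully tracking the orders of the residual terms and, crucially, their uniformity in the indices. Before starting, I would record the elementary bounds that keep every denominator safe: by (\ref{eq:mean_V_f_x}) each $\hat{\gamma}^{(t)}_{f_{l,m} \rightarrow x_{n,p,m}}\ge\sigma^2>0$; by (\ref{eq:lemma_sigma_x}) together with $\sum_{l}|A_{l,n,p}|^2=1$ each $\hat{\sigma}^{(t)}_{x_{n,p,m}}\ge\sigma^2$; and the denoiser variance $\hat{\nu}^{(t)}_{x_{n,p,m}\rightarrow f_{l,m}}$ and the derivative $\eta'_n$ are $\mathcal{O}(1)$, since the MMSE variance is bounded by the prior variance $\beta_n\le\max_n\beta_n$. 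Under the linear scaling $N=\mathcal{O}(L)$, these bounds make $\hat{\gamma}^{(t)}_{f_{l,m}\rightarrow x_{n,p,m}}$, $\hat{\sigma}^{(t)}_{x_{n,p,m}}$, and $\hat{\tau}^{(t)}_m$ all bounded away from $0$ and from above, which is what converts additive $\mathcal{O}(1/L)$ perturbations of denominators into $\mathcal{O}(1/L)$ perturbations of the quantities themselves.

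For (\ref{eq:theorem_v}), I would start from the definition (\ref{eq:new_v_sec4}), replace $\sum_{n,p}|A_{l,n,p}|^2\hat{\nu}^{(t)}_{x_{n,p,m}\rightarrow f_{l,m}}$ using hypothesis (\ref{eq:appro_tau}), and then substitute the expansion (\ref{eq:taylor_v}), applied at index $t-1$, for each $\hat{\nu}^{(t)}_{x_{n,p,m}\rightarrow f_{l,m}}$. The leading term then reproduces exactly the definition (\ref{eq:tau_m}) of $\hat{\tau}^{(t)}_m$, while the residuals are an $\mathcal{O}(1/L)$ coming from (\ref{eq:appro_tau}) together with $\frac{1}{L}\sum_{n,p}\mathcal{O}(1/N)=\mathcal{O}(P/L)$ coming from the $NP$ Taylor remainders, both $o(1)$ as $L,N\rightarrow\infty$ since $P$ is fixed. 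This is precisely where uniformity of the $\mathcal{O}(1/N)$ term in $(l,n)$ is essential: it is what permits the summation over the $NP$ pairs to stay controlled.

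Claims (\ref{eq:theorem_gamma}) and (\ref{eq:lemma4_sigma_x_f}) are leave-one-out identities, which I would settle directly. Subtracting (\ref{eq:mean_V_f_x}) from (\ref{eq:new_v_sec4}) gives $\tau^{(t)}_{l,m}-\hat{\gamma}^{(t)}_{f_{l,m}\rightarrow x_{n,p,m}}=|A_{l,n,p}|^2\hat{\nu}^{(t)}_{x_{n,p,m}\rightarrow f_{l,m}}$, which is $\mathcal{O}(1/L)$ by $|A_{l,n,p}|^2=\mathcal{O}(1/L)$ and the boundedness of $\hat{\nu}$, yielding (\ref{eq:theorem_gamma}); the same single-term isolation applied to (\ref{eq:lemma_sigma_x}) minus (\ref{eq:lemma_sigma_f_x}) leaves $|A_{l,n,p}|^2/\hat{\gamma}^{(t)}_{f_{l,m}\rightarrow x_{n,p,m}}=\mathcal{O}(1/L)$, yielding (\ref{eq:lemma4_sigma_x_f}). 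Both hold as $L\rightarrow\infty$ alone, matching the statement.

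Finally, for (\ref{eq:theorem_sigma}) I would combine the previous parts. Writing $(\hat{\sigma}^{(t)}_{x_{n,p,m}})^{-1}=\sum_l |A_{l,n,p}|^2/\hat{\gamma}^{(t)}_{f_{l,m}\rightarrow x_{n,p,m}}$ from (\ref{eq:lemma_sigma_x}) and inserting $\hat{\gamma}^{(t)}_{f_{l,m}\rightarrow x_{n,p,m}}=\hat{\tau}^{(t)}_m+\varepsilon_l$ with $\varepsilon_l=o(1)$ uniformly in $l$ (from (\ref{eq:theorem_gamma}) and (\ref{eq:theorem_v})), I expand $1/(\hat{\tau}^{(t)}_m+\varepsilon_l)=(\hat{\tau}^{(t)}_m)^{-1}(1+o(1))$ using $\hat{\tau}^{(t)}_m\ge\sigma^2$, and collapse the weighted sum via $\sum_l|A_{l,n,p}|^2=1$ to get $(\hat{\sigma}^{(t)}_{x_{n,p,m}})^{-1}=(\hat{\tau}^{(t)}_m)^{-1}+o(1)$; inverting, with both quantities bounded away from $0$ and above, gives (\ref{eq:theorem_sigma}). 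I expect the main obstacle to be exactly this step: justifying that a uniform-in-$l$ perturbation survives the $|A_{l,n,p}|^2$-weighted summation over all $L$ rows, which is what forces the normalization $\sum_l|A_{l,n,p}|^2=1$ and the uniformity guarantees established in the earlier parts to be invoked together.
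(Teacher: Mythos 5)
Your proposal is correct and follows essentially the same route as the paper's proof: you establish (\ref{eq:theorem_v}) by pushing the two hypotheses through the definition (\ref{eq:new_v_sec4}) (correctly applying (\ref{eq:taylor_v}) at index $t-1$ so the leading term reproduces (\ref{eq:tau_m})), obtain (\ref{eq:theorem_gamma}) and (\ref{eq:lemma4_sigma_x_f}) by isolating the single leave-one-out term of size $|A_{l,n,p}|^2=\mathcal{O}(1/L)$, and derive (\ref{eq:theorem_sigma}) by substituting the $\hat{\gamma}$-approximation into (\ref{eq:lemma_sigma_x}) and collapsing the sum via $\sum_{l}|A_{l,n,p}|^2=1$, exactly as the paper does. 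The only cosmetic difference is that you spell out the boundedness of $\hat{\nu}^{(t)}_{x_{n,p,m}\rightarrow f_{l,m}}$ and of the denominators (via $\sigma^2>0$) explicitly, where the paper delegates these $\mathcal{O}(1)$ bounds to \cite{DonohoAMP}.
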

	 \begin{proof}
	 	See Appendix~\ref{appendix:lemma4}.
	 \end{proof}
	
	Lemma~\ref{theo:tau} indicates that the parameters $\hat{\sigma}^{(t)}_{x_{n,p,m}}$ in (\ref{eq:lemma_sigma_x}) and $\tau^{(t)}_{l,m}$ in (\ref{eq:new_v_sec4}) can be approximated by $\hat{\tau}^{(t)}_{m}$ for large $L$ and $N$, and the parameters $\hat{\gamma}^{(t)}_{f_{l,m} \rightarrow x_{n,p,m}}$ in (\ref{eq:mean_V_f_x}) and $\hat{\sigma}^{(t)}_{x_{n,p,m}\rightarrow f_{l,m}}$ in (\ref{eq:lemma_sigma_f_x}) can be approximated by $\tau^{(t)}_{l,m}$ and $\hat{\sigma}^{(t)}_{x_{n,p,m}}$, respectively, for large $L$. Therefore, for large $L$ and $N$, the computation of parameter $\hat{\sigma}^{(t)}_{x_{n,p,m}}$ no longer depends on the parameters in (\ref{eq:mean_V_f_x}) and (\ref{eq:lemma_sigma_f_x}).
	
	The parameter operations in (\ref{eq:mean_Z_f_x}), (\ref{eq:mean_x_f_x}), (\ref{eq:lemma_r_x}), and (\ref{eq:lemma_r_f_x}) are still computationally intensive for large $L$ and $N$. To address this issue, we will further simplify these parameter operations and the respective estimate updates in Sections \ref{sec:1} and \ref{sec:2}.
}

\section{AMP for Device Activity Detection and Effective Channel Estimation} \label{sec:1}
In this section, we present the \emph{AMP-A-EC} algorithm for activity detection and effective channel estimation, which effectively updates the estimates $\theta^{(t)}_n$ and $\hat{x}^{(t)}_{n,p,m}$ based on the approximation results in Section~\ref{sec:approx}.

\subsection{Simplifications of Parameter and Estimate Updates} \label{sec:MA1}
In this part, we simplify the parameter operations in (\ref{eq:mean_Z_f_x}), (\ref{eq:mean_x_f_x}), (\ref{eq:lemma_r_x}), and (\ref{eq:lemma_r_f_x}) to enable efficient updates of $\hat{x}^{(t)}_{n,p,m}$ in (\ref{eq:lemma_x}) and $\theta^{(t)}_n$ in (\ref{eq:lemma_llr}). First, we introduce \textcolor{black}{a} new parameter:
\begin{align} \label{eq:new_z}
	&z^{(t)}_{f_{l,m} }  \triangleq y_{l,m} - \sum\nolimits_{n \in \mathcal{N}} \sum\nolimits_{p \in \mathcal{P}} A_{l,n,p}\hat{x}^{(t)}_{ x_{n,p,m} \rightarrow f_{l,m}}.
\end{align}
Then, by (\ref{eq:new_z}), Lemma~\ref{lemma:consistent_message}, and Lemma~\ref{theo:tau}, we have Theorem~\ref{theo:algorithm}.


\begin{theorem} \label{theo:algorithm}
	For all $t$, suppose that (\ref{lemma2:eq}) holds, $\hat{\gamma}^{(t)}_{f_{l,m} \rightarrow x_{n,p,m}} = \tau^{(t)}_{l,m}$, $\hat{\sigma}^{(t)}_{x_{n,p,m}\rightarrow f_{l,m}} = \hat{\sigma}^{(t)}_{x_{n,p,m}}$, $\tau^{(t)}_{l,m} = \hat{\tau}^{(t)}_{m}$, $\hat{\sigma}^{(t)}_{x_{n,p,m}} = \hat{\tau}^{(t)}_{m}$, and
	\begin{align} \nonumber
	&\hat{x}^{(t+1)}_{ x_{n,p,m} \rightarrow f_{l,m}} = \hat{x}^{(t+1)}_{ n,p,m} + \eta^{'}_n(\hat{r}^{(t)}_{x_{n,p,m} },\hat{\sigma}^{(t)}_{x_{n,p,m}},\lambda^{(t+1)}_{n,p,m}) \\  \nonumber
	& \times(\hat{r}^{(t)}_{x_{n,p,m} \rightarrow f_{l,m}}-\hat{r}^{(t)}_{x_{n,p,m}}) + \mathcal{O}\left(\frac{1}{L}\right) + \mathcal{O}\left(\frac{1}{N}\right), \\ \label{eq:theorem1_taylor}
	&\text{as $L,N \rightarrow \infty$}
	\end{align}
	hold, where $\mathcal{O}(\cdot)$ is uniformly in $(l,n)$, and
		\begin{align} \nonumber
			&\sum_{n \in \mathcal{N}} \sum_{p \in \mathcal{P}} |A_{l,n,p}|^2 \eta_n^{'}\left(\hat{r}^{(t)}_{x_{n,p,m}}, \hat{\tau}^{(t)}_{m}, \lambda^{(t+1)}_{n,p,m}\right)
			\\ \nonumber
			&= \frac{1}{L}\sum_{n \in \mathcal{N}} \sum_{p \in \mathcal{P}} \eta_n^{'}\left(\hat{r}^{(t)}_{x_{n,p,m}}, \hat{\tau}^{(t)}_{m}, \lambda^{(t+1)}_{n,p,m}\right) + \mathcal{O}\left(\frac{1}{L}\right),\\ \label{eq:appro_A_x}
			 &\text{as $L \rightarrow \infty$}
		\end{align}
		holds, where $\mathcal{O}(\cdot)$ is uniformly in $l$. Then, we have:
	\begin{align}\nonumber
		&z^{(t)}_{f_{l,m}} = y_{l,m} - \sum_{n \in \mathcal{N}}\sum_{p \in \mathcal{P}} A_{l,n,p} \hat{x}^{(t)}_{n,p,m} + \frac{1}{L}   z^{(t-1)}_{f_{l,m}}  \\ \nonumber
		&\times \sum_{n \in \mathcal{N}}\sum_{p \in \mathcal{P}}    \eta_n^{'}\left(\hat{r}^{(t)}_{x_{n,p,m}}, \hat{\tau}^{(t)}_{m}, \lambda^{(t+1)}_{n,p,m}\right) + o(1), \\ \label{eq:theorem_z}
		&\text{as $L,N \rightarrow \infty$}, \\ \label{eq:theorem_r}
		&\hat{r}^{(t)}_{x_{n,p,m}}=\hat{x}^{(t)}_{n,p,m} + \sum_{l \in \mathcal{L}}A^{*}_{l,n,p}z^{(t)}_{f_{l,m}}    + o(1), \text{as $L,N \rightarrow \infty$}.
		\end{align}
\end{theorem}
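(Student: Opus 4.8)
The plan is to read (\ref{eq:theorem_r}) and (\ref{eq:theorem_z}) as the two \emph{node-centric} reductions of the leave-one-out (message) quantities $\hat r^{(t)}_{x_{n,p,m}\rightarrow f_{l,m}}$ and $\hat z^{(t)}_{f_{l,m}\rightarrow x_{n,p,m}}$, obtained by feeding the variance collapses assumed in the hypotheses (i.e. $\hat\gamma^{(t)}_{f_{l,m}\rightarrow x_{n,p,m}}=\tau^{(t)}_{l,m}=\hat\tau^{(t)}_m=\hat\sigma^{(t)}_{x_{n,p,m}}=\hat\sigma^{(t)}_{x_{n,p,m}\rightarrow f_{l,m}}$) together with the Taylor hypothesis (\ref{eq:theorem1_taylor}) and the concentration hypothesis (\ref{eq:appro_A_x}) into the definitions (\ref{eq:new_z}), (\ref{eq:mean_Z_f_x}), (\ref{eq:lemma_r_x}), (\ref{eq:lemma_r_f_x}). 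The two algebraic workhorses are the exact leave-one-out identity $\hat z^{(t)}_{f_{l,m}\rightarrow x_{n,p,m}}=z^{(t)}_{f_{l,m}}+A_{l,n,p}\hat x^{(t)}_{x_{n,p,m}\rightarrow f_{l,m}}$, which is immediate from (\ref{eq:new_z}) and (\ref{eq:mean_Z_f_x}), and the normalization $\sum_{l\in\mathcal L}|A_{l,n,p}|^2=1$ with $A_{l,n,p}=\mathcal O(1/\sqrt L)$ and $|A_{l,n,p}|^2=\mathcal O(1/L)$.

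I would prove (\ref{eq:theorem_r}) first, as a warm-up. Under the assumed variance equalities the scalar prefactor and denominator in (\ref{eq:lemma_r_x}) cancel \emph{exactly}, giving $\hat r^{(t)}_{x_{n,p,m}}=\sum_{l\in\mathcal L}A^*_{l,n,p}\hat z^{(t)}_{f_{l,m}\rightarrow x_{n,p,m}}$. Substituting the leave-one-out identity turns this into $\sum_{l}A^*_{l,n,p}z^{(t)}_{f_{l,m}}+\sum_{l}|A_{l,n,p}|^2\hat x^{(t)}_{x_{n,p,m}\rightarrow f_{l,m}}$. In the second sum I would replace $\hat x^{(t)}_{x_{n,p,m}\rightarrow f_{l,m}}$ by $\hat x^{(t)}_{n,p,m}$ using (\ref{eq:theorem1_taylor}); the per-coordinate discrepancy is $\mathcal O(1/\sqrt L)+\mathcal O(1/L)+\mathcal O(1/N)$, and the extra $|A_{l,n,p}|^2=\mathcal O(1/L)$ weight summed over $\mathcal L$ makes its total contribution $o(1)$, while $\sum_{l}|A_{l,n,p}|^2=1$ produces the clean $\hat x^{(t)}_{n,p,m}$ term. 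This is exactly (\ref{eq:theorem_r}).

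The crux is (\ref{eq:theorem_z}), where the Onsager correction must be produced. Starting from (\ref{eq:new_z}), I would insert (\ref{eq:theorem1_taylor}) (with its iteration index shifted by one, since $z^{(t)}$ is built from $\hat x^{(t)}_{x_{n,p,m}\rightarrow f_{l,m}}$), which replaces each message estimate by $\hat x^{(t)}_{n,p,m}+\eta^{'}_n(\cdots)\big(\hat r^{(t-1)}_{x_{n,p,m}\rightarrow f_{l,m}}-\hat r^{(t-1)}_{x_{n,p,m}}\big)$. The single-coordinate gap is then evaluated from (\ref{eq:lemma_r_x}) and (\ref{eq:lemma_r_f_x}): under the variance equalities both collapse to unweighted column sums of $A^*_{b,n,p}\hat z_{f_{b,m}\rightarrow x_{n,p,m}}$ differing only in the excluded term $b=l$, so $\hat r^{(t-1)}_{x_{n,p,m}\rightarrow f_{l,m}}-\hat r^{(t-1)}_{x_{n,p,m}}=-A^*_{l,n,p}\hat z^{(t-1)}_{f_{l,m}\rightarrow x_{n,p,m}}=-A^*_{l,n,p}z^{(t-1)}_{f_{l,m}}+\mathcal O(1/L)$, the last equality again by the leave-one-out identity. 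Substituting back, the linear-in-$A$ term reproduces $-\sum_{n,p}A_{l,n,p}\hat x^{(t)}_{n,p,m}$, while the $\eta^{'}_n$ term collects into $z^{(t-1)}_{f_{l,m}}\sum_{n,p}|A_{l,n,p}|^2\eta^{'}_n(\cdots)$; finally invoking (\ref{eq:appro_A_x}) converts the $|A_{l,n,p}|^2$-weighted sum into the averaged sum $\frac1L\sum_{n,p}\eta^{'}_n(\cdots)$, yielding the Onsager term displayed in (\ref{eq:theorem_z}).

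The main obstacle I anticipate is not any individual manipulation but the uniform control of the residual orders. Each coordinate leaves behind $\mathcal O(1/L)$, $\mathcal O(1/N)$, and $l$-dependent $\mathcal O(1/\sqrt L)$ pieces, and these get summed over the $\Theta(NP)$ indices $(n,p)$ and weighted by either $A_{l,n,p}=\mathcal O(1/\sqrt L)$ (in $z^{(t)}$) or $|A_{l,n,p}|^2=\mathcal O(1/L)$ (in $\hat r^{(t)}$). In the intended regime $L,N\rightarrow\infty$ with $L/(NP)$ bounded away from zero one has $NP=\mathcal O(L)$, so for instance $\sum_{n,p}A_{l,n,p}\,[\mathcal O(1/L)+\mathcal O(1/N)]$ and $\sum_{l}|A_{l,n,p}|^2\,A^*_{l,n,p}z_{f_{l,m}}$ are both $o(1)$. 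The delicate point is that these bounds must hold \emph{uniformly in} $l$ (respectively $(l,n)$), which is precisely why the uniformity clauses are built into the hypotheses of Lemma~\ref{theo:tau} and into (\ref{eq:theorem1_taylor}) and (\ref{eq:appro_A_x}); verifying that the single retained Onsager contribution sits at the same order as the signal while every discarded term is genuinely lower order is the bookkeeping that carries the whole argument, the Onsager term itself being a routine leave-one-out first-order effect.
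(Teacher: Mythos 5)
Your proposal is correct and takes essentially the same route as the paper's proof in Appendix~\ref{appendix:theorem1}: the same leave-one-out identity and variance-collapse of (\ref{eq:lemma_r_x})/(\ref{eq:lemma_r_f_x}) yielding $\hat{r}^{(t)}_{x_{n,p,m}\rightarrow f_{l,m}}-\hat{r}^{(t)}_{x_{n,p,m}}=-A^{*}_{l,n,p}z^{(t)}_{f_{l,m}}+\mathcal{O}(1/L)$, the same substitution of this gap into the Taylor hypothesis (\ref{eq:theorem1_taylor}) and then into (\ref{eq:new_z}) with (\ref{eq:appro_A_x}) producing the Onsager term, and the same residual bookkeeping via $\sum_{l\in\mathcal{L}}|A_{l,n,p}|^{2}=1$, $A_{l,n,p}=\mathcal{O}(1/\sqrt{L})$, $|A_{l,n,p}|^{2}=\mathcal{O}(1/L)$. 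The differences are purely presentational (you prove (\ref{eq:theorem_r}) first and shift the iteration index of the hypothesis when deriving (\ref{eq:theorem_z}), whereas the paper computes the message gap first and derives the recursion at index $t+1$), so the argument is the same.
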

 \begin{proof}
See Appendix~\ref{appendix:theorem1}.
\end{proof}

Theorem~\ref{theo:algorithm} states that, under certain conditions, $z^{(t)}_{f_{l,m} }$ in (\ref{eq:new_z}) evolves according to (\ref{eq:theorem_z}), $\hat{r}^{(t)}_{x_{n,p,m}}$ in (\ref{eq:lemma_r_x}) can be expressed in terms of $z^{(t)}_{f_{l,m}}$ and $\hat{x}^{(t)}_{n,p,m}$ according to (\ref{eq:theorem_r}). Therefore, for large $L$ and $N$, the computations of the parameters $z^{(t)}_{f_{l,m}}$, $\hat{r}^{(t)}_{x_{n,p,m}}$ and estimates $\theta^{(t+1)}_n$, $\hat{x}^{(t+1)}_{ n,p,m}$ no longer depend on the parameters in (\ref{eq:mean_Z_f_x}), (\ref{eq:mean_x_f_x}), and (\ref{eq:lemma_r_f_x}).

By Lemma~\ref{theo:tau} and Theorem~\ref{theo:algorithm}, i.e., substituting $\hat{\sigma}^{(t)}_{x_{n,p,m}}$ in (\ref{eq:theorem_sigma}) and $\hat{r}^{(t)}_{x_{n,p,m}}$ in (\ref{eq:theorem_r}) into (\ref{eq:lemma_lambda_npm}), (\ref{eq:lemma_x}), (\ref{eq:lemma_llr}), (\ref{eq:tau_m}), and (\ref{eq:theorem_z}), for large $L$ and $N$, we approximately have (ignoring $o(1)$):
	\begin{align} \nonumber
			&\theta^{(t+1)}_n = \log \bigg(\frac{\rho_n}{(1-\rho_n)}\\ \label{eq:theta}
			&\times \frac{\prod\limits_{m \in \mathcal{M}} \prod\limits_{p \in \mathcal{P}} f_{\mathcal{CN}}(0;\hat{x}^{(t)}_{n,p,m} + \sum\limits_{l \in \mathcal{L}}A^{*}_{l,n,p}  z^{(t)}_{f_{l,m}},\hat{\tau}^{(t)}_{m} + \beta_n)}{\prod\limits_{m \in \mathcal{M}} \prod\limits_{p \in \mathcal{P}} f_{\mathcal{CN}}(0;\hat{x}^{(t)}_{n,p,m} + \sum\limits_{l \in \mathcal{L}}A^{*}_{l,n,p}  z^{(t)}_{f_{l,m}},\hat{\tau}^{(t)}_{m})}\bigg),
		\end{align}
		\begin{align}  \label{eq:version1_lambda}
			& \lambda^{(t+1)}_{n,p,m} = \frac{\exp(\theta^{(t+1)}_n)}{\frac{f_{\mathcal{CN}}(0;  \hat{x}^{(t)}_{n,p,m} + \sum\limits_{l \in \mathcal{L}}A^{*}_{l,n,p}  z^{(t)}_{f_{l,m}},\hat{\tau}^{(t)}_{m} + \beta_n)}{f_{\mathcal{CN}}(0;   \hat{x}^{(t)}_{n,p,m} + \sum\limits_{l \in \mathcal{L}}A^{*}_{l,n,p}  z^{(t)}_{f_{l,m}},\hat{\tau}^{(t)}_{m})}+\exp(\theta^{(t+1)}_n)},\\ \label{eq:version1_theorem_x}
		&\hat{x}^{(t+1)}_{n,p,m} =  \eta_n(\hat{x}^{(t)}_{n,p,m} + \sum_{l \in \mathcal{L}}A^{*}_{l,n,p}  z^{(t)}_{f_{l,m}},\hat{\tau}^{(t)}_{m}, \lambda^{(t+1)}_{n,p,m}),\\  \nonumber 
		&z^{(t+1)}_{f_{l,m}} = y_{l,m} - \sum_{n \in \mathcal{N}}\sum_{p \in \mathcal{P}} A_{l,n,p} \hat{x}^{(t+1)}_{n,p,m} + \frac{1}{L}   z^{(t)}_{f_{l,m}}\\ \label{eq:version1_theorem_z} 
		&\times\sum_{n \in \mathcal{N}}\sum_{p \in \mathcal{P}}    \eta_n^{'}(\hat{x}^{(t)}_{n,p,m} + \sum_{l \in \mathcal{L}}A^*_{l,n,p}  z^{(t)}_{f_{l,m}}, \hat{\tau}^{(t)}_{m}, \lambda^{(t+1)}_{n,p,m}),\\ \nonumber
		&\hat{\tau}^{(t+1)}_{m} = \sigma^{2}+ \frac{\hat{\tau}^{(t)}_{m}}{L}\\ \label{eq:long_tau}
		&\times \sum_{n \in \mathcal{N}} \sum_{p \in \mathcal{P}} \eta_n^{'}(\hat{x}^{(t)}_{n,p,m} + \sum_{l \in \mathcal{L}}A^*_{l,n,p}  z^{(t)}_{f_{l,m}},\hat{\tau}^{(t)}_{m},\lambda^{(t+1)}_{n,p,m}).
	\end{align}
Furthermore, for large $L$ and $N$, we approximate $\hat{\tau}^{(t)}_{m}$ by \cite{SchniterAMP}:
\begin{align}\label{eq:tau}
	\hat{\tau}^{(t)}_{m} = \frac{1}{L} \sum_{l \in \mathcal{L}} |z^{(t)}_{f_{l,m}}|^2.
\end{align}
Therefore, we can iteratively update the estimates $\theta^{(t)}_n$ and $\hat{x}^{(t)}_{n,p,m}$ according to (\ref{eq:theta}) and (\ref{eq:version1_theorem_x}), respectively, based on the updates of the $(L+NP)M$ parameters $\lambda^{(t)}_{n,p,m}$, $z^{(t)}_{f_{l,m}}$, and $\hat{\tau}^{(t)}_{m}$ in (\ref{eq:version1_lambda}), (\ref{eq:version1_theorem_z}), and (\ref{eq:tau}), respectively. Let $\boldsymbol{\theta}^{(t)} \triangleq \left(\theta^{(t)}_n\right)_{n\in\mathcal{N}}$, $\hat{\mathbf{X}}^{(t)} \triangleq (\hat{x}^{(t)}_{n,p,m})_{ n\in\mathcal{N}, p \in \mathcal{P}, m \in\mathcal{M}}$, $\boldsymbol{\lambda}^{(t)}\triangleq (\lambda^{(t)}_{n,p,m})_{ n\in\mathcal{N}, p \in \mathcal{P}, m \in\mathcal{M}}$, $\mathbf{Z}^{(t)} \triangleq (z^{(t)}_{f_{l,m}})_{l\in\mathcal{L}, m \in \mathcal{M}}$, and $\hat{\boldsymbol{\tau}}^{(t)} \triangleq (\hat{\tau}^{(t)}_{m})_{ m \in\mathcal{M}}$. Their elements are computed in parallel. As a result, the computation cost can be significantly reduced. Note that the parameters $\mathbf{Z}^{(t)}$ and $\hat{\boldsymbol{\tau}}^{(t)}$  can be interpreted as residuals and residual variances, respectively\cite{Liu18TSP}.

\subsection{AMP-A-EC}\label{sec:al1}
In this part, we formally present the \emph{AMP-A-EC} algorithm based on the updates in (\ref{eq:theta})-(\ref{eq:version1_theorem_z}) and (\ref{eq:tau}). First, we initialize $\hat{\mathbf{X}}^{(0)}$ and $\mathbf{Z}^{(0)}$. 
Then, at each iteration, we update the estimates $\boldsymbol{\theta}^{(t+1)}$ and $\hat{\mathbf{X}}^{(t+1)}$ according to (\ref{eq:theta}) and (\ref{eq:version1_theorem_x}), respectively, and update the parameters $\boldsymbol{\lambda}^{(t)}$, $\mathbf{Z}^{(t+1)}$, and $\hat{\boldsymbol{\tau}}^{(t)}$ according to (\ref{eq:version1_lambda}), (\ref{eq:version1_theorem_z}), and (\ref{eq:tau}), respectively (c.f., Steps 1-5 of Algorithm \ref{algorithm1}). 
Note that this iteration procedure has no convergence guarantee when $L$ is smaller \textcolor{black}{than} or comparable to $N_a$ \cite{RanganAMPconvergence}, and the MSE $\mathbb{E}\left[\|\mathbf{X}-\hat{\mathbf{X}}^{(t)}\|^2_F |\mathbf{Y}\right]$ is not monotonically decreasing with $t$, i.e., $\hat{\mathbf{X}}^{(t)}$ may not be the best estimate among $\hat{\mathbf{X}}^{(1)},...,\hat{\mathbf{X}}^{(t)}$. This motivates us to track the best estimate obtained so far. Since $\mathbb{E}\left[\|\mathbf{X}-\hat{\mathbf{X}}^{(t)}\|^2_F |\mathbf{Y}\right]$ requires computing multiple integrals, which is as computationally expensive as computing the MMSE estimator given in (\ref{eq:MMSE_x_equation}), we use the objective function of GROUP LASSO problem:
\begin{align} \label{eq:gp}
	f(\mathbf{X}) \triangleq \frac{1}{2} \|\mathbf{Y} - \mathbf{A}\mathbf{X}\|_F^2 + \sum_{i=1}^{NP}\|\mathbf{X}_{i,:}\|_2
\end{align}
to measure the quality of $\hat{\mathbf{X}}^{(t)}$. Specifically, we keep tracking $f^{(t)}_{best} \triangleq \min \{f(\hat{\mathbf{X}}^{(1)}),...,f(\hat{\mathbf{X}}^{(t)})\}$ and obtain the estimate $\hat{\mathbf{X}}^{(t)}_{best} \triangleq \hat{\mathbf{X}}^{(t_{best})}$ with $t_{best}$ satisfying $f(\hat{\mathbf{X}}^{(t_{best})})=f^{(t)}_{best}$ and the corresponding $\boldsymbol{\theta}^{(t)}_{best} \triangleq \boldsymbol{\theta}^{(t_{best})}$ (c.f., Steps 6-11 of Algorithm \ref{algorithm1}). When some stopping criteria are satisfied, e.g., the preset number of iterations is reached \cite{Chen18TSP,Senel18TCOM,DonohoAMP}, or $|\frac{1}{M}\sum_{m\in\mathcal{M}} \hat{\tau}^{(t+1)}_{m} -\frac{1}{M}\sum_{m\in\mathcal{M}} \hat{\tau}^{(t)}_{m}|$ is sufficiently small \cite{Liu18TSP,GaoAMP_OFDM_TWC}, we can stop the \emph{AMP-A-EC} algorithm.
Finally, we detect device activities and estimate the active devices' channels based on $\boldsymbol{\theta}^{(t)}_{best}$ and $\hat{\mathbf{X}}^{(t)}_{best}$. Let $\hat{\mathbf{X}}^{(t)}_{best} \triangleq \left(\hat{x}^{best,(t)}_{n,p,m}\right)_{ n\in\mathcal{N}, p \in \mathcal{P}, m \in\mathcal{M}}$ and $\boldsymbol{\theta}^{(t)}_{best} \triangleq \left(\theta^{best,(t)}_n\right)_{n\in\mathcal{N}}$. Specifically, for all $n\in \mathcal{N}$, the activity detection for device $n$ is given by:
\begin{align} \label{eq:detector}
	\hat{a}^{best,(t)}_n=\begin{cases}
		0,\ \theta^{best,(t)}_n < 0,\\
		1,\ \theta^{best,(t)}_n \geq 0,\\
	\end{cases}
\end{align}
and the channel estimates of device $n$ which is detected to be active, i.e., $\hat{a}^{best,(t)}_n=1$, are given by:
\begin{equation} \label{eq:algo1_channel}
	\hat{h}_{n,p,m} = \hat{x}^{best,(t)}_{n,p,m}, p \in \mathcal{P}, m \in \mathcal{M}.
\end{equation}
The details of \emph{AMP-A-EC} are presented in Algorithm \ref{algorithm1}.
\begin{algorithm}[t] { \small\caption{AMP-A-EC} \label{algorithm1}
		{\bf Input:}
		received signals $\mathbf{Y}$ and measurement matrix $\mathbf{A}$.\\
		{\bf Output:}
		estimates of device activities $\hat{a}^{best,(t)}_n, n \in \mathcal{N}$ and active devices' channels $\hat{h}_{n,p,m}, n \in \mathcal{N}, p \in \mathcal{P}, m \in \mathcal{M}$, with $\hat{a}^{best,(t)}_n=1$.\\
		{\bf Initialize\footnotemark:} set $\hat{\mathbf{X}}^{(0)} = \mathbf{0}$,  $\hat{\mathbf{X}}^{(0)}_{best} = \hat{\mathbf{X}}^{(0)}$, $\mathbf{Z}^{(0)} = \mathbf{Y}$, $f^{(0)}_{best}= \frac{1}{2}\|\mathbf{Y}\|^2_F$, $t=0$.\\
		{\bf repeat}
		\begin{algorithmic}[1]
			\STATE \quad Calculate $\hat{\tau}^{(t)}_{m},m\in\mathcal{M}$ in parallel according to (\ref{eq:tau}).
			\STATE \quad Calculate $\theta^{(t+1)}_{n},n\in\mathcal{N}$ in parallel according to (\ref{eq:theta}).	
			\STATE \quad Calculate $\lambda^{(t+1)}_{n,p,m},n\in\mathcal{N},p\in\mathcal{P},m\in\mathcal{M}$ in parallel according to (\ref{eq:version1_lambda}).	
			\STATE \quad Calculate $\hat{x}^{(t+1)}_{n,p,m},n\in\mathcal{N},p\in\mathcal{P},m\in\mathcal{M}$ in parallel according to (\ref{eq:version1_theorem_x}).
			\STATE \quad Calculate $z^{(t+1)}_{f_{l,m}},l\in\mathcal{L},m\in\mathcal{M}$ in parallel according to (\ref{eq:version1_theorem_z}).
			\STATE \quad Calculate $f(\hat{\mathbf{X}}^{(t+1)})$ based on $\hat{x}^{(t+1)}_{n,p,m},n\in\mathcal{N},p\in\mathcal{P},m\in\mathcal{M}$ according to (\ref{eq:gp}).
			\STATE \quad \textbf{If} $f(\hat{\mathbf{X}}^{(t+1)}) < f^{(t)}_{best}$
			\STATE \quad\quad   set $f^{(t+1)}_{best}=f(\hat{\mathbf{X}}^{(t+1)})$, $\hat{\mathbf{X}}^{(t+1)}_{best}=\hat{\mathbf{X}}^{(t+1)}$, and $\boldsymbol{\theta}^{(t)}_{best} = \boldsymbol{\theta}^{(t)}$.
			\STATE \quad \textbf{else}
			\STATE \quad\quad   set $f^{(t+1)}_{best}=f^{(t)}_{best}$ and   $\hat{\mathbf{X}}^{(t+1)}_{best}=\hat{\mathbf{X}}^{(t)}$.
			\STATE \quad\textbf{end}
			\STATE \quad Set $t = t+1$.
		\end{algorithmic}
		{\bf until} some stopping criteria\footnotemark are satisfied.\\
		{\bf Calculate} $\hat{a}^{best,(t)}_n, n \in \mathcal{N}$ in parallel based on $\boldsymbol{\theta}^{(t)}_{best}$ according to (\ref{eq:detector}).\\
		{\bf Calculate} $\hat{h}_{n,p,m}, p \in \mathcal{P}, m \in \mathcal{M}$ in parallel based on $\hat{\mathbf{X}}^{(t+1)}_{best}$ according to (\ref{eq:algo1_channel}), with $\hat{a}^{best,(t)}_n=1, n \in \mathcal{N}$.
	}
\end{algorithm}
\footnotetext[4]{The initialization of $\hat{\mathbf{X}}^{(0)}$ and $\mathbf{Z}^{(0)}$ follow those in \cite{Chen18TSP,Liu18TSP,Senel18TCOM,GaoAMP_OFDM_TWC,DonohoAMP}.}

Now, we analyze the computational complexity of \emph{AMP-A-EC} via \textcolor{black}{floating-point operation (FLOP)} count.
The dominant terms of the per-iteration flop counts of Steps 1, 2, 3, 4, 5, and 6-11 of \emph{AMP-A-EC} are $2LM$, $10NPM$, $3NPM$,  $3NPM$, $2LMNP + 2LM^2 + 8NPM^2$, and $2NPM$, respectively. Therefore, the dominant term and \textcolor{black}{the} order of the per-iteration computational complexity of \emph{AMP-A-EC} are $2LMNP + 2LM^2 + 8NPM^2$ and $\mathcal{O}(LNPM)$, respectively.


\subsection{Performance Analysis}
In this part, we analyze the error probability of device activity detection and the MSE of channel estimation corresponding to Steps 1-5 in \emph{AMP-A-EC}, based on \textcolor{black}{SE} analysis\cite{MontanariSE}.

Given that matrix $\mathbf{A}$ in (\ref{eq:version1_theorem_x}) is a random matrix with all elements i.i.d. according to a complex Gaussian distribution, the arguments $\hat{x}^{(t)}_{n,p,m} + \sum_{l \in \mathcal{L}}A^{*}_{l,n,p}  z^{(t)}_{f_{l,m}}, p \in \mathcal{P},m \in \mathcal{M}$ in $\eta_n(\cdot)$ in (\ref{eq:version1_theorem_x}) converge to a random vector ${\mathbf{r}_t}\triangleq {\mathbf{x}} + \tau^{(t)} {\mathbf{v}}\in \mathbb{C}^{PM \times 1}$, where ${\mathbf{x}} $ denotes a random vector following the distribution $f_{\mathcal{B-CN}}(\mathbf{x};\mathbf{0},\beta \mathbf{I},\rho)$, and $\mathbf{v}$ denotes a vector independent of ${\mathbf{x}}$ and following the distribution $f_{\mathcal{CN}}(\mathbf{v};\mathbf{0},\mathbf{I})$, and 
$\tau^{(t)}$  evolves according to \cite[Theorem 1]{Liu18TSP}:
\begin{align} \label{eq:se}
	\tau^{(t+1)}=\sigma^{2}+\frac{NP}{L}\left( \frac{\rho \beta \tau^{(t)}}{\beta+\tau^{(t)}}+ \phi_{t} \left(\tau^{(t)}\right)\right),
\end{align}	
with $\tau^{(0)}=\sigma^{2}+\frac{NP}{L} \rho \beta$, and
\begin{align} \nonumber
	&\phi_{t} \left(\tau^{(t)}\right)=\frac{1}{MP} \mathbb{E}_{\mathbf{r}_t}\left[ \frac{g(\mathbf{r}_t,\tau^{(t)} \mathbf{I})\left(1-g(\mathbf{r}_t,\tau^{(t)}\mathbf{I})\right)\beta^{2}}{\left(\beta+\tau^{(t)}\right)^{2}}\mathbf{r}_t^{H} \mathbf{r}_t\right],\\
	&g(\mathbf{r}_t,\tau^{(t)} \mathbf{I}) \triangleq \frac{1}{1+ \frac{(1-\rho)f_{\mathcal{CN}}(0;\mathbf{r}_t,\tau^{(t)} \mathbf{I})}{\rho f_{\mathcal{CN}}(0;\mathbf{r}_t,\left(\tau^{(t)} + \beta\right) \mathbf{I})}}.
\end{align}
Here, $\tau^{(t)}$ is referred to as \textcolor{black}{the} ``state'', and (\ref{eq:se}) is called \textcolor{black}{the} state evolution. Besides, the parameters $\hat{\tau}_m^{(t)}, m \in \mathcal{M}$ in (\ref{eq:tau}) converge to $\tau^{(t)}$, as $L,N \rightarrow \infty$ \cite{MontanariSE}.
Based on the above state evolution results, we first analyze the error probability of the activity detection for device $n \in \mathcal{N}$  at the $t$-th iteration, denoted by
		$p_{n}^{\mathrm{e},(t)} \triangleq {\rm Pr}(\hat{a}^{(t)}_n \neq a_n),
$
where $\hat{a}^{(t)}_n$ is given by (\ref{eq:a_detect}), with $\theta^{(t)}_n$ given by (\ref{eq:theta}). Define ${\mathbf{h}}_{n} \triangleq ({h}_{n,p,m})_{p\in \mathcal{P}, m \in \mathcal{M}}$. 
\begin{theorem} \label{theo:error_prob}
For all $t$, suppose that $(\hat{x}^{(t)}_{n,p,m} + \sum_{l \in \mathcal{L}}A^{*}_{l,n,p}  z^{(t)}_{f_{l,m}})_{p \in \mathcal{P},m \in \mathcal{M}} = a_n \mathbf{h}_{n} + \tau^{(t)} \mathbf{v}$, and $\hat{\tau}_m^{(t)}= \tau^{(t)}$. Then, we have:
	\begin{align}\nonumber
	&p_{n}^{\mathrm{e}(t)} = \rho_n \frac{\underline{\Gamma}(PM, \frac{\tau^{(t)}}{\beta_n} \log \frac{1-\rho_n}{\rho_n} +b_{t, n}(\beta_n,\tau^{(t)}) PM)}{\Gamma(PM)} \\ \nonumber
	&+ (1-\rho_n) \frac{\bar{\Gamma}(PM, \frac{\tau^{(t)}+\beta_n}{\beta_n} \log \frac{1-\rho_n}{\rho_n} +c_{t, n}(\beta_n,\tau^{(t)}) PM)}{\Gamma(PM)},
\end{align}
	where $\bar{\Gamma}(\cdot, \cdot)$, $\underline{\Gamma}(\cdot, \cdot)$, and $\Gamma(\cdot)$ denote the upper incomplete Gamma function, lower incomplete Gamma function, and Gamma function, respectively, $\tau^{(t)}$ is given in (\ref{eq:se}), $b_{t, n}(\beta_n,\tau^{(t)}) \triangleq\frac{\tau^{(t)}}{\beta_n} \log \left(1+\frac{\beta_n}{\tau^{(t)}}\right),$ and $c_{t, n}(\beta_n,\tau^{(t)}) \triangleq\frac{\beta_n+\tau^{(t)}}{\beta_n} \log \left(1+\frac{\beta_n}{\tau^{(t)}}\right).$
\end{theorem}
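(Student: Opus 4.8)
The plan is to reduce the vector decision (\ref{eq:a_detect}) for device $n$ to a scalar threshold test on a single sufficient statistic, and then to evaluate the two conditional error probabilities using the fact that this statistic is a scaled chi-square random variable under each hypothesis. Writing $\mathbf{r}_n^{(t)} \triangleq \big(\hat{x}^{(t)}_{n,p,m} + \sum_{l\in\mathcal{L}}A^{*}_{l,n,p}z^{(t)}_{f_{l,m}}\big)_{p\in\mathcal{P},m\in\mathcal{M}}$, I would first substitute the complex Gaussian density $f_{\mathcal{CN}}(0;r,v)=\frac{1}{\pi v}e^{-|r|^2/v}$ into (\ref{eq:theta}). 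Each log-ratio factor then collapses to a constant plus a term proportional to $|r_{n,p,m}|^2$, so that, using $\hat{\tau}^{(t)}_m=\tau^{(t)}$, $\theta^{(t+1)}_n$ becomes affine in the scalar statistic $S_n\triangleq\|\mathbf{r}_n^{(t)}\|_2^2=\sum_{p,m}|r_{n,p,m}|^2$. Consequently $\hat a^{(t)}_n=1\Leftrightarrow\theta^{(t)}_n\ge 0$ is equivalent to $S_n\ge S_{\mathrm{th}}$ with the explicit threshold $S_{\mathrm{th}}=\frac{\tau^{(t)}(\tau^{(t)}+\beta_n)}{\beta_n}\big[PM\log\frac{\tau^{(t)}+\beta_n}{\tau^{(t)}}+\log\frac{1-\rho_n}{\rho_n}\big]$.

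Next I would identify the law of $S_n$ under each hypothesis using the theorem's assumption $\mathbf{r}_n^{(t)}=a_n\mathbf{h}_n+\tau^{(t)}\mathbf{v}$ together with the priors $\mathbf{v}\sim f_{\mathcal{CN}}(\cdot;\mathbf{0},\mathbf{I})$ and $h_{n,p,m}\sim\mathcal{CN}(0,\beta_n)$ i.i.d. Conditioned on $a_n=0$, each entry $r_{n,p,m}$ is i.i.d. zero-mean complex Gaussian of variance $\tau^{(t)}$; conditioned on $a_n=1$, marginalizing the channel $\mathbf{h}_n$ over its prior makes each $r_{n,p,m}$ i.i.d. zero-mean complex Gaussian of variance $\tau^{(t)}+\beta_n$, precisely the two variances appearing in the numerator and denominator of (\ref{eq:theta}). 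Since $|r_{n,p,m}|^2$ is exponential with mean equal to that variance, $S_n$ is a sum of $PM$ i.i.d. exponentials; hence $S_n/\tau^{(t)}$ (resp. $S_n/(\tau^{(t)}+\beta_n)$) follows a Gamma distribution with shape $PM$ and unit scale, i.e. $2S_n/\tau^{(t)}\sim\chi^2_{2PM}$.

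Finally I would assemble the result by total probability, $p_n^{\mathrm{e},(t)}=\rho_n\Pr(S_n<S_{\mathrm{th}}\mid a_n=1)+(1-\rho_n)\Pr(S_n\ge S_{\mathrm{th}}\mid a_n=0)$, and express the two probabilities through the regularized lower and upper incomplete Gamma functions evaluated at $S_{\mathrm{th}}/(\tau^{(t)}+\beta_n)$ and $S_{\mathrm{th}}/\tau^{(t)}$, respectively. Substituting $S_{\mathrm{th}}$ and using $\log\frac{\tau^{(t)}+\beta_n}{\tau^{(t)}}=\log(1+\beta_n/\tau^{(t)})$, dividing by $\tau^{(t)}+\beta_n$ yields the argument $\frac{\tau^{(t)}}{\beta_n}\log\frac{1-\rho_n}{\rho_n}+b_{t,n}(\beta_n,\tau^{(t)})PM$ and dividing by $\tau^{(t)}$ yields $\frac{\tau^{(t)}+\beta_n}{\beta_n}\log\frac{1-\rho_n}{\rho_n}+c_{t,n}(\beta_n,\tau^{(t)})PM$, matching the claimed expression once $\Gamma(PM)$ normalizes each incomplete Gamma function.

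The bulk of the work is routine algebra; the one step requiring care is the distributional identification under $a_n=1$. One must marginalize over the random channel $\mathbf{h}_n$ so that the channel and effective-noise contributions combine into a single zero-mean Gaussian of variance $\tau^{(t)}+\beta_n$, keeping $S_n$ a central (hence incomplete-Gamma) variable. Conditioning on a fixed $\mathbf{h}_n$ instead would make $S_n$ noncentral chi-square and destroy the clean form, so the averaging over the channel prior---consistent with how (\ref{eq:theta}) treats the active-channel variance as $\tau^{(t)}+\beta_n$---is the crux. A secondary point is confirming that the SE assumption supplies the i.i.d. structure of the entries of $\mathbf{r}_n^{(t)}$ across $(p,m)$ needed for the Gamma/chi-square reduction.
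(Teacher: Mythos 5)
Your proposal is correct and follows essentially the same route as the paper's proof: rewrite the test $\theta^{(t)}_n \geq 0$ as a threshold test on the statistic $\|\mathbf{r}^{(t)}_n\|_2^2$ (your $S_{\mathrm{th}}$ equals the paper's threshold since $\frac{1}{\tau^{(t)}}-\frac{1}{\beta_n+\tau^{(t)}}=\frac{\beta_n}{\tau^{(t)}(\beta_n+\tau^{(t)})}$), identify the conditional law of that statistic as a scaled chi-square/Gamma variable with shape $PM$ under each hypothesis by marginalizing $\mathbf{h}_n$ over its prior, and combine the two conditional error probabilities via total probability into the regularized incomplete Gamma expressions. Your explicit remark that the active-device case must average over the channel prior (rather than condition on a fixed $\mathbf{h}_n$, which would give a noncentral chi-square) is exactly the step the paper performs implicitly when it states the entries are i.i.d.\ $\mathcal{CN}(0,a_n\beta_n+\tau^{(t)})$ given $a_n$.
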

\begin{proof}
	See Appendix~\ref{appendix:theorem2}.
\end{proof}

Theorem~\ref{theo:error_prob} characterizes the error probability analytically in terms of $P$ and $M$, and $\tau^{(t)}$. Since $a>\log (1+a)>\frac{a}{1+a}$ for $a>0$, we have $b_{t, n}(\beta_n,\tau^{(t)})<1$ and $c_{t, n}(\beta_n,\tau^{(t)})>1$. By the property of Gamma functions and Theorem \ref{theo:error_prob}, we know that the error probability for the detection of each device activity decreases with $M$ and $P$, and $p_{n}^{\mathrm{e}(t)} \rightarrow 0$ as $M \rightarrow \infty$.

Next, for each active device $n\in\mathcal{N}$, we analyze the MSE of the channel estimation at the $t$-th iteration, $\frac{1}{PM}\mathbb{E}\left[\|\mathbf{h}_{n} - \hat{\mathbf{h}}^{(t)}_{n}\|^2_2\right]$, where $\hat{\mathbf{h}}^{(t)}_{n} \triangleq (\hat{x}^{(t)}_{n,p,m})_{p\in \mathcal{P}, m \in \mathcal{M}}$. 
\begin{theorem} \label{theorem:channel_error}
For all $t$, suppose that $(\hat{x}^{(t)}_{n,p,m} + \sum_{l \in \mathcal{L}}A^{*}_{l,n,p}  z^{(t)}_{f_{l,m}})_{p \in \mathcal{P},m \in \mathcal{M}} = \mathbf{h}_{n} + \tau^{(t)} \mathbf{v} \triangleq \hat{\mathbf{r}}^{(t)}_{n}$, and $\hat{\tau}_m^{(t)}= \tau^{(t)}$. Then, we have: 
	\begin{align} \nonumber
		&\frac{1}{PM}\mathbb{E}\left[\|\mathbf{h}_{n} - \hat{\mathbf{h}}^{(t)}_{n}\|^2_2\right]=\frac{\beta_n \tau^{(t)}}{\beta_n + \tau^{(t)}}\\ \nonumber
		 &+ \frac{1}{PM}\mathbb{E}_{\hat{\mathbf{r}}^{(t)}_{n}}\Big[\frac{(1 -g(\hat{\mathbf{r}}^{(t)}_{n},\tau^{(t)}))^2 \beta^2_n}{(\beta_n + \tau^{(t)})^2} (\hat{\mathbf{r}}^{(t)}_{n})^H\hat{\mathbf{r}}^{(t)}_{n}\Big].
	\end{align}
\end{theorem}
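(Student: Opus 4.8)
The plan is to reduce the claimed MSE to a conditional-mean computation under a Gaussian observation model and then kill the cross term with the orthogonality principle. First I would make the estimator explicit. Under the stated assumptions, combining the denoiser $\eta_n$ from Lemma~\ref{lemma:consistent_message} with the large-system reductions of Lemma~\ref{theo:tau} and Theorem~\ref{theo:algorithm} (which force the single scalar state $\tau^{(t)}$ into every coordinate and turn the extrinsic factors $\lambda^{(t)}_{n,p,m}$ into the joint posterior activity probability), the active-device channel estimate collapses to the Bernoulli--Gaussian posterior mean
\[
\hat{\mathbf{h}}^{(t)}_n = g\big(\hat{\mathbf{r}}^{(t)}_n,\tau^{(t)}\big)\,\frac{\beta_n}{\beta_n+\tau^{(t)}}\,\hat{\mathbf{r}}^{(t)}_n .
\]
Writing $G\triangleq g(\hat{\mathbf{r}}^{(t)}_n,\tau^{(t)})$ and $\alpha\triangleq \beta_n/(\beta_n+\tau^{(t)})$, and using the effective observation $\hat{\mathbf{r}}^{(t)}_n=\mathbf{h}_n+\mathbf{w}$ with $\mathbf{w}\sim\mathcal{CN}(\mathbf{0},\tau^{(t)}\mathbf{I})$ (the state-evolution model underlying $g$ and $\eta_n$), conditioned on $a_n=1$.

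Next I would decompose the error by adding and subtracting the clairvoyant linear estimator $\alpha\hat{\mathbf{r}}^{(t)}_n$ (the estimator one would use knowing the device is active):
\[
\mathbf{h}_n-\hat{\mathbf{h}}^{(t)}_n=\big(\mathbf{h}_n-\alpha\hat{\mathbf{r}}^{(t)}_n\big)+(1-G)\,\alpha\,\hat{\mathbf{r}}^{(t)}_n .
\]
Expanding the squared norm produces three pieces. For the pure linear-MMSE piece $\mathbb{E}\|\mathbf{h}_n-\alpha\hat{\mathbf{r}}^{(t)}_n\|_2^2$, since $(\mathbf{h}_n,\hat{\mathbf{r}}^{(t)}_n)$ is jointly Gaussian a direct second-moment computation (using $\mathbb{E}\|\mathbf{h}_n\|_2^2=PM\beta_n$, $\mathbb{E}\|\mathbf{w}\|_2^2=PM\tau^{(t)}$, and $(1-\alpha)=\tau^{(t)}/(\beta_n+\tau^{(t)})$) gives $\frac{1}{PM}\mathbb{E}\|\mathbf{h}_n-\alpha\hat{\mathbf{r}}^{(t)}_n\|_2^2=\beta_n\tau^{(t)}/(\beta_n+\tau^{(t)})$, the first term on the right-hand side. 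The piece $(1-G)^2\alpha^2\|\hat{\mathbf{r}}^{(t)}_n\|_2^2$ is, after substituting $\alpha^2=\beta_n^2/(\beta_n+\tau^{(t)})^2$ and $\|\hat{\mathbf{r}}^{(t)}_n\|_2^2=(\hat{\mathbf{r}}^{(t)}_n)^H\hat{\mathbf{r}}^{(t)}_n$, exactly the second term on the right-hand side.

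The heart of the argument is showing the cross term vanishes. Conditioned on $a_n=1$ the conditional mean is linear, $\mathbb{E}[\mathbf{h}_n\mid\hat{\mathbf{r}}^{(t)}_n]=\alpha\hat{\mathbf{r}}^{(t)}_n$, so the residual $\mathbf{h}_n-\alpha\hat{\mathbf{r}}^{(t)}_n$ has zero conditional mean given $\hat{\mathbf{r}}^{(t)}_n$. Because $G$ is a real-valued deterministic function of $\hat{\mathbf{r}}^{(t)}_n$, the vector $(1-G)\hat{\mathbf{r}}^{(t)}_n$ is $\hat{\mathbf{r}}^{(t)}_n$-measurable, and the tower property gives
\[
\mathbb{E}\,\mathrm{Re}\big\langle \mathbf{h}_n-\alpha\hat{\mathbf{r}}^{(t)}_n,\,(1-G)\hat{\mathbf{r}}^{(t)}_n\big\rangle=\mathbb{E}\,\mathrm{Re}\big\langle \mathbb{E}[\mathbf{h}_n-\alpha\hat{\mathbf{r}}^{(t)}_n\mid\hat{\mathbf{r}}^{(t)}_n],\,(1-G)\hat{\mathbf{r}}^{(t)}_n\big\rangle=0 ,
\]
which is precisely the orthogonality principle. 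Collecting the two surviving pieces and dividing by $PM$ yields the stated identity.

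I expect the main obstacle to be the first step, namely rigorously justifying that the per-coordinate AMP denoiser $\eta_n$ with its product of extrinsic factors $\lambda^{(t)}_{n,p,m}$ genuinely reduces to the joint posterior mean written through the single scalar $g(\hat{\mathbf{r}}^{(t)}_n,\tau^{(t)})$. This relies on the self-averaging consequences of Lemma~\ref{theo:tau} and Theorem~\ref{theo:algorithm} (so the common state $\tau^{(t)}$ enters every component) and on the fact that, in the large-system limit, omitting a single coordinate from the extrinsic product is asymptotically negligible. Once the estimator is in posterior-mean form, the remainder is routine complex-Gaussian second-moment algebra together with the orthogonality argument above.
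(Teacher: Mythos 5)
Your proof is correct, and it rests on the same two pillars as the paper's own proof in Appendix~\ref{appendix:h}: (a) under the theorem's hypotheses the estimator collapses, via (\ref{eq:version1_theorem_x}), to $\hat{\mathbf{h}}^{(t)}_n = G\,\alpha\,\hat{\mathbf{r}}^{(t)}_n$ with $G \triangleq g(\hat{\mathbf{r}}^{(t)}_n,\tau^{(t)})$ and $\alpha \triangleq \beta_n/(\beta_n+\tau^{(t)})$, and (b) the conditional law of $\mathbf{h}_n$ given $\hat{\mathbf{r}}^{(t)}_n$ is complex Gaussian with mean $\alpha\hat{\mathbf{r}}^{(t)}_n$ and variance $\frac{\beta_n\tau^{(t)}}{\beta_n+\tau^{(t)}}$, which the paper derives in (\ref{eq:app_h_r}) by Bayes' rule and the Gaussian product theorem. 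Where you genuinely differ is the bookkeeping. The paper expands the quadratic form directly in (\ref{eq:h_est_o}) and evaluates each conditional moment separately --- including the posterior second moment of $\mathbf{h}_n$, which yields both the $\frac{\beta_n\tau^{(t)}}{\beta_n+\tau^{(t)}}$ term and an extra $\alpha^2(\hat{\mathbf{r}}^{(t)}_n)^H\hat{\mathbf{r}}^{(t)}_n$ contribution in (\ref{eq:h_est1})--(\ref{eq:h_est3}) --- and then recombines four pieces algebraically into the $(1-g)^2$ factor. You instead pivot on the clairvoyant linear estimator $\alpha\hat{\mathbf{r}}^{(t)}_n$, so the $(1-G)^2\alpha^2$ term appears immediately, the linear-MMSE piece is a one-line Gaussian second-moment computation, and the cross term vanishes structurally by the orthogonality principle (tower property plus $\hat{\mathbf{r}}^{(t)}_n$-measurability of $(1-G)\alpha\hat{\mathbf{r}}^{(t)}_n$); notably you only need the posterior mean, not the posterior second moment. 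What each approach buys: yours makes the cancellation conceptual rather than computational and is shorter once the estimator form is granted; the paper's is more explicit and self-contained, since it never invokes orthogonality and verifies every term by direct integration. A further point in your favor is that you explicitly flag, and sketch the justification for, the step the paper takes silently --- that the per-coordinate denoiser $\eta_n$ with extrinsic factors $\lambda^{(t)}_{n,p,m}$ reduces to the joint-posterior form driven by the single scalar $G$ --- which is precisely where the theorem's hypotheses (a common state $\tau^{(t)}$ entering every coordinate) do their work.
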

\begin{proof}
	See Appendix~\ref{appendix:h}.
\end{proof}

Theorem~\ref{theorem:channel_error} characterizes the MSE of the channel estimation analytically in terms of $\tau^{(t)}$ in (\ref{eq:se}). 

\section{AMP for Device Activity Detection and Actual Channel Estimation} \label{sec:2}
In this section, we present the \emph{AMP-A-AC} algorithm for activity detection and actual channel estimation, which effectively updates the estimates $\theta^{(t)}_n$ and $\hat{h}^{(t)}_{n,p,m}$ based on the approximation results in Section~\ref{sec:approx}.

\subsection{Simplifications of Parameter and Estimate Updates}
In this part, we simplify the parameter operations in (\ref{eq:mean_Z_f_x}), (\ref{eq:mean_x_f_x}), (\ref{eq:lemma_r_x}), and (\ref{eq:lemma_r_f_x}) to enable efficient updates of $\theta^{(t)}_n$ in (\ref{eq:lemma_llr}) and $\hat{h}^{(t)}_{n,p,m}$ in (\ref{eq:lemma_h}). First, we introduce \textcolor{black}{a new} parameter:
	\begin{align} \label{eq:new_z2}
		&\tilde{z}^{(t)}_{f_{l,m} }  \triangleq y_{l,m} - \sum_{n \in \mathcal{N}} \sum_{p \in \mathcal{P}} A_{l,n,p}\epsilon^{(t)}_{ x_{n,p,m} \rightarrow f_{l,m}} \hat{h}^{(t)}_{ x_{n,p,m} \rightarrow f_{l,m}},
	\end{align}
	where
	\begin{align}
		&\hat{h}^{(t)}_{ x_{n,p,m} \rightarrow f_{l,m}} \triangleq \frac{\beta_n \hat{r}^{(t-1)}_{x_{n,p,m} \rightarrow f_{l,m}}}{\hat{\sigma}^{(t-1)}_{x_{n,p,m}\rightarrow f_{l,m}} + \beta_n},\\
		\nonumber
		&\epsilon^{(t)}_{ x_{n,p,m} \rightarrow f_{l,m}}\\ \label{eq:epsilon_origin}
		&\triangleq \frac{1}{1+\frac{(1-\lambda^{(t)}_{n,p,m})f_{\mathcal{CN}}(0;\hat{r}^{(t-1)}_{x_{n,p,m} \rightarrow f_{l,m}},\hat{\sigma}^{(t-1)}_{x_{n,p,m}\rightarrow f_{l,m}})}{\lambda^{(t)}_{n,p,m}f_{\mathcal{CN}}(0;\hat{r}^{(t-1)}_{x_{n,p,m} \rightarrow f_{l,m}},\hat{\sigma}^{(t-1)}_{x_{n,p,m}\rightarrow f_{l,m}}+ \beta_n)}}.
	\end{align}
Then, by (\ref{eq:new_z2}), Lemma~\ref{lemma:consistent_message}, and Lemma~\ref{theo:tau}, we have Theorem~\ref{theo:algorithm2}.

\begin{theorem} \label{theo:algorithm2}
	For all $t$, suppose that (\ref{lemma2:eq}) holds, $\hat{\gamma}^{(t)}_{f_{l,m} \rightarrow x_{n,p,m}} = \tau^{(t)}_{l,m}$, $\hat{\sigma}^{(t)}_{x_{n,p,m}\rightarrow f_{l,m}} = \hat{\sigma}^{(t)}_{x_{n,p,m}}$, $\tau^{(t)}_{l,m} = \hat{\tau}^{(t)}_{m}$, $\hat{\sigma}^{(t)}_{x_{n,p,m}} = \hat{\tau}^{(t)}_{m}$, and
		\begin{align} \nonumber
			\hat{h}^{(t+1)}_{ x_{n,p,m} \rightarrow f_{l,m}} &= \hat{h}^{(t+1)}_{ n,p,m} + \frac{\beta_n (\hat{r}^{(t)}_{x_{n,p,m} \rightarrow f_{l,m}}-\hat{r}^{(t)}_{x_{n,p,m}})}{\hat{\sigma}^{(t)}_{x_{n,p,m}} + \beta_n}\\ \label{eq:taylor_h}
			&+\mathcal{O}\left(\frac{1}{L}\right)+\mathcal{O}\left(\frac{1}{N}\right), \text{as $L,N \rightarrow \infty$}
		\end{align}
		hold, where $\mathcal{O}(\cdot)$ is uniformly in $(l,n)$, and
		\begin{align} \nonumber
			&\sum_{n \in \mathcal{N}} \sum_{p \in \mathcal{P}} \frac{\beta_n |A_{l,n,p}|^2}{\hat{\sigma}^{(t)}_{x_{n,p,m}} + \beta_n} 
			\\ \label{eq:appro_A_h}
			&= \frac{1}{L}\sum_{n \in \mathcal{N}} \sum_{p \in \mathcal{P}} \frac{\beta_n}{\hat{\sigma}^{(t)}_{x_{n,p,m}} + \beta_n}+ \mathcal{O}\left(\frac{1}{L}\right),\text{as $L \rightarrow \infty$}
		\end{align}
		holds, where $\mathcal{O}(\cdot)$ is uniformly in $l$. Then, we have:
		\begin{align}\nonumber
			&\tilde{z}^{(t)}_{f_{l,m}} = y_{l,m} - \sum_{n \in \mathcal{N}}\sum_{p \in \mathcal{P}} A_{l,n,p} \epsilon^{(t)}_{ x_{n,p,m} \rightarrow f_{l,m}}\hat{h}^{(t)}_{ n,p,m}  + \frac{1}{L}   \tilde{z}^{(t-1)}_{f_{l,m}}  \\ \label{eq:theorem_z2}
			&\times \sum_{n \in \mathcal{N}}\sum_{p \in \mathcal{P}}  \frac{\epsilon^{(t)}_{ x_{n,p,m} \rightarrow f_{l,m}} \beta_n}{\hat{\tau}^{(t)}_{m} + \beta_n} + o(1), \text{as $L,N \rightarrow \infty$}, \\  \nonumber
			&\hat{r}^{(t)}_{x_{n,p,m}}=\sum_{l \in \mathcal{L}}A^{*}_{l,n,p}\left(\tilde{z}^{(t)}_{f_{l,m}} + A_{l,n,p}\epsilon^{(t)}_{ x_{n,p,m} \rightarrow f_{l,m}} \hat{h}^{(t)}_{n,p,m}\right)    \\ \label{eq:theorem_r2}
			&+ o(1), \text{as $L,N \rightarrow \infty$},\\ \nonumber
&\epsilon^{(t)}_{ x_{n,p,m} \rightarrow f_{l,m}} = \frac{1}{1+\frac{(1-\lambda^{(t)}_{n,p,m})f_{\mathcal{CN}}(0;\hat{r}^{(t-1)}_{x_{n,p,m}}+ o(1),\hat{\tau}^{(t)}_{m})}{\lambda^{(t)}_{n,p,m}f_{\mathcal{CN}}(0;\hat{r}^{(t-1)}_{x_{n,p,m} }+ o(1),\hat{\tau}^{(t)}_{m}+ \beta_n)}},\\ \label{eq:epsilon}
			&\text{as $L\rightarrow \infty$}.
		\end{align}
\end{theorem}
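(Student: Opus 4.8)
The plan is to mirror the derivation of Theorem~\ref{theo:algorithm} in Appendix~\ref{appendix:theorem1}, but to carry it out in the \emph{factored} representation $\hat{x}=\epsilon\,\hat{h}$ rather than tracking $\hat{x}$ directly. The starting observation is that, by the definition of $\eta_n$, one has the exact factorization $\eta_n(\hat{r},\hat{\sigma},\lambda)=\epsilon\cdot\hat{h}$ with $\hat{h}=\tfrac{\beta_n\hat{r}}{\hat{\sigma}+\beta_n}$ and $\epsilon$ the Bernoulli factor in (\ref{eq:epsilon_origin}); matching time indices via (\ref{eq:mean_x_f_x}), (\ref{eq:lemma_h}), and (\ref{eq:epsilon_origin}) gives the exact identity $\hat{x}^{(t)}_{x_{n,p,m}\rightarrow f_{l,m}}=\epsilon^{(t)}_{x_{n,p,m}\rightarrow f_{l,m}}\hat{h}^{(t)}_{x_{n,p,m}\rightarrow f_{l,m}}$, hence $\tilde{z}^{(t)}_{f_{l,m}}$ in (\ref{eq:new_z2}) coincides exactly with $z^{(t)}_{f_{l,m}}$ in (\ref{eq:new_z}). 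This identity lets me re-use the analogous residual relations established in Appendix~\ref{appendix:theorem1}: with $\hat{\sigma}^{(t)}_{x_{n,p,m}\rightarrow f_{l,m}}\approx\hat{\sigma}^{(t)}_{x_{n,p,m}}$ and $\hat{\gamma}^{(t)}_{f_{l,m}\rightarrow x_{n,p,m}}\approx\hat{\tau}^{(t)}_{m}$ from Lemma~\ref{theo:tau}, the cavity/node residual gap obeys $\hat{z}^{(t-1)}_{f_{l,m}\rightarrow x_{n,p,m}}=\tilde{z}^{(t-1)}_{f_{l,m}}+o(1)$ and the mean gap obeys $\hat{r}^{(t-1)}_{x_{n,p,m}\rightarrow f_{l,m}}-\hat{r}^{(t-1)}_{x_{n,p,m}}=-\tfrac{\hat{\sigma}^{(t-1)}_{x_{n,p,m}}}{\hat{\tau}^{(t-1)}_{m}}A^{*}_{l,n,p}\tilde{z}^{(t-1)}_{f_{l,m}}+o(1)$, since the two differ only in the single excluded summand $b=l$.

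I would prove (\ref{eq:epsilon}) first, as it is used in the other two identities. Starting from (\ref{eq:epsilon_origin}), I replace the message quantities inside $\epsilon^{(t)}_{x_{n,p,m}\rightarrow f_{l,m}}$ by node quantities: $(\hat{\sigma}^{(t-1)}_{x_{n,p,m}\rightarrow f_{l,m}})^{-1}=(\hat{\sigma}^{(t-1)}_{x_{n,p,m}})^{-1}+o(1)$ and $\hat{\sigma}^{(t-1)}_{x_{n,p,m}}=\hat{\tau}^{(t-1)}_{m}+o(1)$ from Lemma~\ref{theo:tau}, together with $\hat{r}^{(t-1)}_{x_{n,p,m}\rightarrow f_{l,m}}=\hat{r}^{(t-1)}_{x_{n,p,m}}+o(1)$ (the mean gap above is $\mathcal{O}(L^{-1/2})$ because $A^{*}_{l,n,p}=\mathcal{O}(L^{-1/2})$). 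Continuity of $f_{\mathcal{CN}}(0;\cdot,\cdot)$ in its arguments then propagates these $o(1)$ perturbations into (\ref{eq:epsilon}). The upshot I need downstream is that, to leading order, $\epsilon^{(t)}_{x_{n,p,m}\rightarrow f_{l,m}}$ is independent of $l$.

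For (\ref{eq:theorem_r2}) I start from (\ref{eq:lemma_r_x}); using $\hat{\sigma}^{(t)}_{x_{n,p,m}}/\hat{\gamma}^{(t)}_{f_{l,m}\rightarrow x_{n,p,m}}=1+o(1)$ (Lemma~\ref{theo:tau}) reduces it to $\hat{r}^{(t)}_{x_{n,p,m}}=\sum_{l}A^{*}_{l,n,p}\hat{z}^{(t)}_{f_{l,m}\rightarrow x_{n,p,m}}+o(1)$. Substituting $\hat{z}^{(t)}_{f_{l,m}\rightarrow x_{n,p,m}}=\tilde{z}^{(t)}_{f_{l,m}}+A_{l,n,p}\epsilon^{(t)}_{x_{n,p,m}\rightarrow f_{l,m}}\hat{h}^{(t)}_{x_{n,p,m}\rightarrow f_{l,m}}$ (which is exact, from (\ref{eq:mean_Z_f_x}), (\ref{eq:new_z2}), and the factorization) and then replacing the message mean $\hat{h}^{(t)}_{x_{n,p,m}\rightarrow f_{l,m}}$ by the node mean $\hat{h}^{(t)}_{n,p,m}$ yields (\ref{eq:theorem_r2}); the replacement costs $\sum_{l}|A_{l,n,p}|^{2}(\hat{h}^{(t)}_{x_{n,p,m}\rightarrow f_{l,m}}-\hat{h}^{(t)}_{n,p,m})=\mathcal{O}(L^{-1/2})=o(1)$, since each summand is $\mathcal{O}(L^{-3/2})$. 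For (\ref{eq:theorem_z2}) I expand $\tilde{z}^{(t)}_{f_{l,m}}$ from (\ref{eq:new_z2}) and apply the first-order Taylor expansion (\ref{eq:taylor_h}) to $\hat{h}^{(t)}_{x_{n,p,m}\rightarrow f_{l,m}}$ while holding the message factor $\epsilon^{(t)}_{x_{n,p,m}\rightarrow f_{l,m}}$ fixed; the zeroth-order term reproduces the data-fit term $\sum_{n,p}A_{l,n,p}\epsilon^{(t)}_{x_{n,p,m}\rightarrow f_{l,m}}\hat{h}^{(t)}_{n,p,m}$, and the first-order term, after inserting the mean gap from the first paragraph, becomes $\tfrac{\hat{\sigma}^{(t-1)}_{x_{n,p,m}}}{\hat{\tau}^{(t-1)}_{m}}\tilde{z}^{(t-1)}_{f_{l,m}}\sum_{n,p}|A_{l,n,p}|^{2}\tfrac{\epsilon^{(t)}_{x_{n,p,m}\rightarrow f_{l,m}}\beta_n}{\hat{\sigma}^{(t-1)}_{x_{n,p,m}}+\beta_n}$. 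Invoking (\ref{eq:epsilon}) to make $\epsilon$ independent of $l$, the concentration (\ref{eq:appro_A_h}) (applied with the bounded coefficient $\epsilon^{(t)}_{n,p,m}$) converts $\sum_{n,p}|A_{l,n,p}|^{2}(\cdot)$ into $\tfrac{1}{L}\sum_{n,p}(\cdot)+o(1)$, and $\hat{\sigma}^{(t-1)}_{x_{n,p,m}}=\hat{\tau}^{(t-1)}_{m}+o(1)$ collapses the prefactors to $1$, producing the Onsager term of (\ref{eq:theorem_z2}).

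The \textbf{main obstacle} is the bookkeeping of the different error scales --- $o(1)$, $\mathcal{O}(1/L)$, $\mathcal{O}(1/N)$, and the $\mathcal{O}(L^{-1/2})$ message-minus-node gaps --- through the factored expansion, and in particular justifying that the extra factor $\epsilon^{(t)}_{x_{n,p,m}\rightarrow f_{l,m}}\in[0,1]$ can be pulled through the concentration step (\ref{eq:appro_A_h}): this requires first promoting $\epsilon$ to its $l$-independent node form via (\ref{eq:epsilon}) so that only the fluctuation of $|A_{l,n,p}|^{2}$ about $1/L$ drives the averaging, and confirming that the resulting perturbation of a bounded, Lipschitz coefficient contributes only at $o(1)$ uniformly in $l$. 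The remaining steps are routine continuity and order-of-magnitude estimates identical in spirit to Appendix~\ref{appendix:theorem1}.
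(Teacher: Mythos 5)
Your proposal is correct and follows essentially the same route as the paper's proof in Appendix~\ref{appendix:second}: both rest on the exact factorization $\hat{x}^{(t)}_{x_{n,p,m}\rightarrow f_{l,m}}=\epsilon^{(t)}_{x_{n,p,m}\rightarrow f_{l,m}}\hat{h}^{(t)}_{x_{n,p,m}\rightarrow f_{l,m}}$ (so that $\tilde{z}^{(t)}_{f_{l,m}}$ is exactly the residual of the effective-channel form), the cavity-versus-node gap $\hat{r}^{(t)}_{x_{n,p,m}\rightarrow f_{l,m}}-\hat{r}^{(t)}_{x_{n,p,m}}=-A^{*}_{l,n,p}\tilde{z}^{(t)}_{f_{l,m}}+\mathcal{O}\left(\frac{1}{L}\right)$, continuity of the Gaussian density to obtain (\ref{eq:epsilon}), and substitution of the Taylor expansion (\ref{eq:taylor_h}) into (\ref{eq:new_z2}) and into the expression for $\hat{r}^{(t)}_{x_{n,p,m}}$, using (\ref{eq:appro_A_h}), $\sum_{l\in\mathcal{L}}|A_{l,n,p}|^{2}=1$, and $A_{l,n,p}=\mathcal{O}\left(\frac{1}{\sqrt{L}}\right)$ to reach (\ref{eq:theorem_z2}) and (\ref{eq:theorem_r2}). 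Even the step you flag as the main obstacle---pulling the bounded weight $\epsilon^{(t)}_{x_{n,p,m}\rightarrow f_{l,m}}\in[0,1]$ through the concentration step---is handled no more delicately in the paper, which simply invokes (\ref{eq:appro_A_h}) with that weight inserted, so your extra justification only strengthens the argument.
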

\begin{proof}
See Appendix~\ref{appendix:second}.
\end{proof}

Theorem~\ref{theo:algorithm2} states that, under certain conditions, the parameter $\tilde{z}^{(t)}_{f_{l,m} }$ evolves according to (\ref{eq:theorem_z2}), the parameter $\hat{r}^{(t)}_{x_{n,p,m}}$ in (\ref{eq:lemma_r_x}) can be expressed in terms of $\tilde{z}^{(t)}_{f_{l,m}}$ and estimate $\hat{h}^{(t)}_{ n,p,m}$ according to (\ref{eq:theorem_r2}). Therefore, the \textcolor{black}{calculations} of the parameters $\tilde{z}^{(t)}_{f_{l,m}}$, $\hat{r}^{(t)}_{x_{n,p,m}}$ and estimates $\theta^{(t+1)}_n$, $\hat{h}^{(t+1)}_{ n,p,m}$ no longer depend on the parameters in (\ref{eq:mean_Z_f_x}), (\ref{eq:mean_x_f_x}), and (\ref{eq:lemma_r_f_x}).
	
By Theorem~\ref{theo:algorithm2}, for large $L$ and $N$, we approximately have (ignoring $o(1)$):
\begin{align} \label{eq:lambda_n}
	\epsilon^{(t)}_{ x_{n,p,m} \rightarrow f_{l,m}} = \frac{\exp(\theta^{(t)}_n)}{1+\exp(\theta^{(t)}_n)} \triangleq \hat{\lambda}^{(t)}_n.
\end{align}	
Furthermore, by Lemma~\ref{theo:tau} and Theorem~\ref{theo:algorithm2}, i.e., substituting $\hat{\sigma}^{(t)}_{x_{n,p,m}}$ in (\ref{eq:theorem_sigma}) and $\hat{r}^{(t)}_{x_{n,p,m}}$ in (\ref{eq:theorem_r2}) into (\ref{eq:lemma_h}), (\ref{eq:lemma_llr}), (\ref{eq:theorem_z2}), and (\ref{eq:lambda_n}), for large $L$ and $N$, we approximately have (ignoring $o(1)$):

\begin{align}
\nonumber
&\theta^{(t+1)}_n = \log \bigg(\frac{\rho_n}{(1-\rho_n)}\\ \label{eq:theta2}
&\times \frac{\prod\limits_{m \in \mathcal{M}} \prod\limits_{p \in \mathcal{P}} f_{\mathcal{CN}}(0;\hat{\lambda}^{(t)}_n   \hat{h}^{(t)}_{n,p,m} + \sum\limits_{l \in \mathcal{L}}A^{*}_{l,n,p}  \tilde{z}^{(t)}_{f_{l,m}},\hat{\tau}^{(t)}_m + \beta_n)}{\prod\limits_{m \in \mathcal{M}} \prod\limits_{p \in \mathcal{P}} f_{\mathcal{CN}}(0;\hat{\lambda}^{(t)}_n \hat{h}^{(t)}_{n,p,m} + \sum\limits_{l \in \mathcal{L}}A^{*}_{l,n,p}  \tilde{z}^{(t)}_{f_{l,m}},\hat{\tau}^{(t)}_m)}\bigg),\\
\label{eq:version2_lambda}
& \hat{\lambda}^{(t+1)}_{n} = \frac{\exp(\theta^{(t+1)}_n)}{1+\exp(\theta^{(t+1)}_n)},\\  \label{eq:version2_theorem_x}
	&\hat{h}^{(t+1)}_{n,p,m}=  \frac{\beta_n( \hat{\lambda}^{(t)}_n\hat{h}^{(t)}_{n,p,m} + \sum_{l \in \mathcal{L}}A^{*}_{l,n,p}\tilde{z}^{(t)}_{f_{l,m}})}{\hat{\tau}^{(t)}_m + \beta_n},\\ \nonumber 
	&\tilde{z}^{(t+1)}_{f_{l,m}} = y_{l,m} - \sum_{n \in \mathcal{N}}\sum_{p \in \mathcal{P}}A_{l,n,p} \hat{\lambda}^{(t+1)}_{n}\hat{h}^{(t+1)}_{n,p,m}\\ \label{eq:version2_theorem_z}
	&+ \frac{1}{L} \tilde{z}^{(t)}_{f_{l,m}} \sum_{n \in \mathcal{N}}\sum_{p \in \mathcal{P}} \hat{\lambda}^{(t+1)}_{n} \frac{\beta_n}{\hat{\tau}^{(t)}_m + \beta_n} , 
\end{align}
\begin{align} \label{eq:tau2}
	&\hat{\tau}^{(t+1)}_{m} = \frac{1}{L} \sum_{l \in \mathcal{L}} |\tilde{z}^{(t+1)}_{f_{l,m}}|^2.
\end{align}
Therefore, we can iteratively update the estimates $\theta^{(t)}_n$ and $\hat{h}^{(t)}_{n,p,m}$, according to (\ref{eq:theta2}) and (\ref{eq:version2_theorem_x}), respectively, based on the updates of parameters $\hat{\lambda}^{(t)}_{n}$, $\tilde{z}^{(t)}_{f_{l,m}}$ and $\hat{\tau}^{(t)}_{m}$.  Let $\boldsymbol{\theta}^{(t)} \triangleq \left(\theta^{(t)}_n\right)_{n\in\mathcal{N}}$, $\hat{\boldsymbol{\lambda}}^{(t)}\triangleq (\hat{\lambda}^{(t)}_{n})_{ n\in\mathcal{N}}$, $\hat{\mathbf{H}}^{(t)} \triangleq (\hat{h}^{(t)}_{n,p,m})_{ n\in\mathcal{N}, p \in \mathcal{P}, m \in\mathcal{M}}$, $\tilde{\mathbf{Z}}^{(t)} \triangleq (\tilde{z}^{(t)}_{f_{l,m}})_{l\in\mathcal{L}, m \in \mathcal{M}}$, and $\hat{\boldsymbol{\tau}}^{(t)} \triangleq (\hat{\tau}^{(t)}_{m})_{ m \in\mathcal{M}}$. Their elements are computed in parallel. 

\subsection{AMP-A-AC}\label{sec:al2}
In this part, we formally present the \emph{AMP-A-AC} algorithm based on the updates in (\ref{eq:theta2})-(\ref{eq:tau2}). First, we initialize the estimates $\hat{\mathbf{H}}^{(0)}$ and parameters $\hat{\boldsymbol{\lambda}}^{(0)}$ and $\tilde{\mathbf{Z}}^{(0)}$. 
Then, at each iteration, we update the estimates $\boldsymbol{\theta}^{(t+1)}$ and $\hat{\mathbf{H}}^{(t+1)}$ according to (\ref{eq:theta2}) and (\ref{eq:version2_theorem_x}), respectively, and update the parameters $\hat{\boldsymbol{\lambda}}^{(t)}$, $\tilde{\mathbf{Z}}^{(t)}$, and $\hat{\boldsymbol{\tau}}^{(t)}$ according to (\ref{eq:version2_lambda}), (\ref{eq:version2_theorem_z}), and (\ref{eq:tau2}), respectively (c.f., Steps 1-6 of Algorithm \ref{algorithm2}). 
Let 
\begin{align} \label{eq:theo:a}
	\hat{a}^{(t)}_n\triangleq\begin{cases}
		0,\ \theta^{(t)}_n < 0,\\
		1,\ \theta^{(t)}_n \geq 0,\\
	\end{cases}
\end{align}
and $\hat{\mathbf{X}}^{(t)} \triangleq (\hat{a}^{(t)}_n \hat{h}^{(t)}_{n,p,m})_{ n\in\mathcal{N}, p \in \mathcal{P}, m \in\mathcal{M}}$.
Then, as in Section~\ref{sec:al1}, we track the best estimates $\hat{\mathbf{a}}^{(t)}_{best} \triangleq \left(\hat{a}^{best,(t)}_{n}\right)_{n\in\mathcal{N}}$ and $\hat{\mathbf{H}}^{(t)}_{best} \triangleq \left(\hat{h}^{best,(t)}_{n,p,m}\right)_{ n\in\mathcal{N}, p \in \mathcal{P}, m \in\mathcal{M}}$ via $f(\hat{\mathbf{X}}^{(t)})$ (c.f., Steps 7-12 of Algorithm \ref{algorithm2}).
When some stopping criteria are satisfied, we can stop the \emph{AMP-A-AC} algorithm.
For all $n\in \mathcal{N}$, the activity detection for device $n$ is given by $\hat{a}^{best,(t)}_{n}$. 
For all $n \in \mathcal{N}$ with $\hat{a}^{best,(t)}_{n}=1$, the channel estimates of device $n$ are given by $\hat{h}^{best,(t)}_{n,p,m}, p \in \mathcal{P}, m \in \mathcal{M}$.
The details of \emph{AMP-A-AC} are presented in Algorithm~\ref{algorithm2}.
\begin{algorithm}[t] {\small\caption{\emph{AMP-A-AC}} \label{algorithm2}
		{\bf Input:}
		received signals $\mathbf{Y}$ and measurement matrix $\mathbf{A}$.\\
		{\bf Output:}
		estimates of activities $\mathbf{a}^{(t)}_{best}$ and actual channels $\hat{\mathbf{H}}^{(t)}_{best}$\\
		{\bf Initialize:} set $\hat{\mathbf{a}}^{(0)} = \mathbf{0}$, $\hat{\mathbf{H}}^{(0)} = \mathbf{0}$, $\hat{\mathbf{a}}^{(0)}_{best} = \hat{\mathbf{a}}^{(0)}$, $\hat{\mathbf{H}}^{(0)}_{best}=\hat{\mathbf{H}}^{(0)}$, $\tilde{\mathbf{Z}}^{(0)} = \mathbf{Y}$, $f^{(t)}_{best}=\frac{1}{2}\|\mathbf{Y}\|^2_F$, $t=0$.\\
		{\bf repeat}
		\begin{algorithmic}[1]
			\STATE \quad Calculate $\hat{\tau}^{(t)}_{m},m\in\mathcal{M}$ in parallel according to (\ref{eq:tau2}).
			\STATE \quad Calculate $\theta^{(t+1)}_{n},n\in\mathcal{N}$ in parallel according to (\ref{eq:theta2}).			
			\STATE \quad Calculate $\hat{\lambda}^{(t+1)}_{n},n\in\mathcal{N}$ in parallel according to (\ref{eq:version2_lambda}).
			\STATE \quad Calculate $\hat{h}^{(t+1)}_{n,p,m},n\in\mathcal{N},p\in\mathcal{P},m\in\mathcal{M}$ in parallel according to (\ref{eq:version2_theorem_x}).
			\STATE \quad Calculate $\tilde{z}^{(t+1)}_{f_{l,m}},l\in\mathcal{L},m\in\mathcal{M}$ in parallel according to (\ref{eq:version2_theorem_z}).
			\STATE \quad Calculate $\hat{a}_n^{(t+1)},n\in\mathcal{N}$ in parallel according to (\ref{eq:theo:a}).	
			\STATE \quad Calculate $f(\hat{\mathbf{X}}^{(t+1)})$ based on $\hat{\mathbf{X}}^{(t)} \triangleq (\hat{a}^{(t)}_n \hat{h}^{(t)}_{n,p,m})_{ n\in\mathcal{N}, p \in \mathcal{P}, m \in\mathcal{M}}$ according to (\ref{eq:gp}).
			\STATE \quad \textbf{If} $f(\hat{\mathbf{X}}^{(t+1)}) < f^{(t)}_{best}$
			\STATE \quad\quad   set $f^{(t+1)}_{best}=f(\hat{\mathbf{X}}^{(t+1)})$, $\hat{\mathbf{a}}^{(t+1)}_{best}=\hat{\mathbf{a}}^{(t+1)}$, and $\hat{\mathbf{H}}^{(t+1)}_{best}=\hat{\mathbf{H}}^{(t+1)}$.
			\STATE \quad \textbf{else}
			\STATE \quad\quad   set $f^{(t+1)}_{best}=f^{(t)}_{best}$, $\hat{\mathbf{a}}^{(t+1)}_{best}=\hat{\mathbf{a}}^{(t)}$, and $\hat{\mathbf{H}}^{(t+1)}_{best}=\hat{\mathbf{H}}^{(t)}$.
			\STATE \quad\textbf{end}
			\STATE \quad Set $t = t+1$.
		\end{algorithmic}
		{\bf until} some stopping criteria are satisfied.}
\end{algorithm}


Now, we analyze the computational complexity of \emph{AMP-A-AC} via FLOP count.
	The dominant terms of the per-iteration flop counts of Steps 1, 2, 3, 4, 5, and 6-12 of \emph{AMP-A-AC} are $2LM$, $10NPM$, $2N$,  $3NPM$, $2LMNP + NPM^2$, and $2NPM$, respectively. Therefore, the
	dominant term and order of the per-iteration computational complexity of \emph{AMP-A-AC} are $2LMNP + NPM^2$ and $\mathcal{O}(LNPM)$, respectively.


\section{Comparisons}\label{secVII}
In this section, we compare the proposed \emph{AMP-A-EC} and \emph{AMP-A-AC} with  existing algorithms and their extensions \cite{Chen18TSP,Liu18TSP,Senel18TCOM,JiaTWC,GaoAMP_OFDM_TWC,OMP}. 

\subsection{Comparisons between AMP-A-EC and AMP-A-AC}\label{sec:complexity}
In this part, we compare \emph{AMP-A-EC} and \emph{AMP-A-AC}. 
Their main differences lie in the residual calculation and channel estimation. The residual definition and calculation of \emph{AMP-A-EC}, given in (\ref{eq:new_z}) and (\ref{eq:version1_theorem_z}), respectively, follow those of the classic AMP method \cite[Eq. (1.1)]{MontanariSE} and thus resemble those of existing AMP-based algorithms \cite{DonohoAMP,Chen18TSP,Liu18TSP,Senel18TCOM,GaoAMP_OFDM_TWC}, enabling the analysis based on the classic SE technique \cite{MontanariSE}.
In contrast, \emph{AMP-A-AC} employs a slightly different residual definition and calculation, given in (\ref{eq:new_z2}) and (\ref{eq:version2_theorem_z}), and hence does not follow the SE \cite[Eq. (1.4)]{MontanariSE}, making the analysis intractable.

Furthermore, these differences lead to different computation complexities for residual calculation and channel estimation. 
Compared to \emph{AMP-A-AC}, \emph{AMP-A-EC} additionally requires calculating $\eta_n(\cdot)$ and $\eta_n^{'}(\cdot)$. In addition, the $NPM$ parameters of \emph{AMP-A-EC}, $\lambda^{(t)}_{n,p,m}, n\in\mathcal{N}, p\in\mathcal{P}, m\in\mathcal{M}$ in (\ref{eq:version1_lambda}), play a role \textcolor{black}{similar to} the $N$ parameters of \emph{AMP-A-AC}, $\hat{\lambda}^{(t)}_n, n\in\mathcal{N}$ in (\ref{eq:version2_lambda}), but have more complex expressions and a larger number. 
As a result, \emph{AMP-A-EC} exhibits a higher computational complexity than \emph{AMP-A-AC}, even though their computational complexities in order are identical.

\subsection{Comparisons between Proposed and Existing AMP-based Algorithms} \label{sec:comb}
In this part, we compare \emph{AMP-A-EC} and \emph{AMP-A-AC} with \emph{AMP-FL-ext} \cite{Chen18TSP,Liu18TSP,Senel18TCOM} and \emph{AMP-FS} \cite{GaoAMP_OFDM_TWC}.
\emph{AMP-FL-ext} represents the extension of the AMP-based algorithms in \cite{Chen18TSP,Liu18TSP,Senel18TCOM} for activity detection and time-domain effective channel estimation in a narrowband system with flat fading to an OFDM-based wideband system with frequency-selective fading. \emph{AMP-FS} \cite{GaoAMP_OFDM_TWC} employs the AMP method for activity detection and frequency-domain effective channel estimation for each subcarrier.

\emph{AMP-A-EC} and \emph{AMP-A-AC} capture the dependence of $x_{n,p,m}, p\in \mathcal{P},m \in \mathcal{M}$, for all $n\in \mathcal{N}$, since it is obtained based on the exact factorization of the joint distribution in (\ref{eq:joint_distribution}), as illustrated in Section \ref{sec:3c}.
In contrast, \emph{AMP-FS} ignores the dependence of frequency-domain effective channels caused by the shared device activities and the underlying time-domain frequency-selective fading channels.
Specifically, for all $n \in \mathcal{N}$, the frequency-domain channels $\tilde{h}_{n,k,m} = \frac{1}{\sqrt{K}} \sum_{p=1}^{P}h_{n,p,m}e^{-\frac{j2\pi (k-1) (p-1)}{K}}, k\in \mathcal{K},m \in \mathcal{M}$ and frequency-domain effective channels $\tilde{x}_{n,k,m}=a_n \tilde{h}_{n,k,m}, k\in \mathcal{K},m \in \mathcal{M}$ of device $n$ are dependent, and \emph{AMP-FS} is obtained based on an approximate factorization $p(\tilde{\mathbf{X}})\approx \prod_{n \in \mathcal{N}}\prod_{k \in \mathcal{K}}\prod_{m \in \mathcal{M}} p(\tilde{x}_{n,k,m})$ assuming the independence of $\tilde{x}_{n,k,m}, k\in \mathcal{K},m \in \mathcal{M}$, where $\tilde{\mathbf{X}}\triangleq(\tilde{x}_{n,k,m})_{n \in \mathcal{N}, k\in \mathcal{K},m \in \mathcal{M}}$.
Besides, \emph{AMP-FL-ext} ignores the dependence of time-domain effective channels caused by the shared device activities. In particular, for all $n \in \mathcal{N}$, the time-domain 
effective channels $\mathbf{X}_{(n-1)P+p,1:M},p \in \mathcal{P}$ are dependent, and \emph{AMP-FL-ext} 
is obtained based on an approximate factorization $p(\mathbf{X})\approx \prod_{n \in \mathcal{N}}\prod_{p \in \mathcal{P}} p(\mathbf{X}_{(n-1)P+p,1:M})$ assuming the independence of $\mathbf{X}_{(n-1)P+p,1:M},p \in \mathcal{P}$.

In addition, \emph{AMP-A-EC} and \emph{AMP-A-AC} approximately obtain the MAP-based device activity detection $\hat{a}^{(t)}_n$ by replacing $\theta_{n}$ in $\hat{a}^{\star}_n(\mathbf{Y})$ in (\ref{eq:detector_true}) with $\theta^{(t)}_{n}$ in (\ref{eq:theta}) and (\ref{eq:theta2}), respectively, for all $n\in\mathcal{N}$. The relationship between $(\hat{a}^{\star}_n(\mathbf{Y}))_{n\in\mathcal{N}}$ and $(\hat{a}^{(t)}_n)_{n\in\mathcal{N}}$ has yet to be explicitly demonstrated. On the contrary, \emph{AMP-FL-ext} and \emph{AMP-FS} approximately obtain the MMSE-based effective channel estimation using the AMP method and then obtain each device's activity detection based on a log-likelihood ratio test and the corresponding channel estimation of each device that is detected to be active \cite{Chen18TSP,Liu18TSP,Senel18TCOM,GaoAMP_OFDM_TWC}. 
	
Finally, \emph{AMP-A-EC} and \emph{AMP-A-AC} are different from \emph{AMP-FL-ext} and \emph{AMP-FS} in the following aspect. First, \emph{AMP-A-EC} and \emph{AMP-A-AC}'s updates for device activity detection involve prior information on device activities, as shown in (\ref{eq:theta}) and (\ref{eq:theta2}), respectively, whereas \emph{AMP-FL-ext} and \emph{AMP-FS}’s updates do not. Second, \emph{AMP-A-EC} and \emph{AMP-A-AC} track the best estimates among all iterates and possibly achieve higher accuracies, whereas \emph{AMP-FL-ext} and \emph{AMP-FS} do not.

\subsection{Computational Complexity Comparisons} \label{sec:comc}
	In this part, we compare the per-iteration computational complexities of \emph{AMP-A-EC}, \emph{AMP-A-AC}, \emph{AMP-FL-ext}, \emph{AMP-FS}, \emph{ML-MMSE}, and \emph{OMP-ext.}, as shown in Table. \ref{tab:table_revised_complexity}. Here, \emph{ML-MMSE} \cite{JiaTWC} is the combination of the extension of MLE-based device activity detection for $Q=1$ in \cite{JiaTWC} to $Q>1$ and {\em MMSE}-based channel estimation for detected active devices. \emph{OMP-ext.} \cite{OMP} is the extension of OMP for activity detection and frequency-domain channel estimation with a single-antenna BS to a multi-antenna BS. From Table. \ref{tab:table_revised_complexity}, we can observe that the per-iteration computational complexities of \emph{AMP-A-EC}, \emph{AMP-A-AC}, and \emph{AMP-FL-ext}, all performing in the time-domain, are independent of $K$ and scale linearly with $P$, and those of \emph{AMP-FS} and \emph{OMP-ext.}, both working in the frequency-domain, are independent of $P$ and scale linearly with $K$.
	Moreover, \emph{AMP-A-EC}, \emph{AMP-A-AC}, and \emph{AMP-FL-ext} have lower per-iteration computational complexities than \emph{AMP-FS} and \emph{OMP-ext.} since $P\ll K$, and \emph{ML-MMSE} has the highest per-iteration computational complexity.
 \begin{table}[t]
	\caption{Per-Iteration Computational Complexity Comparisons. \label{tab:table_revised_complexity}}
	\centering
	\scriptsize
	\begin{tabular}{|c|c|c|}
		\hline
		\textbf{Algorithms} & \textbf{Computational Complexity} \\
		\hline
		\emph{ML-MMSE} \cite{JiaTWC}  & $\mathcal{O}(L^2NP+L^3)$ \\
		\hline
		\emph{AMP-FS} \cite{GaoAMP_OFDM_TWC}, \emph{OMP-ext.} \cite{OMP} & $\mathcal{O}(LNKM)$ \\
		\hline
		\emph{AMP-FL-ext.} \cite{Chen18TSP,Liu18TSP,Senel18TCOM}, \emph{AMP-A-EC}, \emph{AMP-A-AC}& $\mathcal{O}(LNPM)$\\
		\hline
	\end{tabular}
\end{table}

\begin{figure*}[!t]
	\begin{center}
		\subfigure[\scriptsize{Error probability at $L=96$.}\label{fig:iter_error_L96}]
		{\resizebox{5.0cm}{!}{\includegraphics{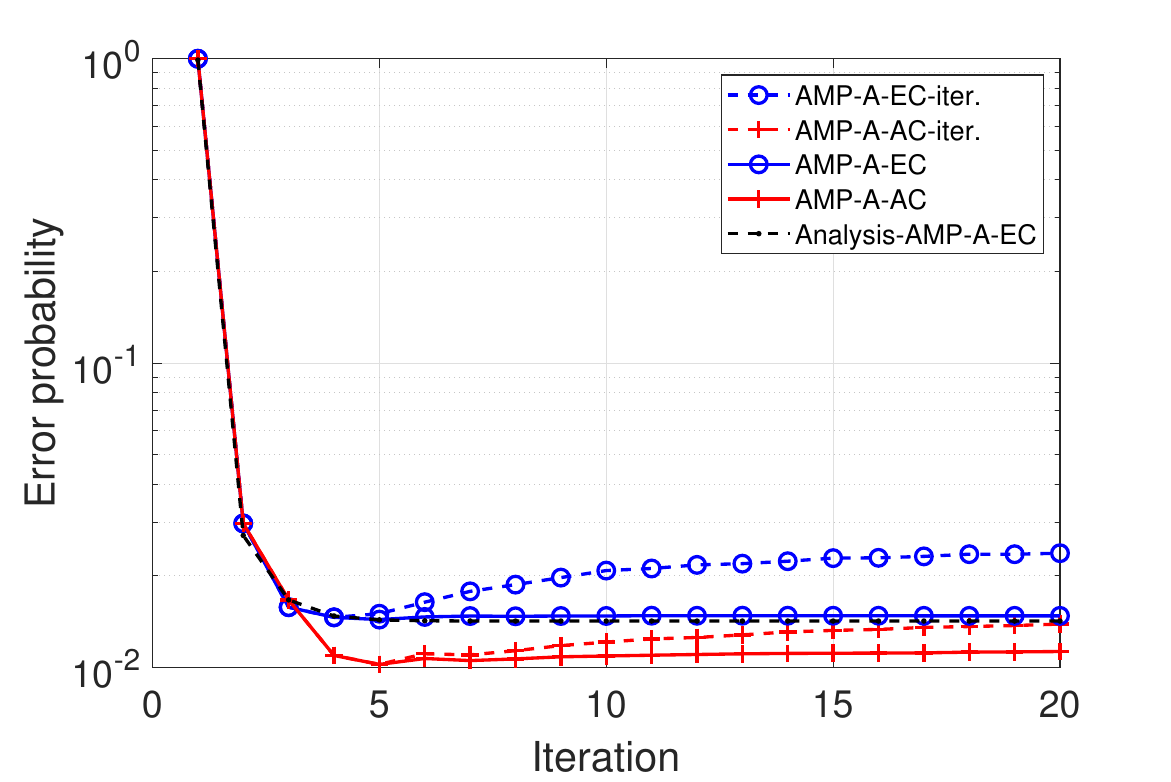}}}\quad
		\subfigure[\scriptsize{Error probability at $L=128$.}\label{fig:iter_error_L128}]
		{\resizebox{5.0cm}{!}{\includegraphics{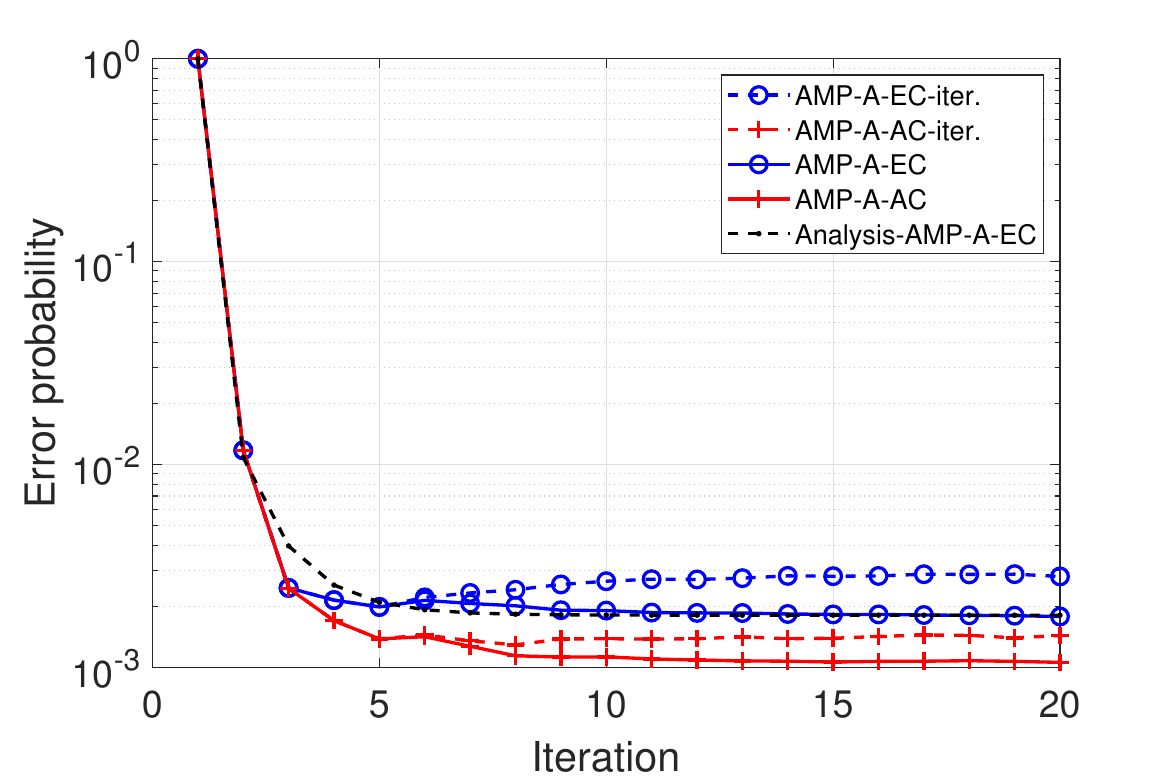}}}
		\subfigure[\scriptsize{Error probability at $L=192$.}\label{fig:iter_error_L192}]
		{\resizebox{5.0cm}{!}{\includegraphics{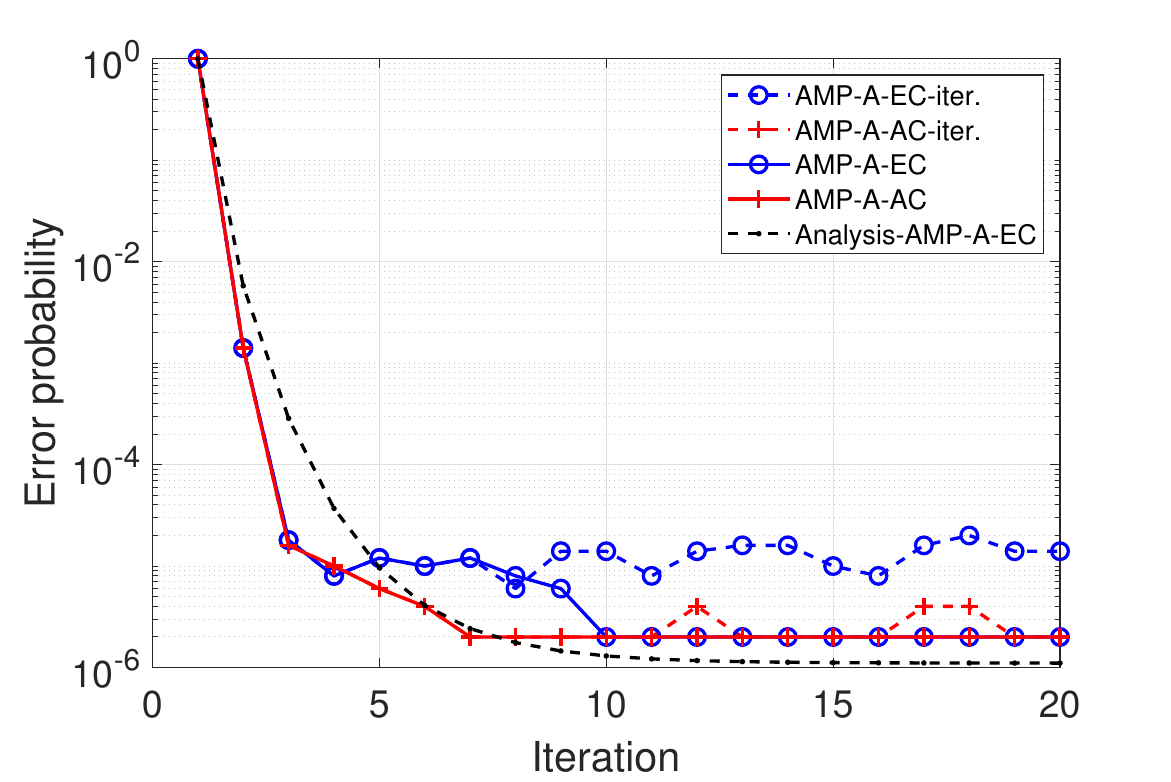}}}
	\end{center}
	\vspace{-3mm}
	\caption{\small{Error probability versus the iteration number at ${\color{black}P_t}=10$ dBm, $M=64$, $N=1000$, and $P=3$ for the c.d. case.}}
	\label{fig:iter_error}
\end{figure*}
\begin{figure*}[!t]
	\begin{center}
		\subfigure[\scriptsize{MSE at $L=96$.}\label{fig:iter_mse_L96}]
		{\resizebox{5.0cm}{!}{\includegraphics{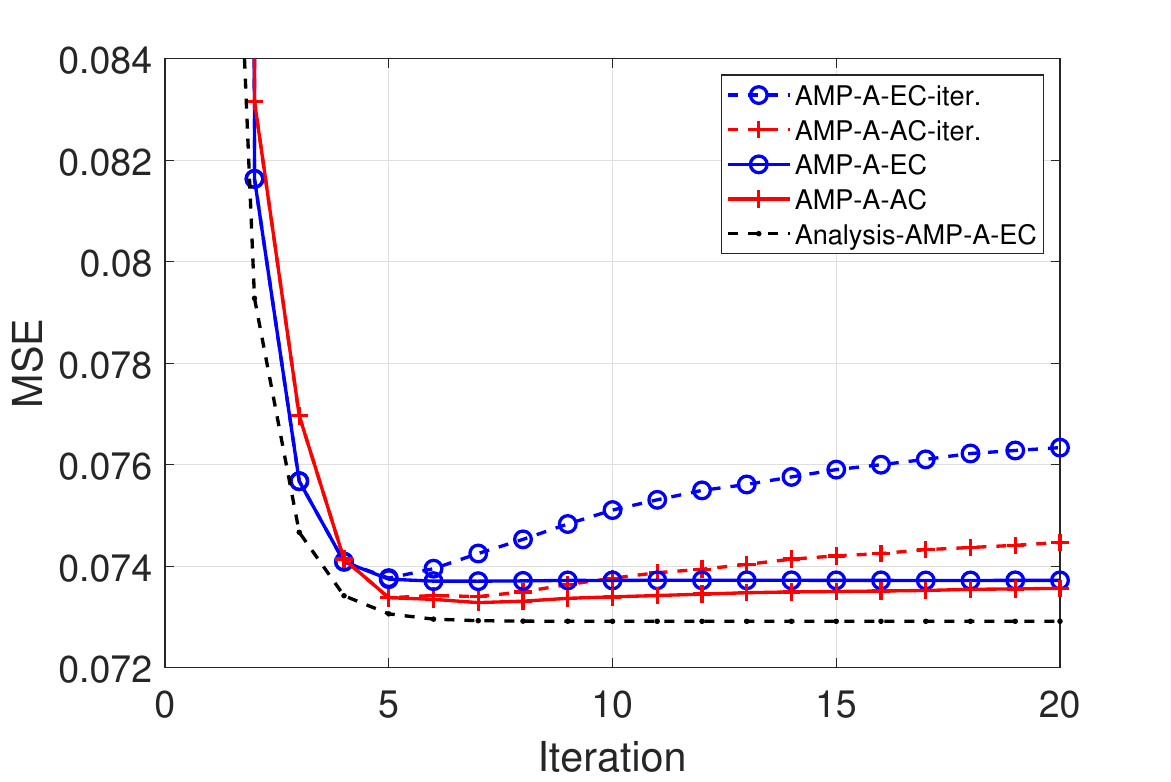}}}\quad
		\subfigure[\scriptsize{MSE at $L=128$.}\label{fig:iter_mse_L128}]
		{\resizebox{5.0cm}{!}{\includegraphics{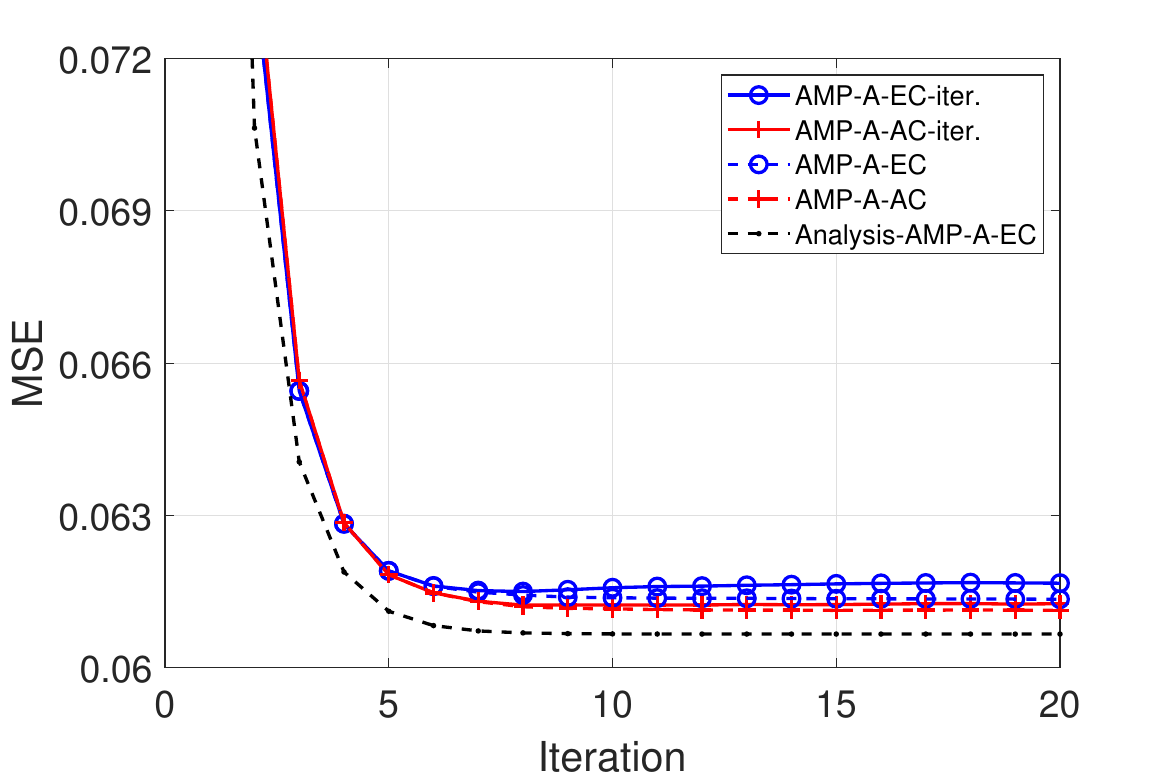}}}
		\subfigure[\scriptsize{MSE at $L=192$.}\label{fig:iter_mse_L192}]
		{\resizebox{5.0cm}{!}{\includegraphics{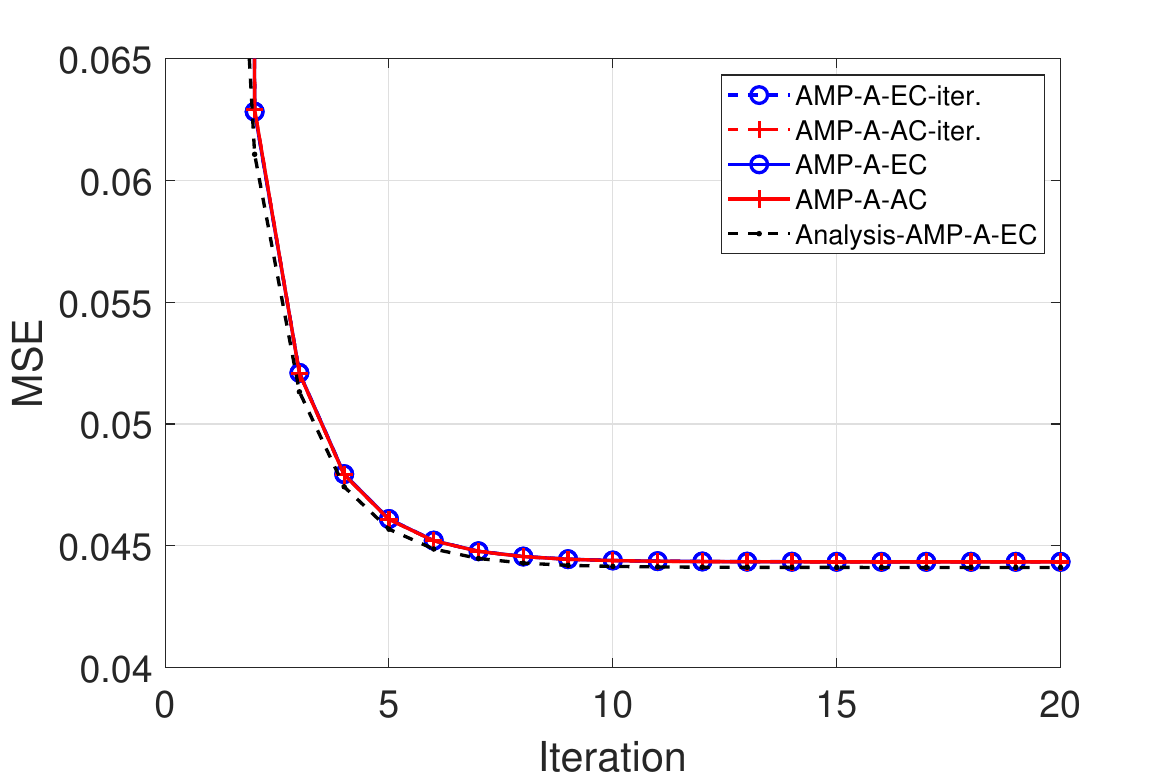}}}
	\end{center}
	\vspace{-3mm}
	\caption{\small{Error probability and MSE versus the iteration number at ${\color{black}P_t}=10$ dBm, $M=64$, $N=1000$, and $P=3$ for the c.d. case.}}
	\label{fig:iter_mse}
\end{figure*}

\section{Numerical results} \label{sec:result}
In this section, we evaluate the accuracies and computation time of \emph{AMP-A-EC} and \emph{AMP-A-AC}. We consider four baseline schemes, i.e., \emph{AMP-FL-ext} \cite{Liu18TSP}, \emph{AMP-FS} \cite{GaoAMP_OFDM_TWC},  \emph{OMP-ext.} \cite{OMP}, and \emph{ML-MMSE} \cite{JiaTWC}, which have been illustrated in Section \ref{sec:comb} and Section \ref{sec:comc}. Note that, \emph{AMP-A-EC}, \emph{AMP-A-AC}, \emph{AMP-FL-ext}, \emph{AMP-FS}, and \emph{ML-MMSE} stop at the 20-th iteration, and \emph{OMP-ext.} adopts the stopping criteria in \cite{OMP}.   

In the simulation, the noise power spectral density is set as -174 dBm/Hz. The number of subcarriers and the subcarrier spacing are set as $K=32$ and 30 kHz, respectively. Thus, the noise power is -114.18 dBm, i.e., $\sigma^2 = 10^{-11.418}$. We assume that large-scale fading is dominated by pathloss, and the pathloss is modeled as $\beta_n =10^{\left(-\eta \log_{10}\left(\frac {4\pi d_n}{\lambda}\right)\right)}$ \cite{TANsca}, where $d_n$ represents the distance between device $n$ and the BS, the pathloss exponent $\eta$ is 2.85, and the wavelength $\lambda$ is 0.086m.
We consider two distance cases: constant distance (c.d.) case and random distance (r.d.) case. In the c.d. case, $d_n=70$ m, $n \in \mathcal{N}$. In the r.d. case, $d_n, n \in \mathcal{N}$ are randomly distributed in [50, 100] m. We denote the transmitted power by \textcolor{black}{$P_t$}.
The average received signal-to-noise ratio (SNR) is given by $\frac{1}{N}\sum_{n \in \mathcal{N}} \frac{{\color{black}P_t} \beta_n }{\sigma^2}$. 

Furthermore, in the simulation, we adopt the following setup. We generate pilots $\tilde{\mathbf s}_{q,n}$, $q\in \mathcal{Q}, n \in \mathcal{N}$ according to i.i.d. $\mathcal{CN}(0,1)$ and normalize $\sum_{q \in \mathcal{Q}} \|\tilde{\mathbf s}_{q,n}\|^2_2$ to $K$ for all $n \in \mathcal{N}$ \cite{Chen18TSP,Liu18TSP,JiaTWC}.
We generate $500$ realizations for $\alpha_n,n\in\mathcal N$, $ h_{n,p,m}$, $n\in\mathcal N$, $p\in\mathcal P$, $m\in\mathcal M$, Gaussian pilots, and $d_n, n \in \mathcal{N}$ only for the r.d. case and evaluate the average error probability and MSE over all $500$ realizations. We set $\rho_n=\rho=0.1,n\in\mathcal{N}$. Unless otherwise stated, we choose ${\color{black}P_t}=10$ dBm, $L=128$, $M=64$, $N=1000$, and $P=3$.
Our simulation environment is MATLAB R2022b. For the hardware, the CPU is Intel i7-13700KF. 

\begin{figure*}[!t]
	\begin{center}
		\subfigure[\scriptsize{Error probability.}\label{fig:error_L}]
		{\resizebox{5.0cm}{!}{\includegraphics{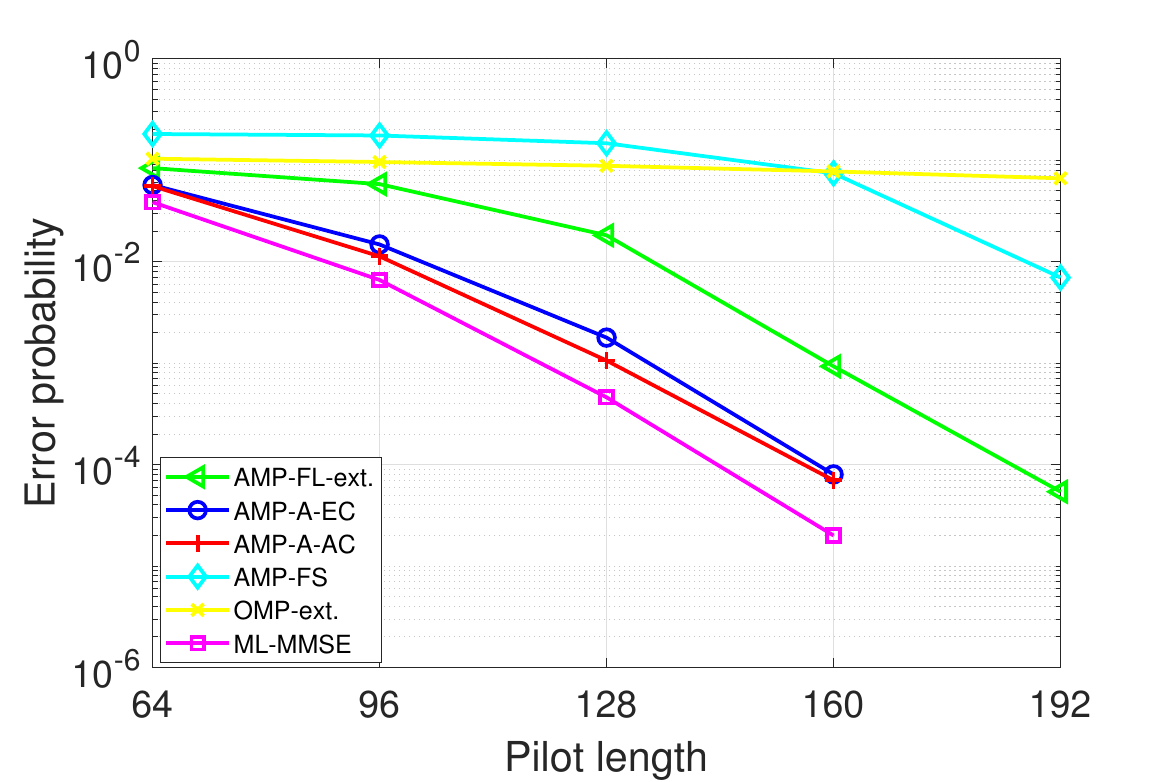}}}
		\subfigure[\scriptsize{False alarm probability.}\label{fig:FA_L}]
		{\resizebox{5.0cm}{!}{\includegraphics{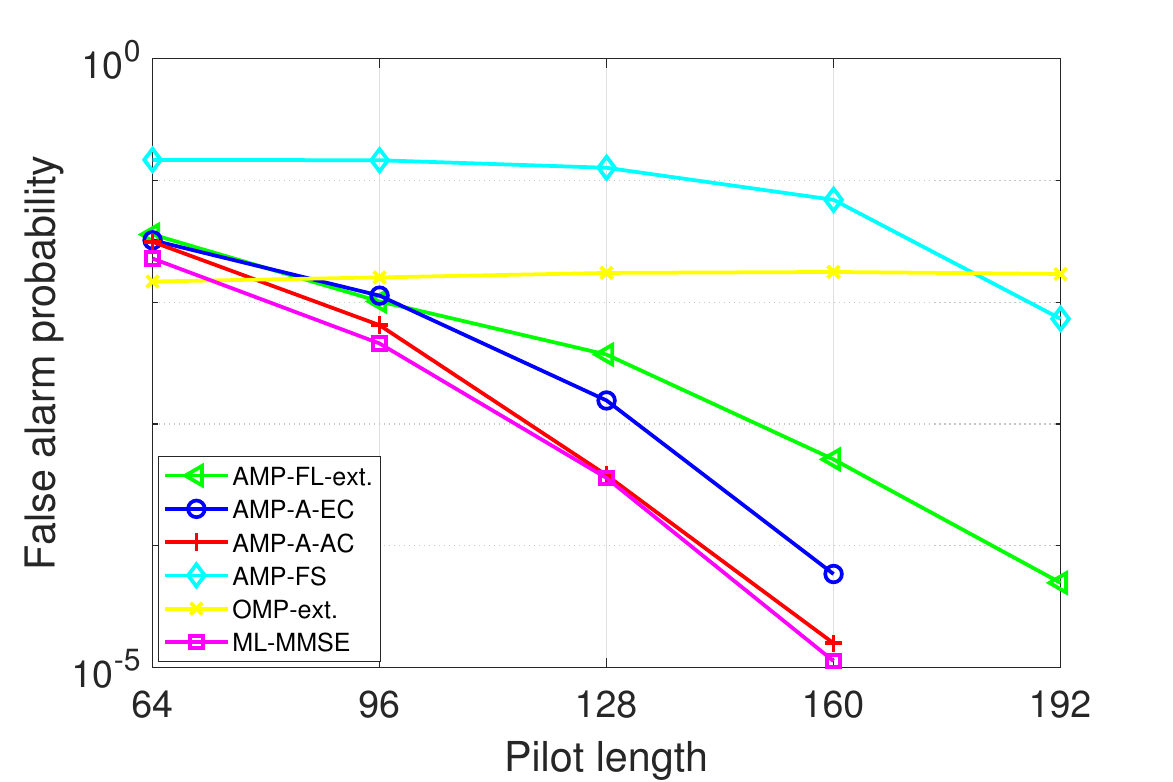}}}
		\subfigure[\scriptsize{Missed detection probability.}\label{fig:MD_L}]
		{\resizebox{5.0cm}{!}{\includegraphics{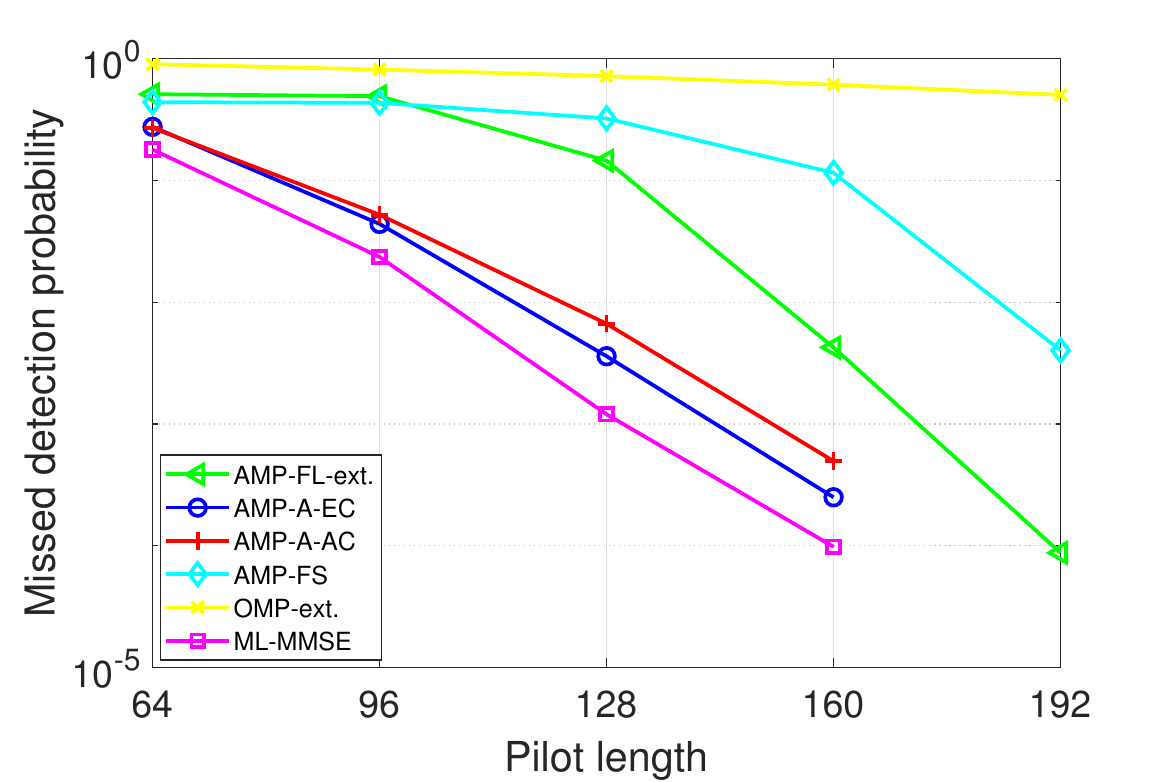}}}
		\subfigure[\scriptsize{MSE.}\label{fig:mse_L}]
		{\resizebox{5.0cm}{!}{\includegraphics{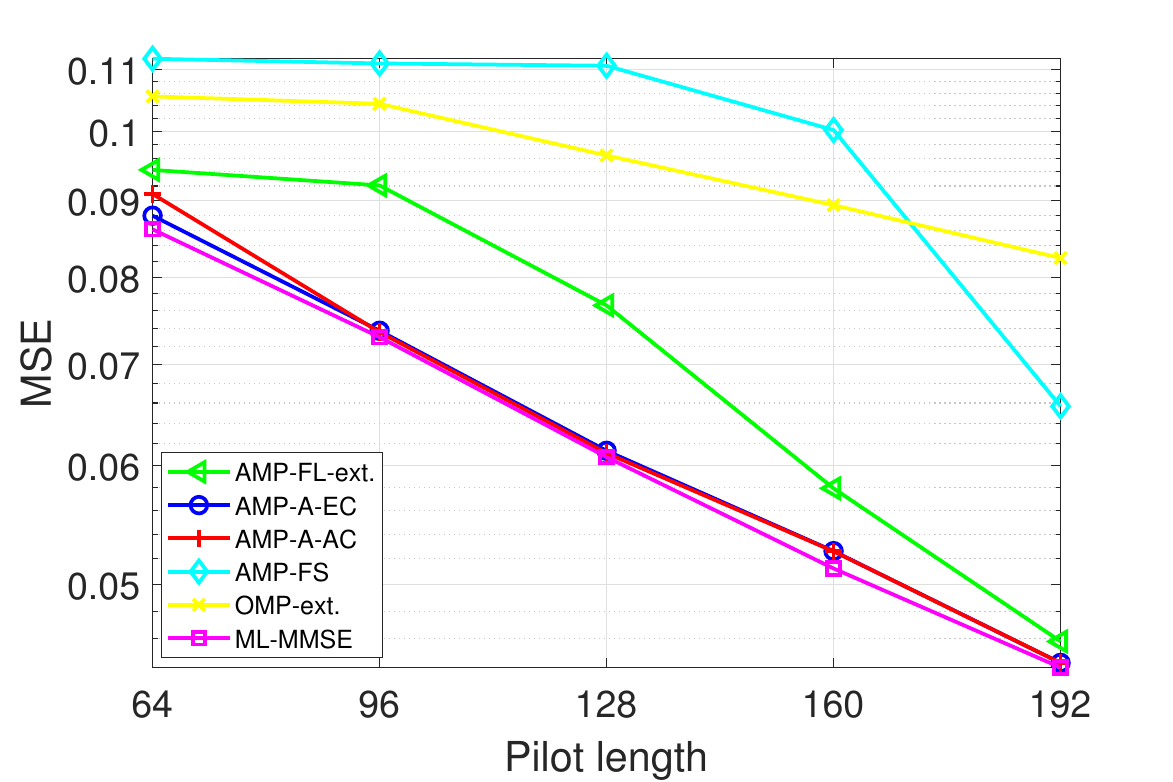}}}
		\subfigure[\scriptsize{Computation time.}\label{fig:time_L}]
		{\resizebox{5.0cm}{!}{\includegraphics{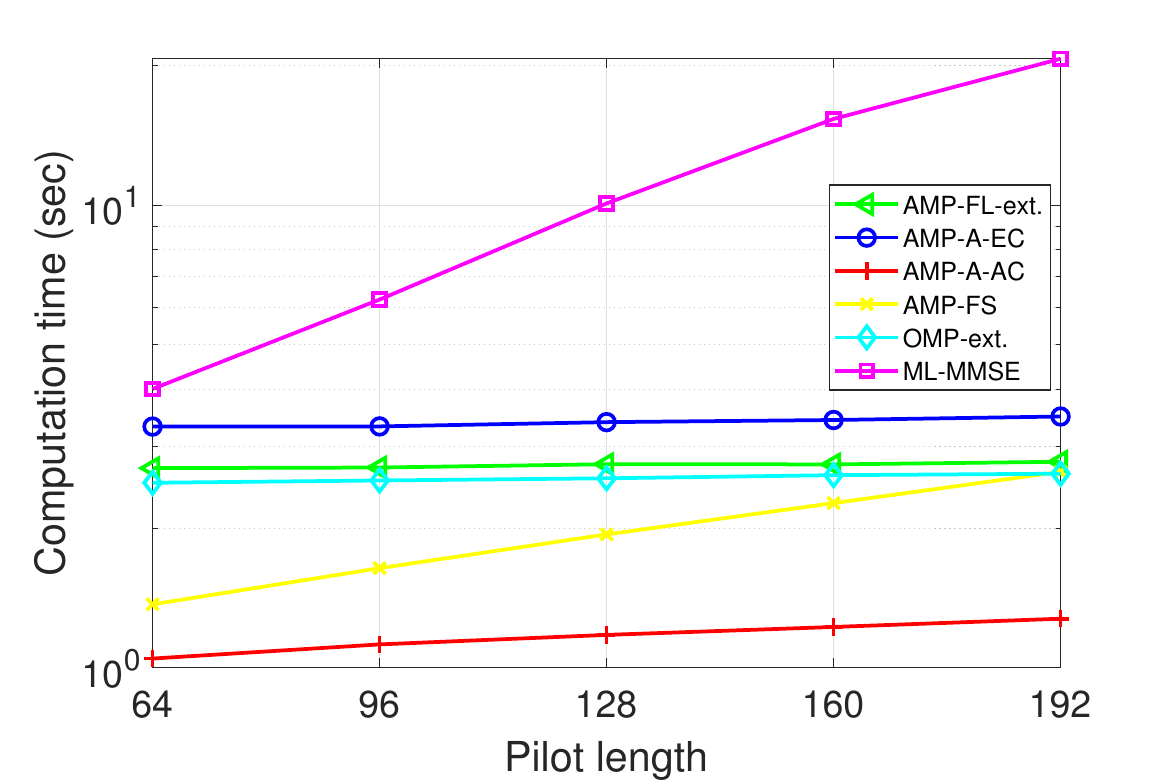}}}
	\end{center}
	\vspace{-5mm}
	\caption{\small{Error probability, false alarm probability, missed detection probability, MSE, and computation time versus pilot length at ${\color{black}P_t}=10$~dBm, $M=64$, $N=1000$, and $P=3$ for the c.d. case.}}
	\label{fig:variousL}
\end{figure*}
\begin{figure*}[!t]
	\begin{center}
		\subfigure[\scriptsize{Error probability.}\label{fig:error_M}]
		{\resizebox{5.0cm}{!}{\includegraphics{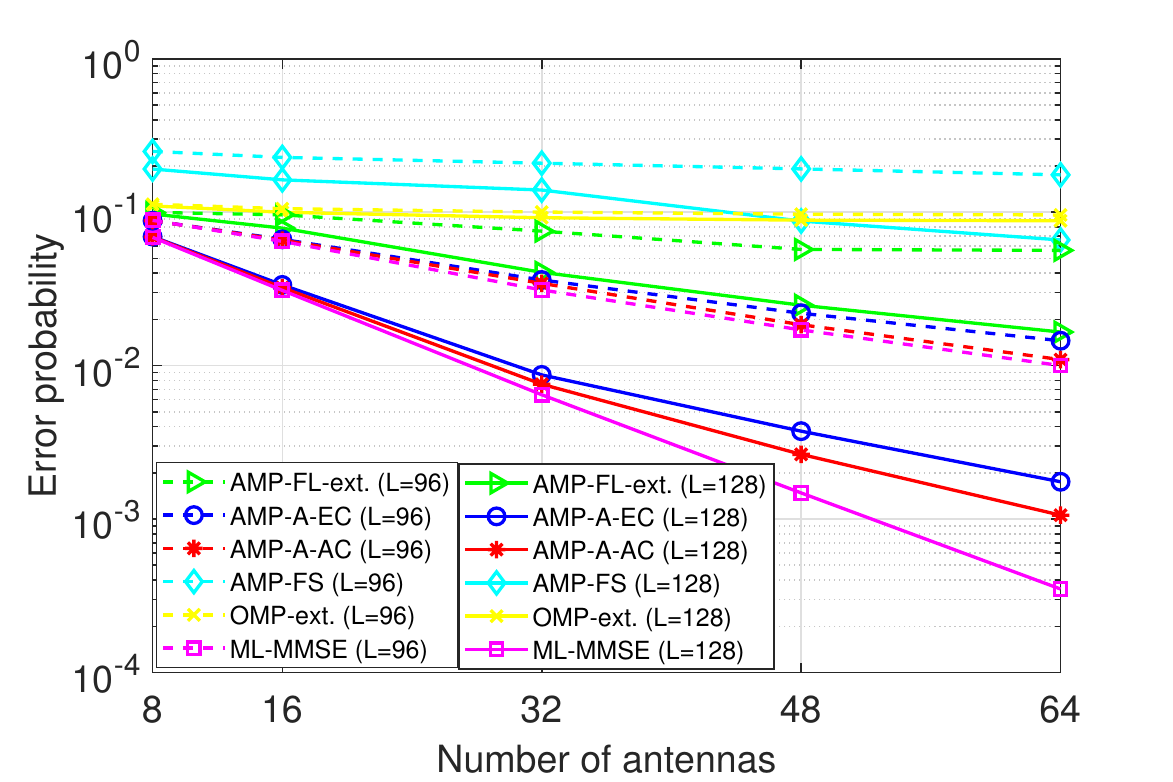}}}
		\subfigure[\scriptsize{False alarm probability.}\label{fig:FA_M}]
		{\resizebox{5.0cm}{!}{\includegraphics{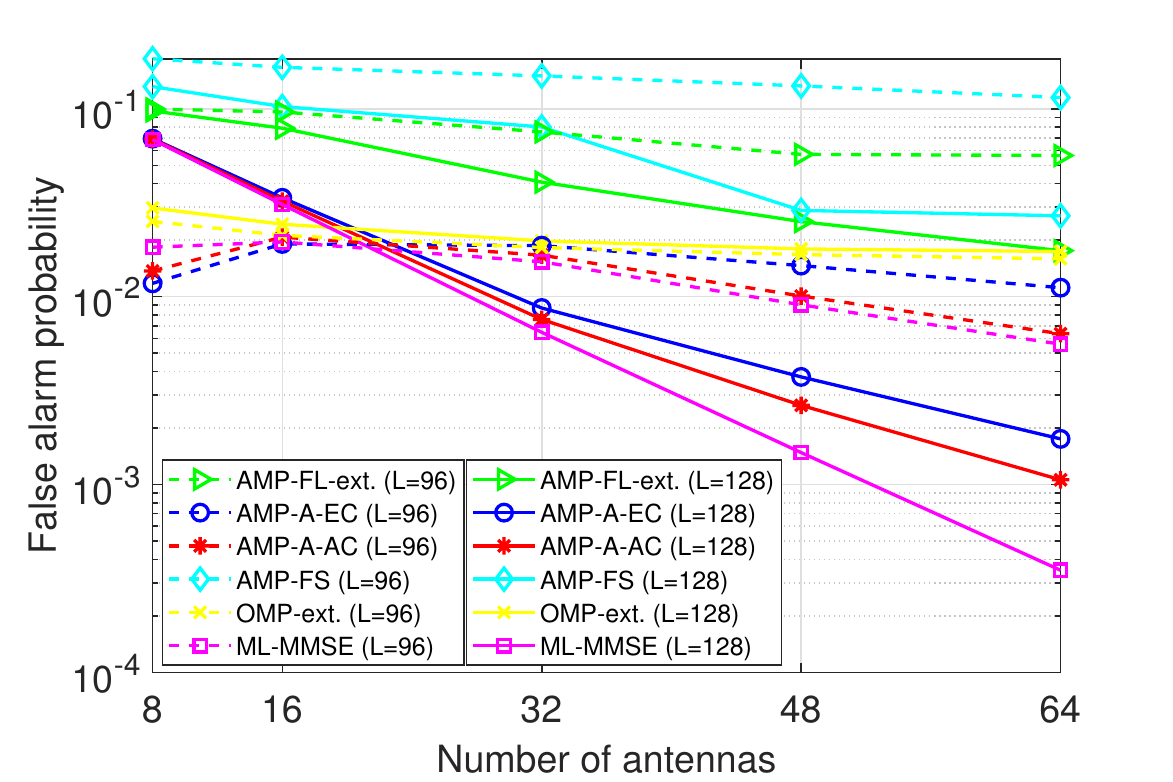}}}
		\subfigure[\scriptsize{Missed detection probability.}\label{fig:MD_M}]
		{\resizebox{5.0cm}{!}{\includegraphics{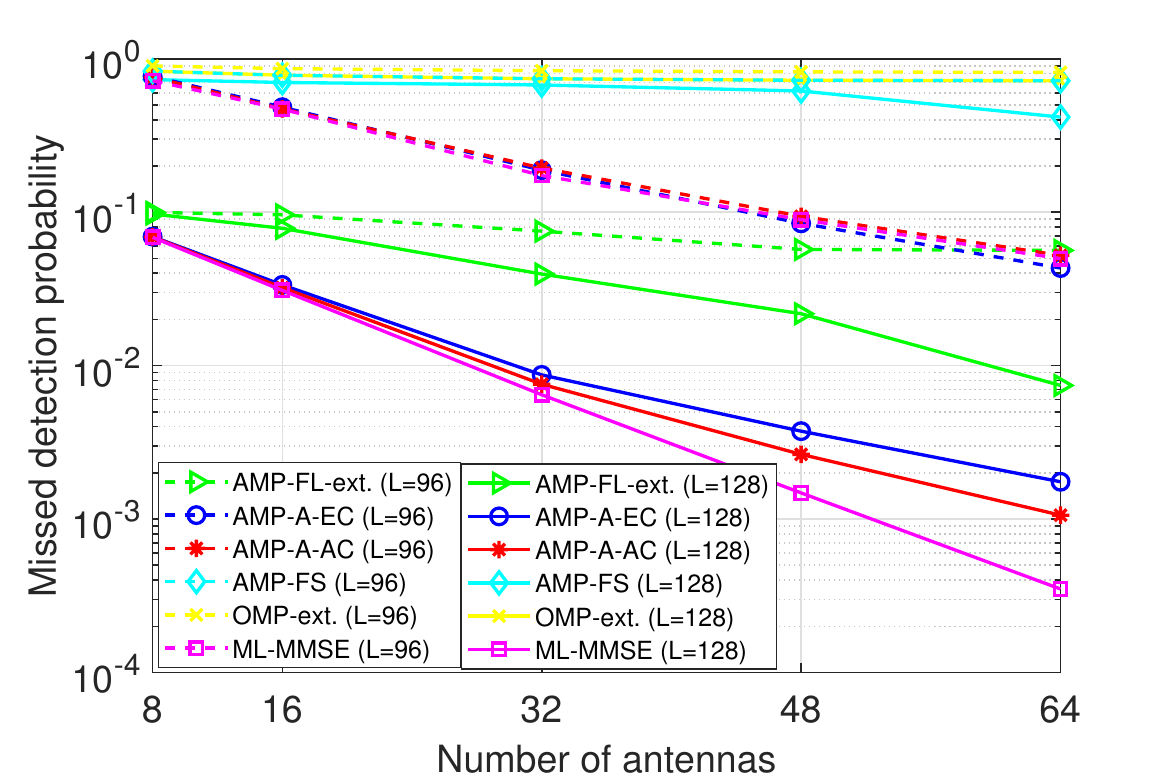}}}
		\subfigure[\scriptsize{MSE.}\label{fig:mse_M}]
		{\resizebox{5.0cm}{!}{\includegraphics{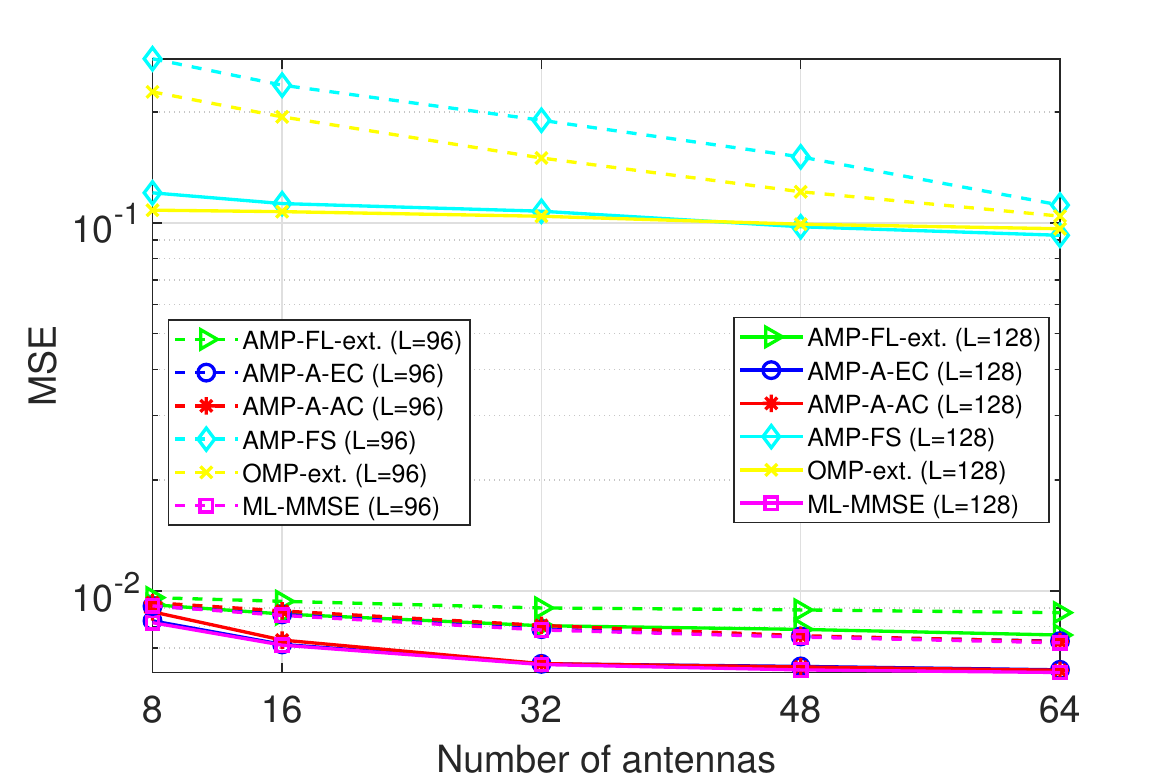}}}
		\subfigure[\scriptsize{Computation time.}\label{fig:time_M}]
		{\resizebox{5.0cm}{!}{\includegraphics{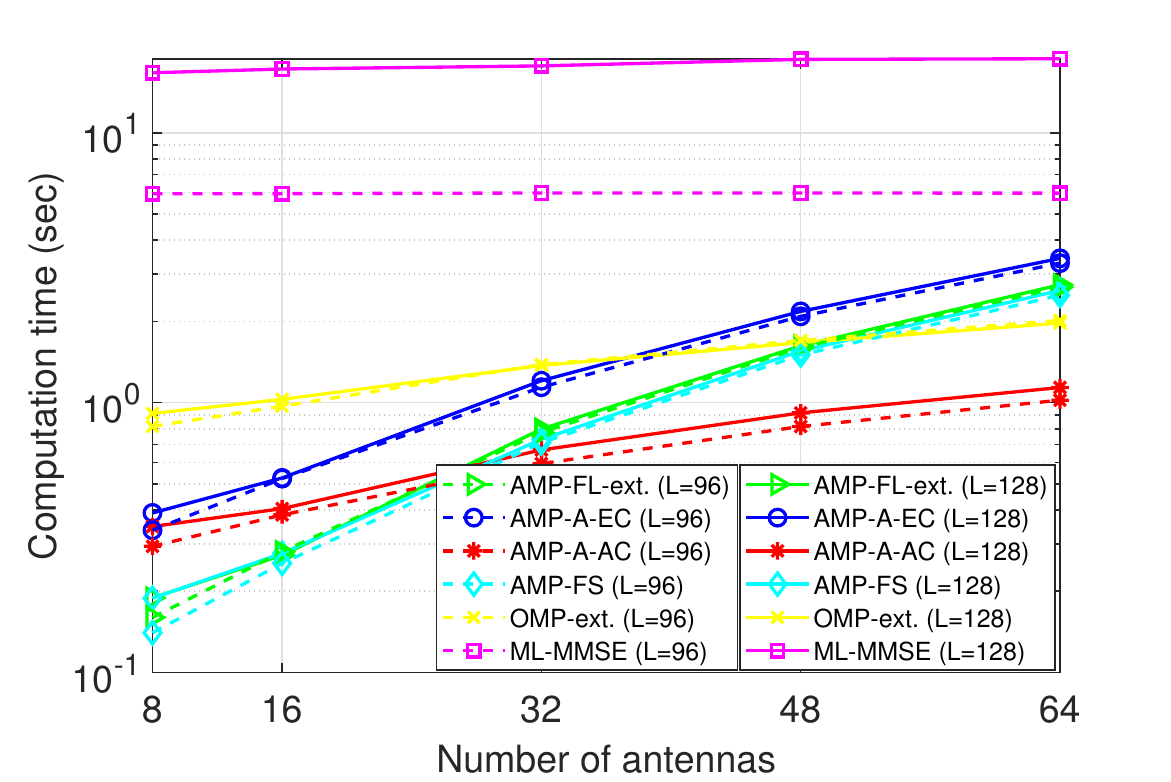}}}
	\end{center}
	\vspace{-5mm}
	\caption{\small{Error probability, false alarm probability, missed detection probability, MSE, and computation time versus the number of antennas at {\color{black}$L=96$ ($L=128$)}, ${\color{black}P_t}=10$ dBm, $N=1000$, and $P=3$ for the c.d. case.}}
	\label{fig:variousM}
\end{figure*}
\begin{figure*}[!t]
	\begin{center}
		\subfigure[\scriptsize{Error probability.}\label{fig:error_Tap}]
		{\resizebox{5.0cm}{!}{\includegraphics{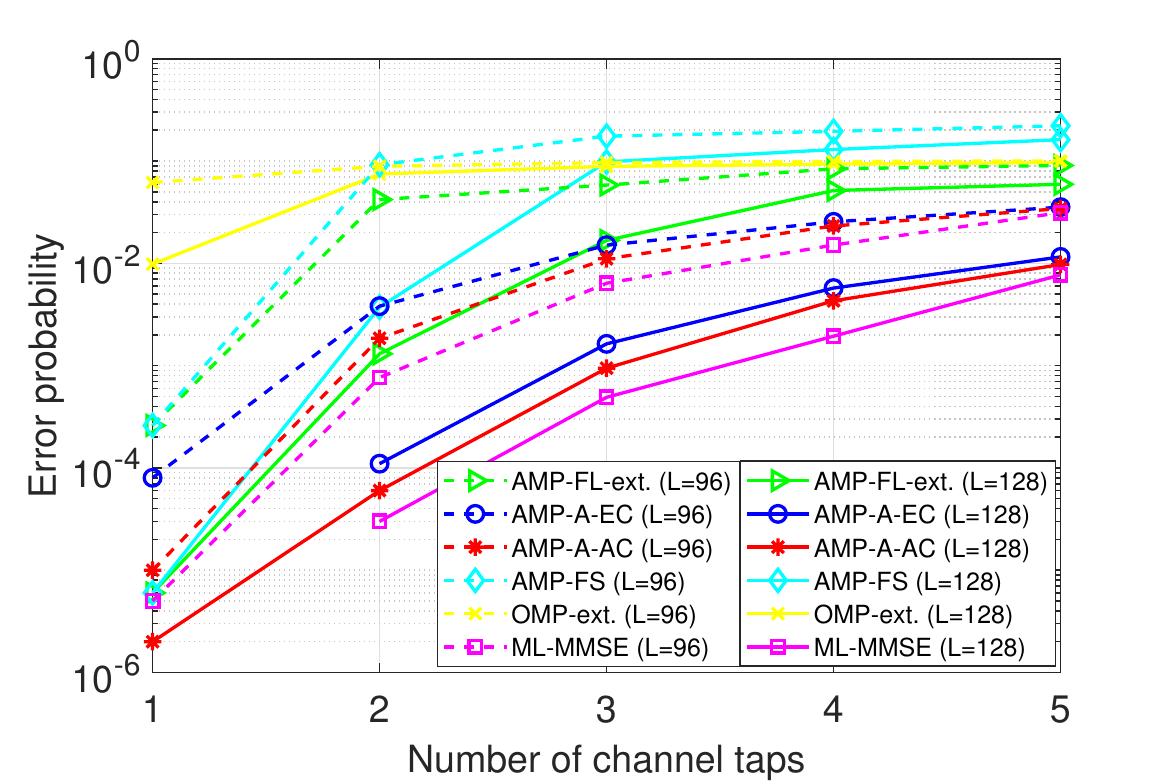}}}
		\subfigure[\scriptsize{False alarm probability.}\label{fig:FA_Tap}]
		{\resizebox{5.0cm}{!}{\includegraphics{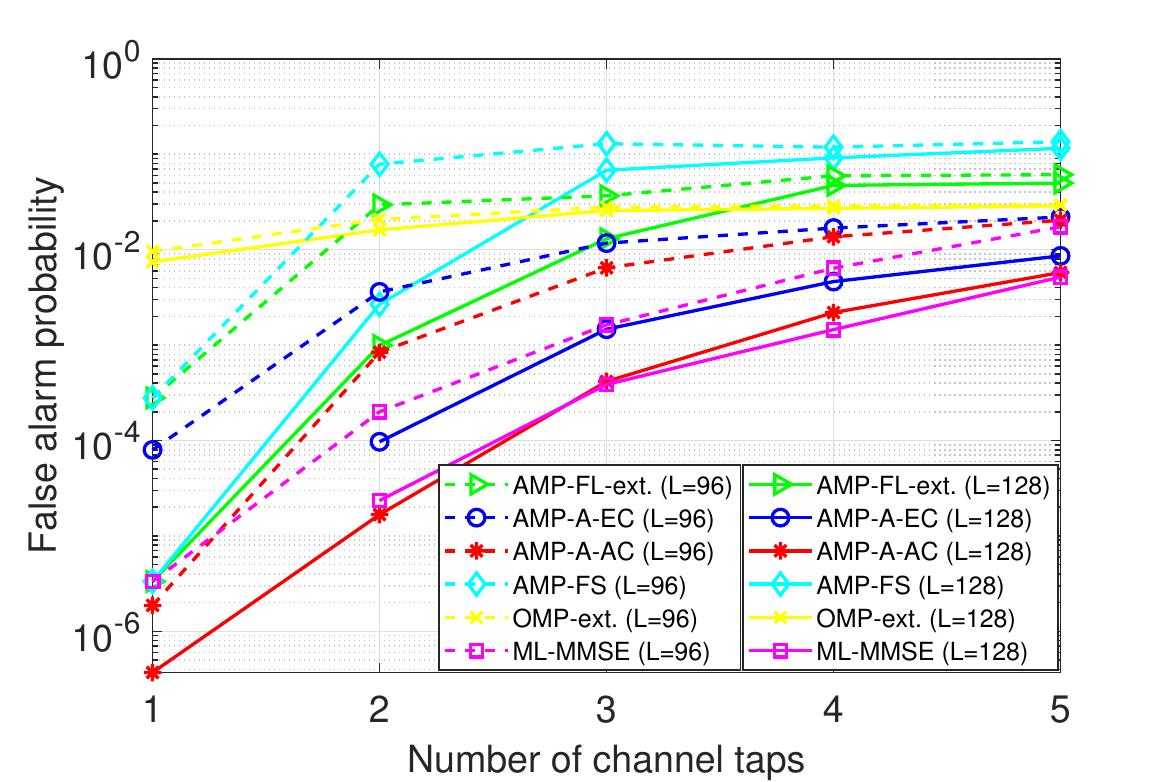}}}
		\subfigure[\scriptsize{Missed detection probability.}\label{fig:MD_Tap}]
		{\resizebox{5.0cm}{!}{\includegraphics{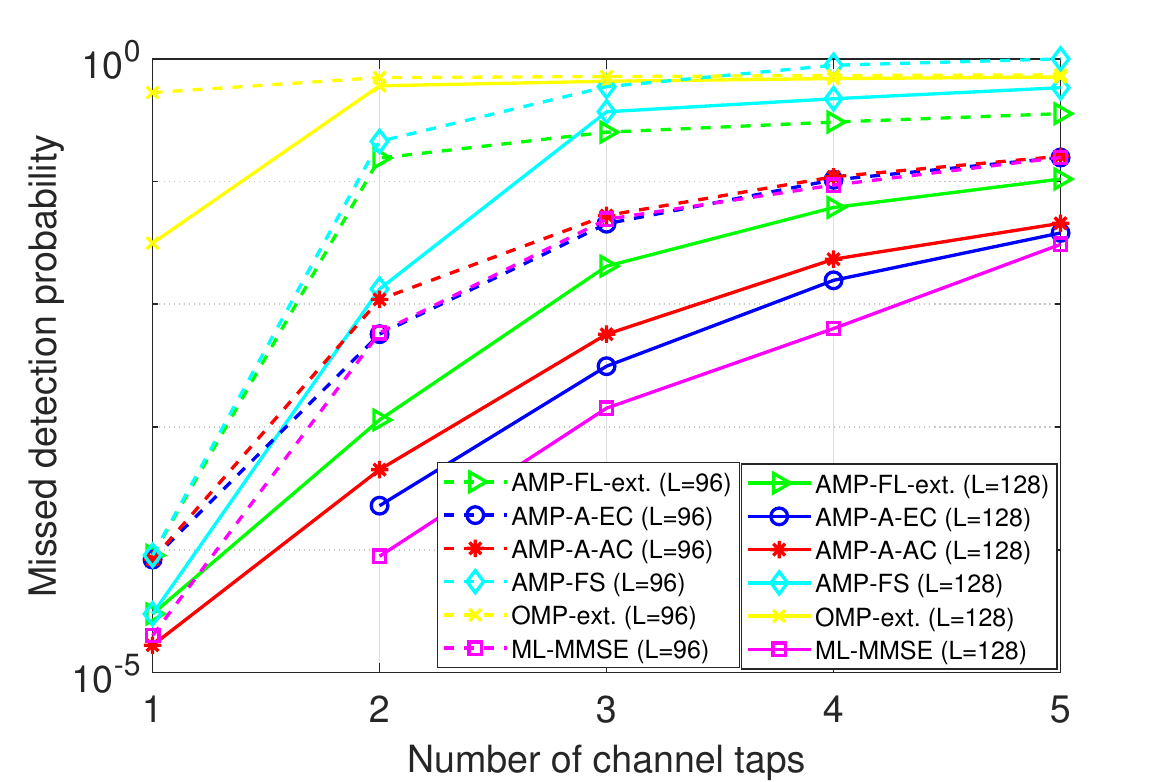}}}
		\subfigure[\scriptsize{MSE.}\label{fig:mse_Tap}]
		{\resizebox{5.0cm}{!}{\includegraphics{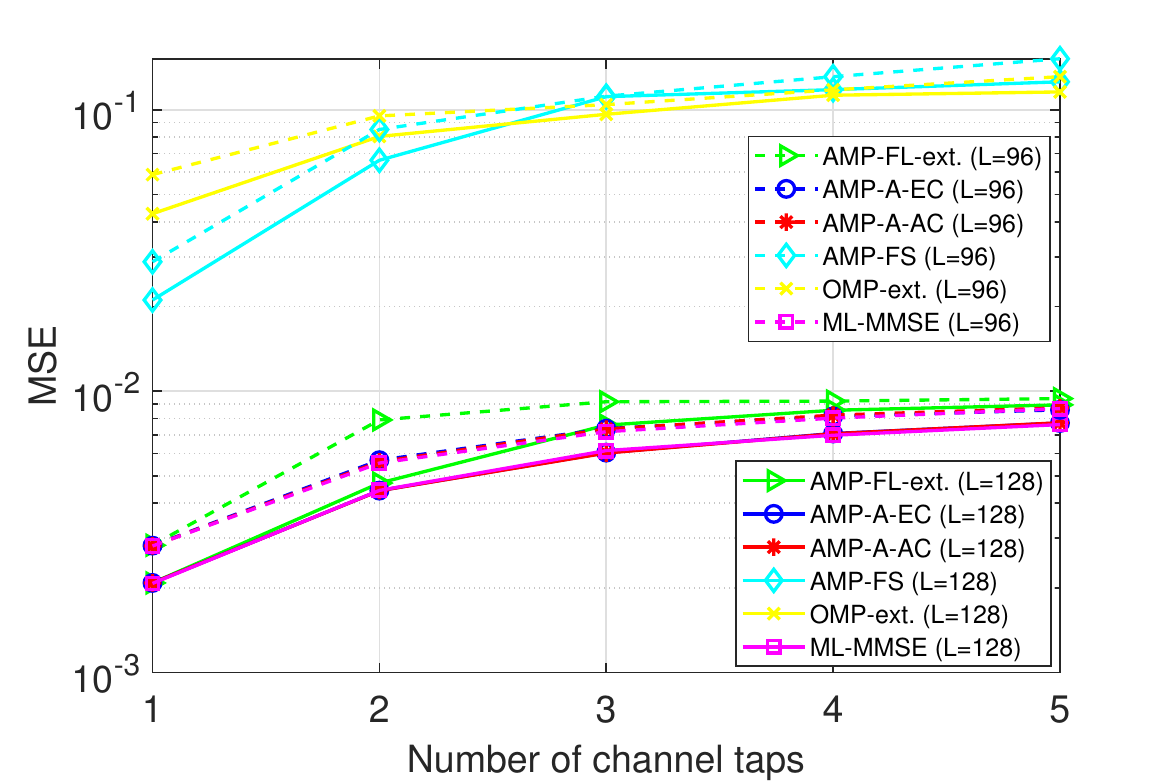}}}
		\subfigure[\scriptsize{Computation time.}\label{fig:time_Tap}]
		{\resizebox{5.0cm}{!}{\includegraphics{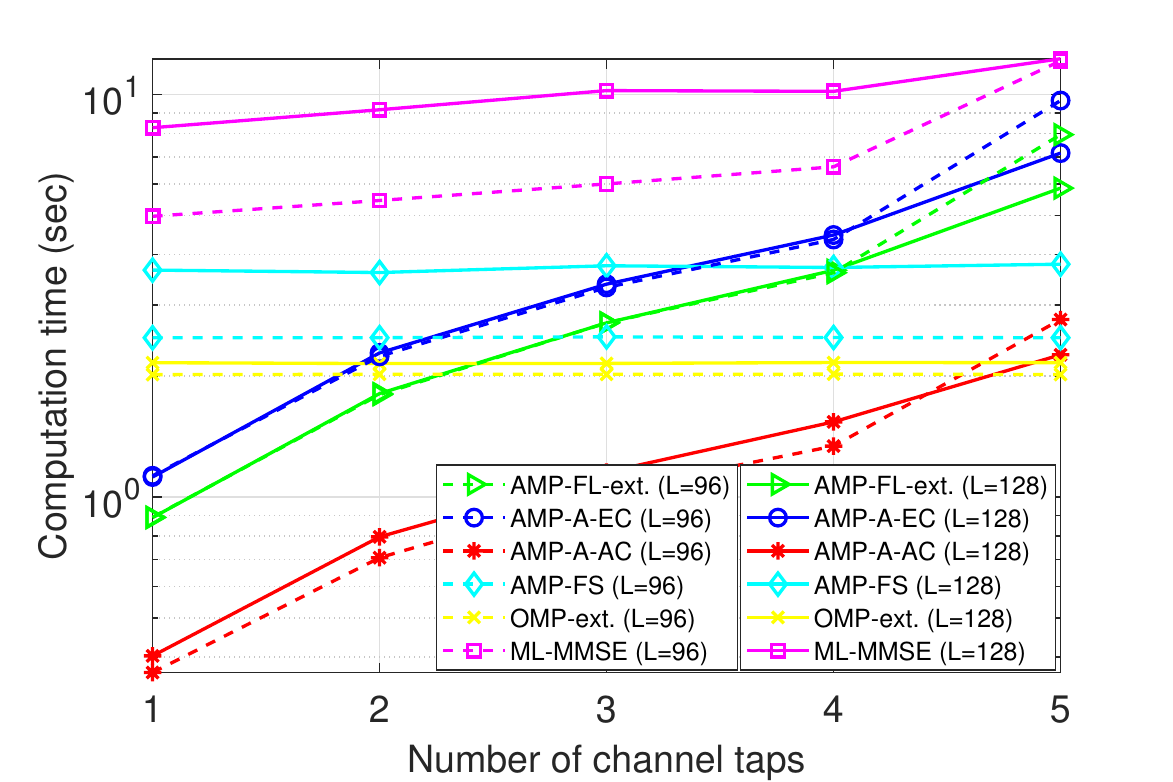}}}
	\end{center}
	\vspace{-5mm}
	\caption{\small{Error probability, false alarm probability, missed detection probability, MSE, and computation time versus the number of channel taps at {\color{black}$L=96$ ($L=128$),} ${\color{black}P_t}=10$ dBm, $M=64$, and $N=1000$ for the c.d. case.}}
	\label{fig:variousTap}
\end{figure*}
\begin{figure*}[t]
	\begin{center}
		\subfigure[\scriptsize{\color{black}Error probability.}\label{fig:error_user}]
		{\resizebox{5.0cm}{!}{\includegraphics{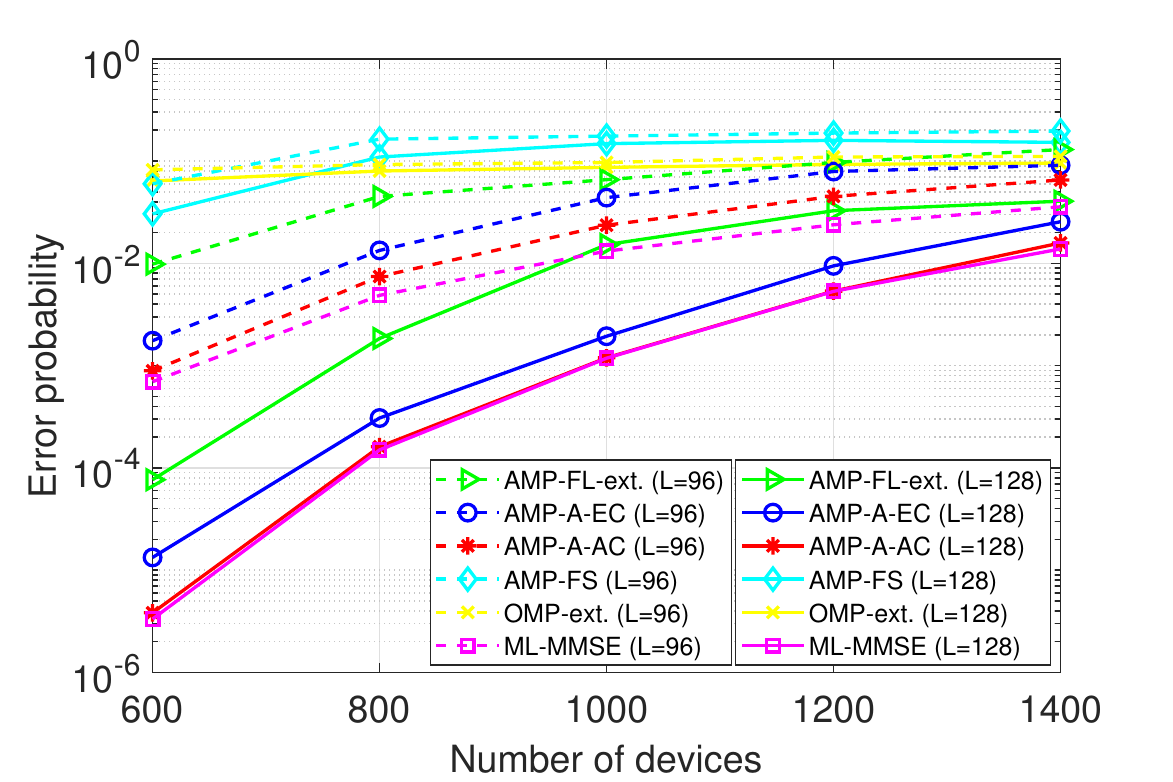}}}	
		\subfigure[\scriptsize{\color{black}MSE.}\label{fig:mse_user}]
		{\resizebox{5.0cm}{!}{\includegraphics{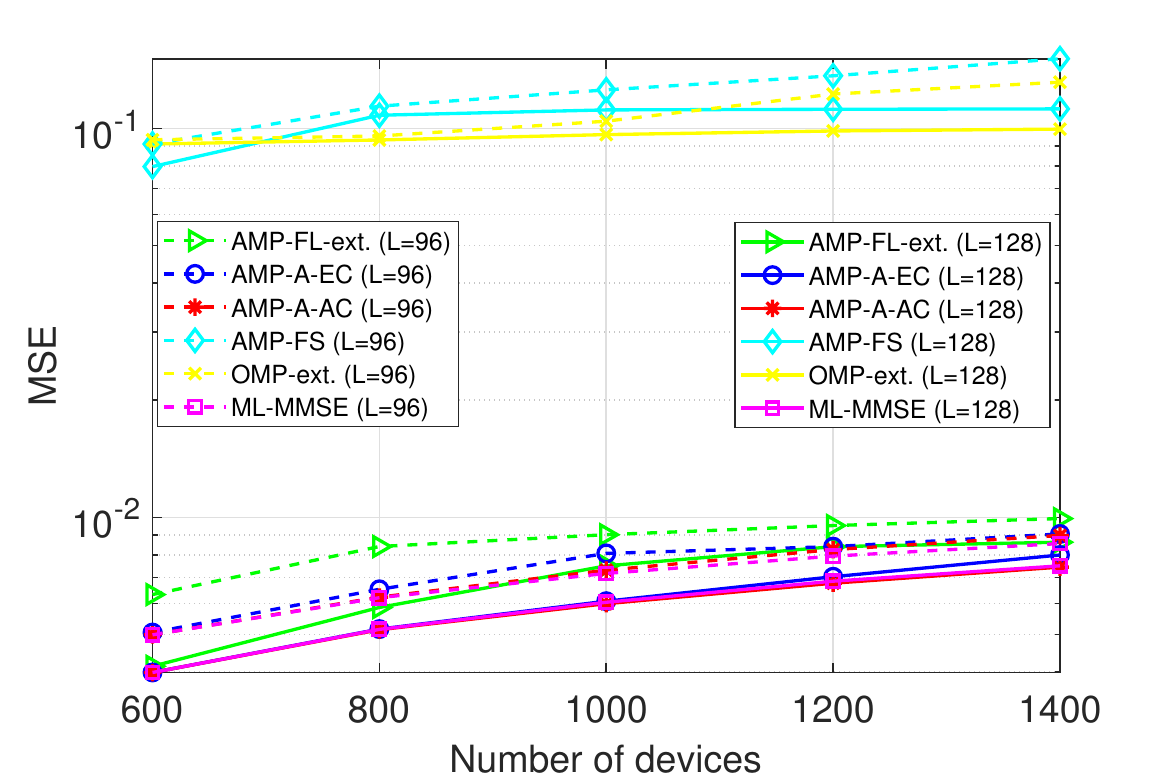}}}
		\subfigure[\scriptsize{\color{black}Computation time.}\label{fig:time_user}]
		{\resizebox{5.0cm}{!}{\includegraphics{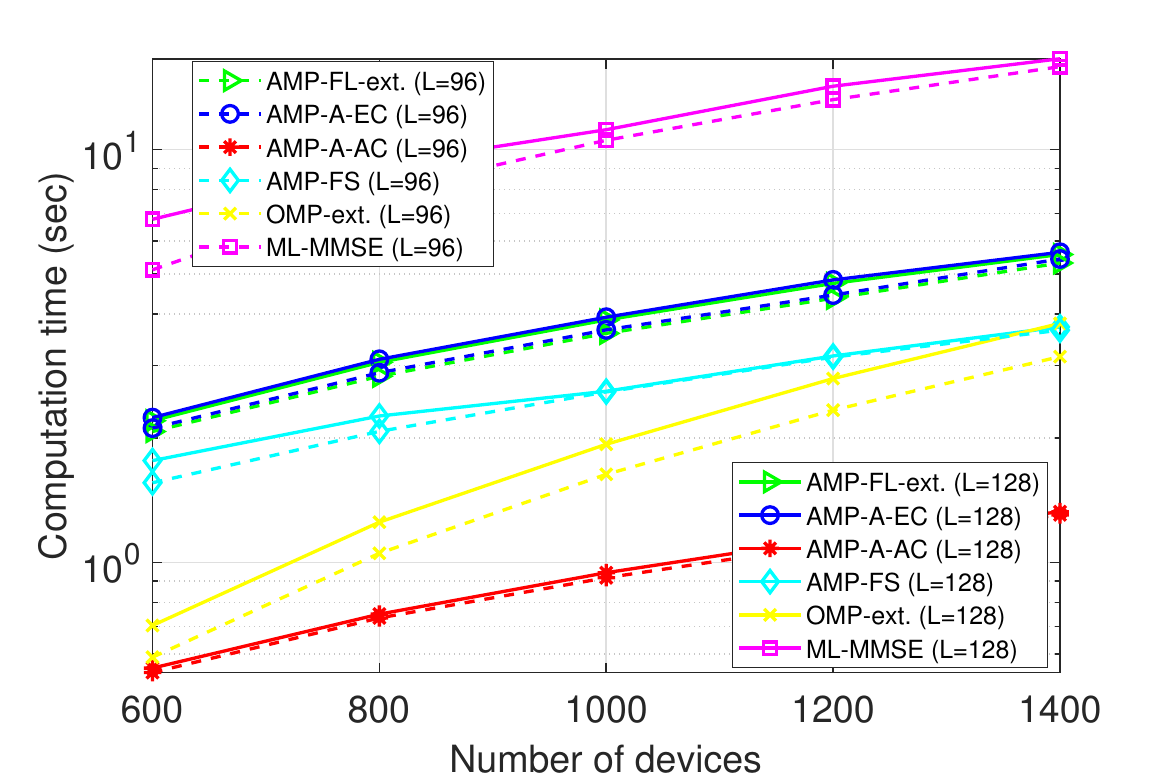}}}
	\end{center}
	\vspace{-0.2cm}
	\caption{\small{{\color{black}Error probability, MSE, and computation time versus the number of devices at $L=96$ ($L=128$), $P_t=10$ dBm, $M=64$, and $P=3$ for the c.d. case.}}}
	\label{fig:user}
\end{figure*}
\begin{figure*}[t]
	\begin{center}
		\subfigure[\scriptsize{\color{black}Error probability.}\label{fig:error_snr_r2}]
		{\resizebox{5.0cm}{!}{\includegraphics{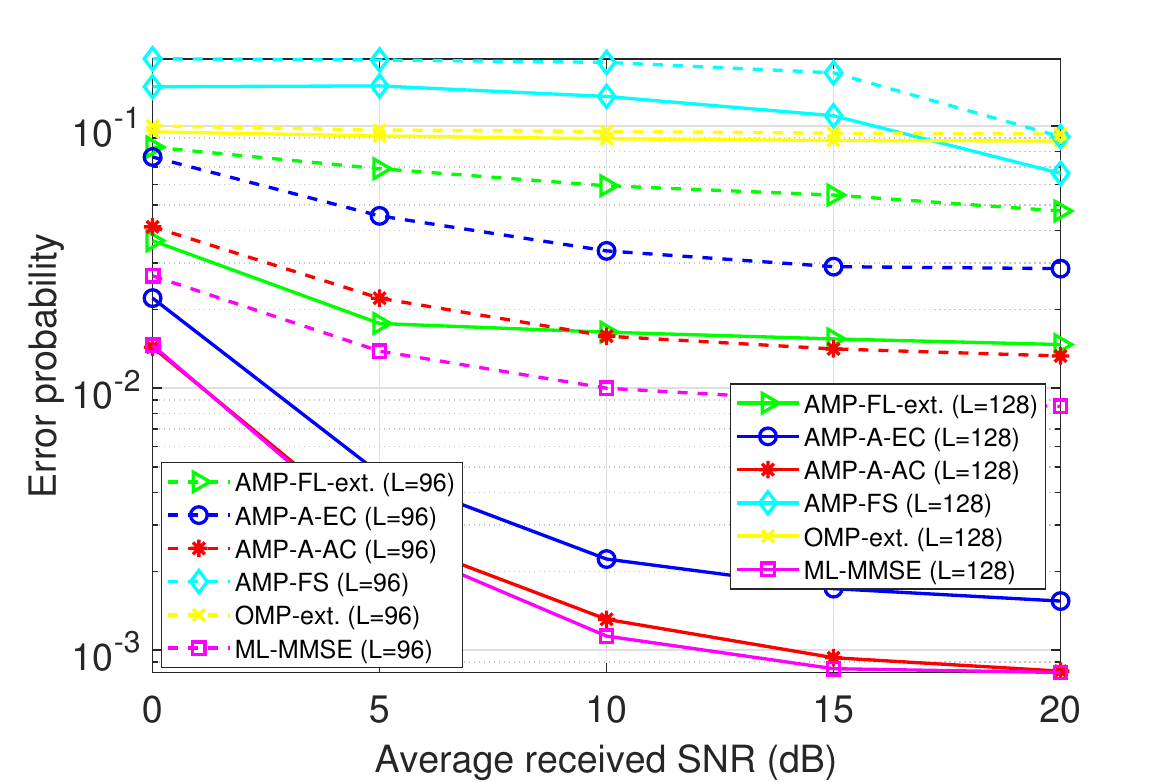}}}	
		\subfigure[\scriptsize{\color{black}MSE.}\label{fig:mse_snr_r2}]
		{\resizebox{5.0cm}{!}{\includegraphics{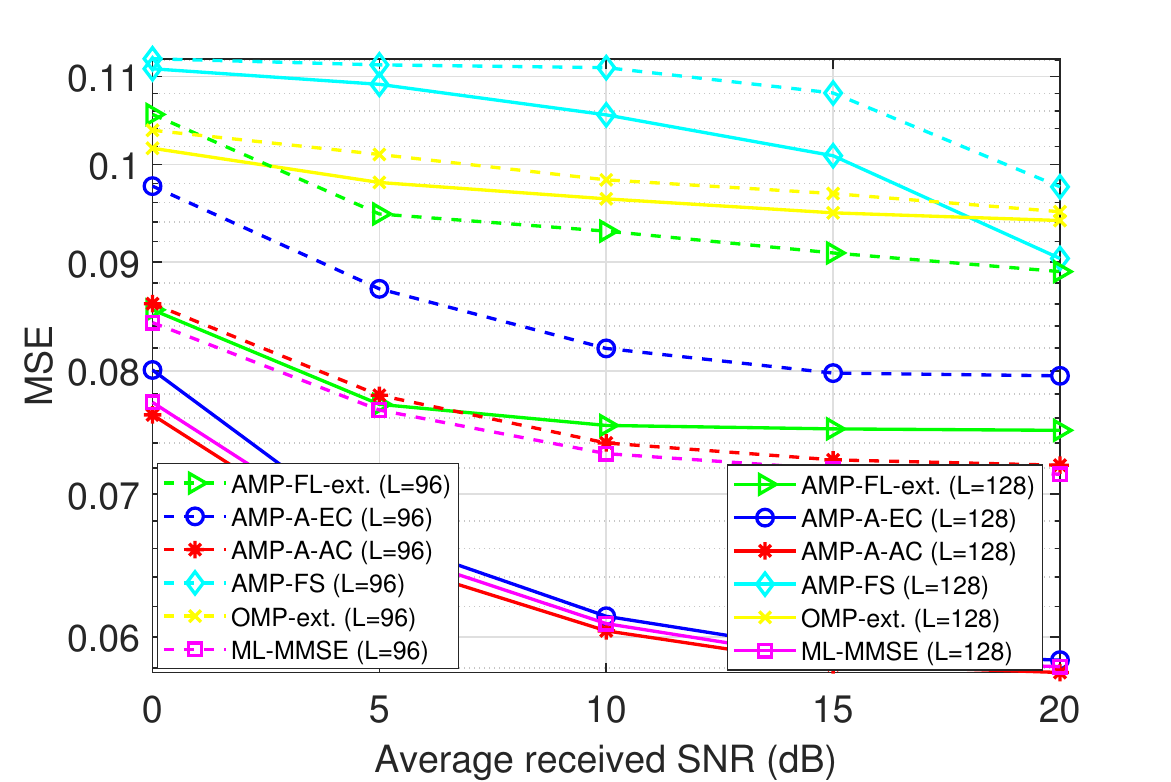}}}
		\subfigure[\scriptsize{\color{black}Computation time.}\label{fig:time_snr_r2}]
		{\resizebox{5.0cm}{!}{\includegraphics{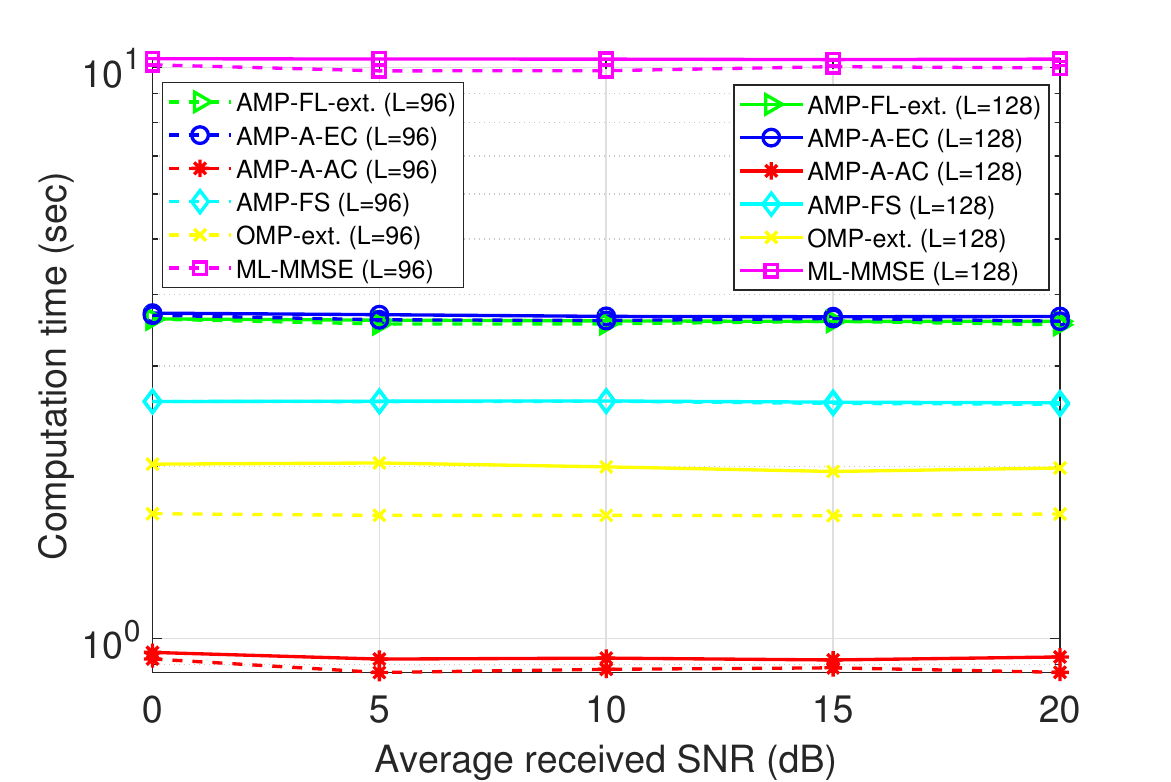}}}
	\end{center}
	\vspace{-0.2cm}
	\caption{\small{\color{black}Error probability, MSE, and computation time versus the average received SNR at $L=96$ ($L=128$), $N=1000$, $M=64$, and $P=3$ for the c.d. case.}}
	\label{fig:snr_r2}
\end{figure*}
\begin{figure*}[t]
	\begin{center}
		\subfigure[\scriptsize{\color{black}Error probability.}\label{fig:error_snr_rd}]
		{\resizebox{5.0cm}{!}{\includegraphics{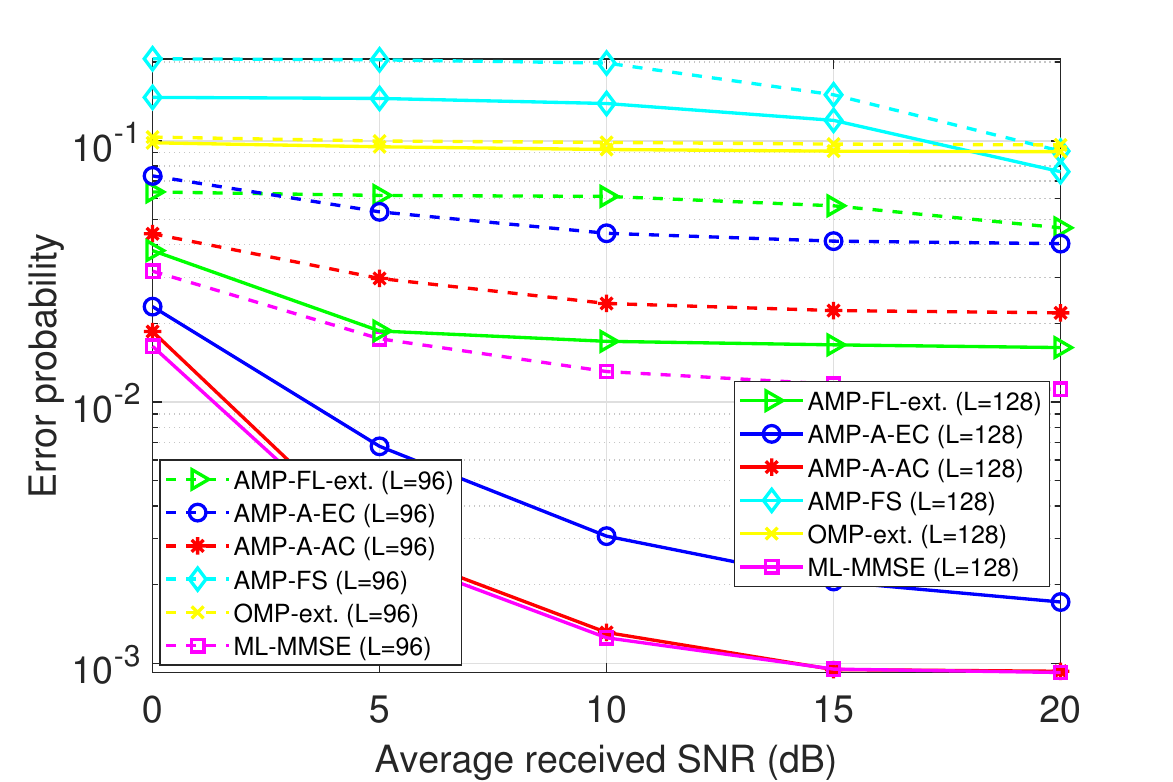}}}	
		\subfigure[\scriptsize{\color{black}MSE.}\label{fig:mse_snr_rd}]
		{\resizebox{5.0cm}{!}{\includegraphics{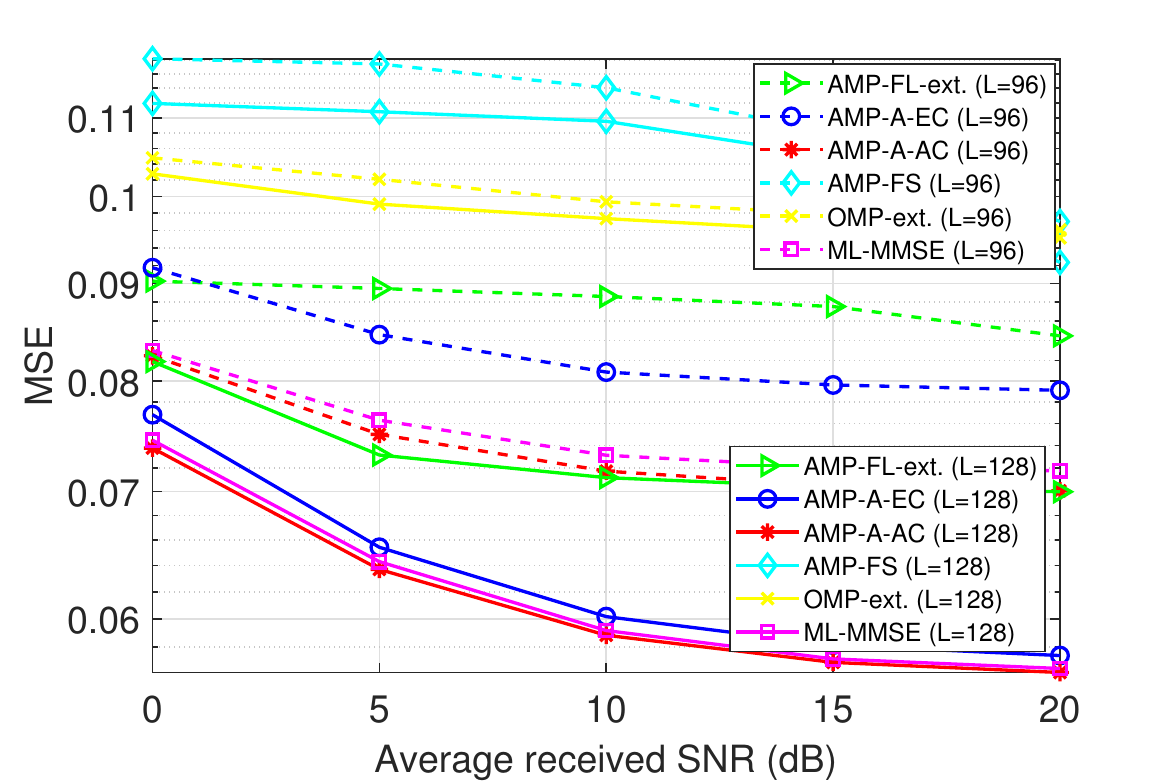}}}
		\subfigure[\scriptsize{\color{black}Computation time.}\label{fig:time_snr_rd}]
		{\resizebox{5.0cm}{!}{\includegraphics{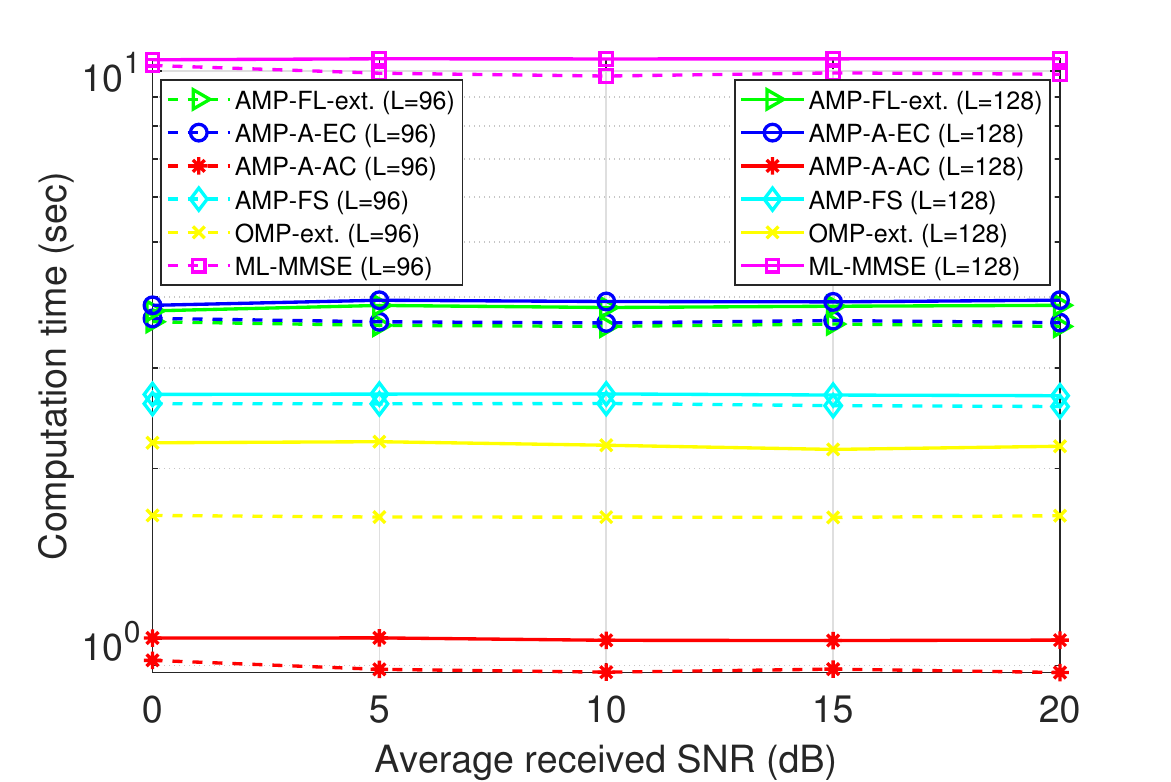}}}
	\end{center}
	\vspace{-0.2cm}
	\caption{\small{\color{black}Error probability, MSE, and computation time versus the average received SNR at $L=96$ ($L=128$), $N=1000$, $M=64$, and $P=3$ for the r.d. case.}}
	\label{fig:snr_rd}
\end{figure*}

 Fig.~\ref{fig:iter_error} and Fig.~\ref{fig:iter_mse} illustrate the error probabilities of activity detection and MSEs of channel estimation of the proposed algorithms, i.e., \emph{AMP-A-EC} and \emph{AMP-A-AC}, and the counterparts without tracking the best estimates over iterations, i.e., \emph{AMP-A-EC-iter.} and \emph{AMP-A-AC-iter.}, versus the number of iterations in the cases of $L=96$ ($<N\rho$), $L=128$ ($L$ slightly larger than $N\rho$), and $L=192$ ($>N\rho$), respectively, where $N\rho=100$. The error probability and MSE of \emph{AMP-A-EC-iter.} (\emph{AMP-A-AC-iter.}) are measured according to $\hat{\mathbf{a}}^{(t)}$ and $\hat{\mathbf{X}}^{(t)}$ ($\hat{\mathbf{a}}^{(t)}$ and $\hat{\mathbf{H}}^{(t)}$). 
From Fig.~\ref{fig:iter_error_L96} and Fig.~\ref{fig:iter_mse_L96} for $L=96$, and {Fig.~\ref{fig:iter_error_L128} and Fig.~\ref{fig:iter_mse_L128}} for $L=128$, the error probabilities and MSEs of \emph{AMP-A-EC-iter.} and \emph{AMP-A-AC-iter.}, which do not track the best estimates over iterations, first decrease and then increase with the increase in the number of iterations due to the lack of convergence guarantee of the underlying AMP method in these regimes \cite{donohoAMPsparse}, as discussed in Section~\ref{sec:al1}, whereas the error probabilities and MSEs of \emph{AMP-A-EC} and \emph{AMP-A-AC} first decease and then remain unchanged with the increase in the number of iterations. 
At the 20-th iteration, \emph{AMP-A-EC} and \emph{AMP-A-AC} achieve smaller error probabilities ($33\%\text{--}37.5\%$ and $18.1\%\text{--}19.4\%$ smaller) and MSEs ($1.2\%\text{--}3\%$ and $0.8\%\text{--}1.4\%$ smaller) than \emph{AMP-A-EC-iter.} and \emph{AMP-A-AC-iter.}, respectively.
This confirms the benefit of tracking the best estimates over iterations at $L<N\rho$ and $L$ slightly greater than $N\rho$.
From Fig.~\ref{fig:iter_error_L192} and Fig.~\ref{fig:iter_mse_L192} for $L=192$, we can observe that the error probability and MSE of each algorithm are monotonically decreasing with the increase in the number of iterations due to the improvement of the convergence of the underlying AMP method \cite{Liu18TSP}.
Fig.~\ref{fig:iter_error} and Fig. \ref{fig:iter_mse} also indicate that the analytical error probability and MSE of \emph{AMP-A-EC}, represented by \emph{Analysis-AMP-A-EC}, are close to the numerical ones. 

Fig.~\ref{fig:variousL}, Fig.~\ref{fig:variousM}, and Fig.~\ref{fig:variousTap} plot the error probabilities, false alarm probabilities, missed detection probabilities, MSEs, and computation times of all algorithms versus the pilot length $L$, number of antennas $M$, number of channel taps $P$, respectively, in the cases of $L=96$ and $L=128$. 
Fig.~\ref{fig:user}, Fig.~\ref{fig:snr_r2}, {\color{black}and Fig.~\ref{fig:snr_rd}} plot the error probabilities, MSEs, and computation times of all algorithms versus the number of devices $N$ and average received SNR
{\color{black} in the cases of $L=96$ and $L=128$.}
From Fig.~\ref{fig:variousL} and Fig.~\ref{fig:variousM}, we can observe that each algorithm's error probability, false alarm probability, missed detection probability, and MSE decrease with $L$ and $M$, mainly due to more measurement vectors adopted and more observations available, respectively. From Fig. \ref{fig:variousTap} and Fig. \ref{fig:user}, we can see that each algorithm's error probability and MSE increase with $P$ and $N$ due to more parameters to be estimated.
Fig. \ref{fig:snr_r2} {\color{black}and Fig.~\ref{fig:snr_rd} show} that each algorithm's error probability and MSE decrease with the average received SNR \textcolor{black}{and that} each algorithm in the c.d. and r.d. \textcolor{black}{cases} exhibits similar error probability and MSE at the same average received SNR. 
From Fig.~\ref{fig:time_L}, Fig.~\ref{fig:time_M}, and Fig. \ref{fig:time_user}, we can see that each algorithm's computation time increases with $L$, $M$, and $N$. From Fig.~\ref{fig:time_Tap}, we can observe that {\em AMP-FS} and {\em OMP-ext.}'s computation times remain unchanged with the increase of $P$ since they rely on the frequency-domain signal model (irrespective of $P$). In contrast, the other algorithms' computation times increase with $P$ since they rely on the time-domain signal model (dependent on $P$). Fig.~\ref{fig:time_snr_r2} \textcolor{black}{and Fig.~\ref{fig:time_snr_rd} show} that each algorithm's computation time is irrelevant to the average received SNR. These observations are in accordance with the computational complexity analysis in Table \ref{tab:table_revised_complexity}.

Furthermore, we can make the following observations from \textcolor{black}{Figs.~\ref{fig:variousL}-\ref{fig:snr_rd}}. Firstly, \emph{AMP-A-EC} and \emph{AMP-A-AC} achieve comparable error probabilities (false alarm probabilities and missed detection probabilities) and MSEs with significantly shorter ($96\%$ shorter) computation times than ML-MMSE. Secondly, \emph{AMP-A-EC} and \emph{AMP-A-AC} achieve substantially smaller error probabilities (false alarm probabilities and missed detection probabilities) and MSEs ($94\%$ and $33\%$ smaller, respectively, at $M=64$) than \emph{AMP-FL-ext.}, \emph{AMP-FS}, and \emph{OMP-ext}. The significant gains in accuracy over \emph{AMP-FL-ext.}, \emph{AMP-FS}, and \emph{OMP-ext} come from the fact that they precisely capture the dependency of the actual frequency-domain channels and select the best estimate among the iterates, as discussed in Section \ref{sec:comb}.
 Thirdly, \emph{AMP-A-AC} achieves the lowest computation among all algorithms in most cases, whereas \emph{AMP-A-EC} requires a slightly longer computation time than \emph{AMP-FL-ext.}, \emph{AMP-FS}, and \emph{OMP-ext.} due to the calculation of additional parameters $\theta^{(t)}_n,n \in \mathcal{N}$ in (\ref{eq:theta}) and the process of tracking the best estimate over iterations. Fourthly, \emph{AMP-A-EC} slightly outperforms \emph{AMP-A-AC} at small $L$ and $M$ but slightly underperforms \emph{AMP-A-AC} at large $L$ and $M$, as the accuracies of the respective approximations vary with the values of the parameters. Additionally, \emph{AMP-A-AC} has a shorter computation time than \emph{AMP-A-EC} due to its lower computational complexity for channel estimation and residual calculation, as discussed in Section \ref{sec:complexity}.

In summary, the gains in accuracy and computation time demonstrate the practical values of the proposed algorithms for OFDM-based grant-free access under frequency-selective fading. Moreover, \emph{AMP-A-EC} is preferable at small $L$ and $M$, and \emph{AMP-A-AC} is applicable at large $L$ and $M$.

\section{Conclusion}
This paper investigated AMP-based joint device activity detection and channel estimation for OFDM-based grant-free access under frequency-selective fading. First, we presented the general exact time-domain signal model and built a new factor graph that captures the precise statistics of time-domain channels and device activities.
Then, we proposed two AMP-based algorithms, i.e., \emph{AMP-A-EC} and \emph{AMP-A-AC}, which approximately solve the MAP-based device activity detection problem and two MMSE-based channel estimation problems and always return the best estimates over iterations. Besides, we analyzed \emph{AMP-A-EC}'s error probability of device activity detection and the MSE of channel estimation via state evolution and showed \emph{AMP-A-AC}'s advantage in computational complexity. Numerical results demonstrated \emph{AMP-A-EC} and \emph{AMP-A-AC}'s substantial gains over the existing algorithms and the respective preferable regions. Numerical results also validated the performance analysis of \emph{AMP-A-EC}.

{\appendices 
	\section{Proof of the Lemma \ref{lemma:MMSE}} \label{appendix:marginal}
	We have:
	\begin{align}\nonumber
		&p(a_n|\mathbf{Y}) \overset{(a)}{=}\frac{p(a_n,\mathbf{Y})}{p(\mathbf{Y})}\overset{(b)}{=} \frac{\sum_{\mathbf{a}_{\bar{n}}} \int p(\mathbf{X},\mathbf{a},\mathbf{Y})d\mathbf{X} }{\sum_{\mathbf{a}} \int p(\mathbf{X},\mathbf{a},\mathbf{Y})d\mathbf{X} },\ n\in\mathcal{N},
\\ \nonumber
		&p(x_{n,p,m}|\mathbf{Y}) \overset{(a)}{=}  \frac{p(x_{n,p,m},\mathbf{Y})}{p(\mathbf{Y})},
\\ \nonumber
		&\overset{(b)}{=}\frac{\sum_{\mathbf{a}}\int p(\mathbf{X},\mathbf{a},\mathbf{Y})d\mathbf{X}_{\overline{n,p,m}} }{\int \sum_{\mathbf{a}}\int p(\mathbf{X},\mathbf{a},\mathbf{Y})d\mathbf{X}_{\overline{n,p,m}} d x_{n,p,m}}, \\ \nonumber
		& n\in\mathcal{N}, p \in \mathcal{P}, m \in\mathcal{M},\\  \nonumber
		&{\color{black} p(x_{n,p,m}|\mathbf{Y},a_n)} \overset{(a)}{=}  \frac{p(x_{n,p,m},a_n,\mathbf{Y})}{p(a_n,\mathbf{Y})} \\ \nonumber
		&{\color{black}\overset{(b)}{=}\frac{\sum_{\mathbf{a}_{\bar{n}}}\int p(\mathbf{X},\mathbf{a},\mathbf{Y})d\mathbf{X}_{\overline{n,p,m}} }{\int \sum_{\mathbf{a}_{\bar{n}}}\int p(\mathbf{X},\mathbf{a},\mathbf{Y})d\mathbf{X}_{\overline{n,p,m}} d x_{n,p,m}}}, \\ \nonumber
		&n\in\mathcal{N}, p \in \mathcal{P}, m \in\mathcal{M},  
	\end{align}
	where $(a)$ is due to the Bayes' theorem, and $(b)$ is due to the relationship between marginal distribution and joint distribution.

	\section{Proof of the Lemma \ref{lemma:consistent_message}} \label{appendix:a}
	First, we prove of Lemma~\ref{lemma:consistent_message} (i). Firstly, by substituting (\ref{lemma2:eq}) into (\ref{eq:message_product_f_x}) and by Gaussian product theorem\footnote{Gaussian product theorem states that: $f_{\mathcal{CN}}(x;a_1,b_1)f_{\mathcal{CN}}(x;a_2,b_2) = f_{\mathcal{CN}}(0;a_1-a_2,b_1+b_2)f_{\mathcal{CN}}(x;c,d)$ where $d = \frac{b_1b_2}{b_1+b_2}$, $c = d (\frac{a_1}{b_1}+\frac{a_2}{b_2})$.}, we have (\ref{eq:productL}).
Secondly, by substituting (\ref{eq:prior_xa}) and (\ref{eq:productL}) into (\ref{eq:message_integral_x}) and by the Gaussian product theorem, we have
\begin{align}\nonumber
	&\mu^{(t)}_{q_{n,p,m} \rightarrow a_{n}}(a_{n}) =  C_5 \Big(a_n f_{\mathcal{CN}}(0;\hat{r}^{(t)}_{x_{n,p,m}},\hat{\sigma}^{(t)}_{x_{n,p,m}} + \beta_n) \\ \label{eq:appendixA_qa}
	&\quad+ (1-a_n) f_{\mathcal{CN}}(0;\hat{r}^{(t)}_{x_{n,p,m}},\hat{\sigma}^{(t)}_{x_{n,p,m}}) \Big),
\end{align}
where $C_5 \triangleq f_{\mathcal{CN}}(0;\hat{r}^{(t)}_{x_{n,p,m}},\hat{\sigma}^{(t)}_{x_{n,p,m}} + \beta_n)+ f_{\mathcal{CN}}(0;\hat{r}^{(t)}_{x_{n,p,m}},\hat{\sigma}^{(t)}_{x_{n,p,m}})$ is the normalizing constant.
Form (\ref{eq:appendixA_qa}), we can obtain (\ref{eq:mes_q_a}).
Thirdly, by substituting (\ref{eq:PMF_a}) and (\ref{eq:mes_q_a}) into (\ref{eq:message_a_q}), we have (\ref{eq:mes_a_q}). Fourthly, by substituting (\ref{eq:prior_xa}) and (\ref{eq:mes_a_q}) into (\ref{eq:message_q_x}), (\ref{eq:approximate_q_x}) is readily obtained. Fifthly, by substituting (\ref{lemma2:eq}) and (\ref{eq:approximate_q_x}) into (\ref{eq:posterior_x_except}) and by the Gaussian product theorem, we have (\ref{eq:approximate_x_f}). 
By (\ref{eq:approximate_x_f}) and the Gaussian product theorem, we have (\ref{eq:mean_x_f_x}).
At last, by (\ref{eq:approximate_x_f}), $\hat{\nu}^{(t+1)}_{ x_{n,p,m} \rightarrow f_{l,m}} = \int|x_{n,p,m}|^2 \mu^{(t+1)}_{x_{n,p,m} \rightarrow f_{l,m}}(x_{n,p,m}) d x_{n,p,m}
- |\hat{x}^{(t+1)}_{ x_{n,p,m} \rightarrow f_{l,m}}|^2$, and the Gaussian product theorem, we have (\ref{eq:mean_nu_f_x}). 
Next, we prove Lemma~\ref{lemma:consistent_message} (ii). Firstly, by substituting (\ref{eq:PMF_a}) and (\ref{eq:mes_q_a}) into (\ref{eq:detector_MP}), we have (\ref{eq:lemma_llr}). Secondly, by substituting (\ref{eq:productL}) and (\ref{eq:approximate_q_x}) into (\ref{eq:inte_posterior_x}) and the Gaussian product theorem, we have (\ref{eq:lemma_x}). At last, by substituting (\ref{eq:prior_xa}) and (\ref{eq:approximate_q_x}) into (\ref{eq:inte_posterior_h}) and the Gaussian product theorem, we have (\ref{eq:lemma_h}).

	\section{Proof of the Lemma \ref{theo:tau}} \label{appendix:lemma4}
First, by (\ref{eq:new_v_sec4}), (\ref{eq:appro_tau}), and (\ref{eq:taylor_v}), we have:
\begin{align} \nonumber
	&\tau^{(t)}_{l,m}  =  \hat{\tau}^{(t)}_{m} + \frac{1}{L} \sum\limits_{n \in \mathcal{N}} \sum\limits_{p \in \mathcal{P}}\mathcal{O}\left(\frac{1}{N}\right) + \mathcal{O}\left(\frac{1}{L}\right) \\ \label{eq:appendix_tau_l_m}
	&\overset{(a)}{=} \hat{\tau}^{(t)}_{m} + \mathcal{O}\left(\frac{1}{L}\right) = \hat{\tau}^{(t)}_{m} + o(1), \text{as $L,N \rightarrow \infty$},
\end{align}
where $(a)$ is due to that $\frac{1}{L} \sum\limits_{n \in \mathcal{N}} \sum\limits_{p \in \mathcal{P}}\mathcal{O}\left(\frac{1}{N}\right) = \frac{1}{L} \sum\limits_{p \in \mathcal{P}}\mathcal{O}\left(1\right) = \mathcal{O}\left(\frac{1}{L}\right)$. 
Then, by (\ref{eq:mean_V_f_x}) and (\ref{eq:new_v_sec4}), we have:
\begin{align} \nonumber
	&\hat{\gamma}^{(t)}_{f_{l,m} \rightarrow x_{n,p,m}} = \tau^{(t)}_{l,m} - |A_{l,n,p}|^2 \hat{\nu}^{(t)}_{ x_{n,p,m} \rightarrow f_{l,m}}\\  \label{eq:appendix_gamma}
	&\overset{(b)}{=} \tau^{(t)}_{l,m} + \mathcal{O}\left(\frac{1}{L}\right) =\tau^{(t)}_{l,m} + o(1), \text{as $L,N \rightarrow \infty$},
\end{align}
where $(b)$ is due to that $|A_{l,n,p}|^2 = \mathcal{O}\left(\frac{1}{L}\right)$, and $\hat{\nu}^{(t)}_{ x_{n,p,m} \rightarrow f_{l,m}}= \mathcal{O}\left(1\right)$ \cite{DonohoAMP}. Next, by (\ref{eq:lemma_sigma_x}) and (\ref{eq:appendix_gamma}), we have:
\begin{align} \nonumber
	&\hat{\sigma}^{(t)}_{x_{n,p,m}} = \left(\sum_{l \in \mathcal{L}}\frac{|A_{l,n,p}|^2}{\tau^{(t)}_{l,m} + \mathcal{O}\left(\frac{1}{L}\right)}\right)^{-1}\overset{(c)}{=} \left(\sum_{l \in \mathcal{L}}\frac{|A_{l,n,p}|^2}{\hat{\tau}^{(t)}_{m} + \mathcal{O}\left(\frac{1}{L}\right)}\right)^{-1}\\
	&\overset{(d)}{=} \hat{\tau}^{(t)}_{m} + \mathcal{O}\left(\frac{1}{L}\right) = \hat{\tau}^{(t)}_{m} + o(1), \text{as $L,N \rightarrow \infty$}, 
\end{align}
where $(c)$ is due to (\ref{eq:appendix_tau_l_m}), and $(d)$ is due to $\sum_{l \in \mathcal{L}}|A_{l,n,p}|^2 = 1$. Finally, by (\ref{eq:lemma_sigma_x}) and (\ref{eq:lemma_sigma_f_x}), we have:
\begin{align} \nonumber
	&(\hat{\sigma}^{(t)}_{x_{n,p,m}\rightarrow f_{l,m}})^{-1} = (\hat{\sigma}^{(t)}_{x_{n,p,m}})^{-1} - \frac{|A_{l,n,p}|^2}{\hat{\gamma}^{(t)}_{f_{l,m} \rightarrow x_{n,p,m}}}\\ \nonumber
	&\overset{(e)}{=} (\hat{\sigma}^{(t)}_{x_{n,p,m}})^{-1} + \mathcal{O}\left(\frac{1}{L}\right) = (\hat{\sigma}^{(t)}_{x_{n,p,m}})^{-1} + o(1),
\end{align}
as $L,N \rightarrow \infty$, where $(e)$ is due to that $\hat{\gamma}^{(t)}_{f_{l,m} \rightarrow x_{n,p,m}}= \mathcal{O}\left(1\right)$\cite{DonohoAMP} and $|A_{l,n,p}|^2 = \mathcal{O}\left(\frac{1}{L}\right)$. Therefore, we complete the proof.

\section{Proof of the Theorem \ref{theo:algorithm}} \label{appendix:theorem1}
First, we obtain $\hat{r}^{(t)}_{x_{n,p,m} \rightarrow f_{l,m}} - \hat{r}^{(t)}_{x_{n,p,m}}$. By (\ref{eq:lemma_r_x}) and  (\ref{eq:new_z}), we have: 
\begin{align} \label{app:new_r}
	&\hat{r}^{(t)}_{x_{n,p,m} } =  \sum_{l \in \mathcal{L}}A^{*}_{l,n,p}(z^{(t)}_{f_{l,m}}- A_{l,n,p}\hat{x}^{(t)}_{ x_{n,p,m} \rightarrow f_{l,m}}).
\end{align}
By (\ref{eq:lemma_r_f_x}) and  (\ref{eq:new_z}), we have: 
\begin{align} \label{app:new_r_f}
	&\hat{r}^{(t)}_{x_{n,p,m} \rightarrow f_{l,m}} = \sum_{\substack{b \in \mathcal{L},\\ b \neq l}}A^{*}_{b,n,p}(z^{(t)}_{f_{l,m}}- A_{l,n,p}\hat{x}^{(t)}_{ x_{n,p,m} \rightarrow f_{l,m}}).
\end{align}
By (\ref{app:new_r}), (\ref{app:new_r_f}), and (\ref{eq:new_z}), we have:
\begin{align} \nonumber
	&\hat{r}^{(t)}_{x_{n,p,m} \rightarrow f_{l,m}} - \hat{r}^{(t)}_{x_{n,p,m} } = -A^{*}_{l,n,p}(z^{(t)}_{f_{l,m}}- A_{l,n,p}\hat{x}^{(t)}_{ x_{n,p,m} \rightarrow f_{l,m}}) \\ \label{eq:app:r_diff}
	&\overset{(a)}{=}-A^{*}_{l,n,p}z^{(t)}_{f_{l,m}} + \mathcal{O}\left(\frac{1}{L}\right), \text{as $L,N \rightarrow \infty$}, 
\end{align}
where $(a)$ is due to that $|A_{l,n,p}|^2 = \mathcal{O}\left(\frac{1}{L}\right)$, and $\hat{x}^{(t)}_{ x_{n,p,m} \rightarrow f_{l,m}}$ is bounded \cite{DonohoAMP}.
Then, by substituting (\ref{eq:app:r_diff}) into (\ref{eq:theorem1_taylor}), we have:
\begin{align}\nonumber
	\hat{x}^{(t+1)}_{ x_{n,p,m} \rightarrow f_{l,m}} =& \hat{x}^{(t+1)}_{ n,p,m} - \eta^{'}_n(\hat{r}^{(t)}_{x_{n,p,m} },\hat{\tau}^{(t)}_{m},\lambda^{(t+1)}_{n,p,m})A^{*}_{l,n,p}z^{(t)}_{f_{l,m}}\\ \label{app:taylor_x}
	& + \mathcal{O}(\frac{1}{L}) + \mathcal{O}(\frac{1}{N}), \text{as $L,N \rightarrow \infty$}.
\end{align}
Next, we obtain (\ref{eq:theorem_z}). By substituting (\ref{app:taylor_x}) into (\ref{eq:new_z}), we have:
\begin{align} \nonumber 
	z^{(t+1)}_{f_{l,m}}=& y_{l,m} - \sum_{n \in \mathcal{N}}\sum_{p \in \mathcal{P}} A_{l,n,p} \hat{x}^{(t+1)}_{n,p,m} \\ \nonumber
	&+ z^{(t)}_{f_{l,m}}\sum_{n \in \mathcal{N}}\sum_{p \in \mathcal{P}} |A_{l,n,p}|^2 \eta_n^{'}\left(\hat{r}^{(t)}_{x_{n,p,m}}, \hat{\tau}^{(t)}_{m}, \lambda^{(t+1)}_{n,p,m}\right) \\ \nonumber
	&+ \mathcal{O}(\frac{1}{L})\sum_{n \in \mathcal{N}}\sum_{p \in \mathcal{P}} A_{l,n,p}+ \mathcal{O}(\frac{1}{N})\sum_{n \in \mathcal{N}}\sum_{p \in \mathcal{P}} A_{l,n,p}\\ \nonumber
	\overset{(b)}{=}& y_{l,m} - \sum_{n \in \mathcal{N}}\sum_{p \in \mathcal{P}} A_{l,n,p} \hat{x}^{(t+1)}_{n,p,m}  \\ \nonumber
	&+ \frac{1}{L} z^{(t)}_{f_{l,m}}\sum_{n \in \mathcal{N}}\sum_{p \in \mathcal{P}}    \eta_n^{'}\left(\hat{r}^{(t)}_{x_{n,p,m}}, \hat{\tau}^{(t)}_{m}, \lambda^{(t+1)}_{n,p,m}\right)  \\ \nonumber
	&+ \mathcal{O}(\frac{N}{L^{\frac{3}{2}}}) + \mathcal{O}(\frac{1}{\sqrt{L}}) + \mathcal{O}(\frac{1}{L}), \text{as $L,N \rightarrow \infty$},
\end{align}
where $(b)$ is due to (\ref{eq:appro_A_x}) and $A_{l,n,p} = \mathcal{O}\left(\frac{1}{\sqrt{L}}\right)$. Thus, we have (\ref{eq:theorem_z}). Finally, we obtain (\ref{eq:theorem_r}). By substituting (\ref{app:taylor_x}) into (\ref{app:new_r}), we have:
\begin{align} \nonumber
	\hat{r}^{(t)}_{x_{n,p,m}} =& \sum_{l \in \mathcal{L}}A^{*}_{l,n,p}z^{(t)}_{f_{l,m}}  + \hat{x}^{(t)}_{ n,p,m}\sum_{l \in \mathcal{L}}|A_{l,n,p}|^2 \\ \nonumber
	&+ \sum_{l \in \mathcal{L}}|A_{l,n,p}|^2 \eta^{'}_n(\hat{r}^{(t)}_{x_{n,p,m} },\hat{\tau}^{(t)}_{m},\lambda^{(t+1)}_{n,p,m})A^{*}_{l,n,p}z^{(t)}_{f_{l,m}} \\ \nonumber
	&+ \mathcal{O}(\frac{1}{L}) \sum_{l \in \mathcal{L}}|A_{l,n,p}|^2 + \mathcal{O}(\frac{1}{N}) \sum_{l \in \mathcal{L}}|A_{l,n,p}|^2
	\\ \nonumber
	\overset{(c)}{=}& \sum_{l \in \mathcal{L}}A^{*}_{l,n,p}z^{(t)}_{f_{l,m}}  + \hat{x}^{(t)}_{ n,p,m} + \mathcal{O}(\frac{1}{\sqrt{L}})\\ \nonumber
	&+ \mathcal{O}(\frac{1}{L})  + \mathcal{O}(\frac{1}{N}) , \text{as $L,N \rightarrow \infty$},
\end{align}
where $(c)$ is due to that $\sum_{l \in \mathcal{L}}|A_{l,n,p}|^2 = 1$, $A_{l,n,p} = \mathcal{O}\left(\frac{1}{\sqrt{L}}\right)$, $|A_{l,n,p}|^2 = \mathcal{O}\left(\frac{1}{L}\right)$,  $\eta^{'}_n(\hat{r}^{(t)}_{x_{n,p,m} },\hat{\tau}^{(t)}_{m},\lambda^{(t+1)}_{n,p,m}) = \mathcal{O}(1)$, and $z^{(t)}_{f_{l,m}}= \mathcal{O}(1)$ \cite{DonohoAMP}. Thus, we have (\ref{eq:theorem_r}). Therefore, we complete the proof.

\section{Proof of the Theorem \ref{theo:error_prob}} \label{appendix:theorem2}
First, by substituting (\ref{eq:theta}) into (\ref{eq:a_detect}) and some manipulations, we have: 
\begin{align}
	&\hat{a}^{(t)}_n=\begin{cases}
		0,\ (\tilde{\mathbf{r}}^{(t)}_n)^H \tilde{\mathbf{r}}^{(t)}_n < \frac{\log \frac{1-\rho_n}{\rho_n} + PM \log (1+\frac{\beta_n}{\tau^{(t)}})}{\frac{1}{\tau^{(t)}} - \frac{1}{\beta_n + \tau^{(t)}}},\\
		1,\ (\tilde{\mathbf{r}}^{(t)}_n)^H \tilde{\mathbf{r}}^{(t)}_n \geq \frac{\log \frac{1-\rho_n}{\rho_n} + PM \log (1+\frac{\beta_n}{\tau^{(t)}})}{\frac{1}{\tau^{(t)}} - \frac{1}{\beta_n + \tau^{(t)}}},\\
	\end{cases} 
\end{align}
where $\tilde{\mathbf{r}}^{(t)}_n \triangleq a_n \mathbf{h}_n + \tau^{(t)} \mathbf{v}$. Next, by the assumption on $\mathbf{v}$, for given $a_n$, the entries of $\tilde{\mathbf{r}}^{(t)}_n$ are i.i.d. circularly symmetric complex Gaussian with zero mean and variance $a_n\beta_n+\tau^{(t)}$, and therefore $\frac{(\tilde{\mathbf{r}}^{(t)}_n)^H \tilde{\mathbf{r}}^{(t)}_n}{(a_n \beta_n + \tau^{(t)})/2}$ follows $\mathcal{X}^2$ distribution with $2PM$ degrees of freedom (DoF). Thus, we have:
\begin{align} \nonumber
	&{\rm Pr}(\hat{a}^{(t)}_n = 0| a_n=1) \\ \nonumber
	=& {\rm Pr}\left(\frac{(\tilde{\mathbf{r}}^{(t)}_n)^H \tilde{\mathbf{r}}^{(t)}_n}{(\beta_n + \tau^{(t)})/2} \leq \frac{\log \frac{1-\rho_n}{\rho_n} + PM \log (1+\frac{\beta_n}{\tau^{(t)}})}{\left(\frac{\beta_n + \tau^{(t)}}{2}\right)\left(\frac{1}{\tau^{(t)}} - \frac{1}{\beta_n + \tau^{(t)}}\right)}\right)\\
	=& \frac{\underline{\Gamma}\left(PM, \frac{\tau^{(t)}}{\beta_n} \log \frac{1-\rho_n}{\rho_n}+ b_{t, n}(\beta_n,\tau^{(t)}) PM\right)}{\Gamma(PM)},\\ \nonumber
	&{\rm Pr}(\hat{a}^{(t)}_n = 1| a_n=0) \\ \nonumber
	=& {\rm Pr}\left(\frac{(\tilde{\mathbf{r}}^{(t)}_n)^H \tilde{\mathbf{r}}^{(t)}_n}{\tau^{(t)}/2} > \frac{\log \frac{1-\rho_n}{\rho_n} + PM \log (1+\frac{\beta_n}{\tau^{(t)}})}{\left(\frac{ \tau^{(t)}}{2}\right)\left(\frac{1}{\tau^{(t)}} - \frac{1}{\beta_n + \tau^{(t)}}\right)}\right)\\ 
	=& \frac{\bar{\Gamma}\left(PM, \frac{\tau^{(t)}+\beta_n}{\beta_n} \log \frac{1-\rho_n}{\rho_n} +c_{t, n}(\beta_n,\tau^{(t)}) PM\right)}{\Gamma(PM)}.
\end{align}
Finally, by $p_{n}^{\mathrm{e}(t)} = {\rm Pr}(a_n=1){\rm Pr}(\hat{a}^{(t)}_n = 0| a_n=1) + {\rm Pr}(a_n=0){\rm Pr}(\hat{a}^{(t)}_n = 1| a_n=0)$, we complete the proof.

 \section{Proof of the Theorem \ref{theorem:channel_error}} \label{appendix:h}
First, by (\ref{eq:version1_theorem_x}), we have:
\begin{align} \nonumber
	&\frac{1}{PM}\mathbb{E}_{\mathbf{h}_{n},\hat{\mathbf{h}}_{n}}\left[\|\mathbf{h}_{n} - \hat{\mathbf{h}}_{n}\|^2_2\right]\\ \nonumber
	=&\frac{1}{PM}\mathbb{E}_{\mathbf{h}_{n},\hat{\mathbf{r}}^{(t)}_{n}}\Big[ \left(\mathbf{h}_{n} -\frac{g(\hat{\mathbf{r}}^{(t)}_{n},\tau^{(t)}\mathbf{I}) \beta_n}{\beta_n + \tau^{(t)}} {\hat{\mathbf{r}}^{(t)}_{n}} \right)^H\\ \nonumber
	&\left(\mathbf{h}_{n}-\frac{g(\hat{\mathbf{r}}^{(t)}_{n},\tau^{(t)}\mathbf{I}) \beta_n}{\beta_n + \tau^{(t)}} {\hat{\mathbf{r}}^{(t)}_{n}}  \right)\Big]\\ \nonumber
	=& \frac{1}{PM}\mathbb{E}_{\hat{\mathbf{r}}^{(t)}_{n}}\Big[\mathbb{E}_{\mathbf{h}_{n}}\big[ \mathbf{h}^H_{n}\mathbf{h}_{n} -\frac{g(\hat{\mathbf{r}}^{(t)}_{n},\tau^{(t)}\mathbf{I}) \beta_n}{\beta_n + \tau^{(t)}} {\mathbf{h}^H_{n} \hat{\mathbf{r}}^{(t)}_{n}}\\ \nonumber
	&-\frac{g(\hat{\mathbf{r}}^{(t)}_{n},\tau^{(t)}\mathbf{I}) \beta_n}{\beta_n + \tau^{(t)}} {(\hat{\mathbf{r}}^{(t)}_{n})^H \mathbf{h}_{n}}|\hat{\mathbf{r}}^{(t)}_{n}\big]\Big] \\ \label{eq:h_est_o}
	&+ \mathbb{E}_{\hat{\mathbf{r}}^{(t)}_{n}}\Big[\frac{g^2(\hat{\mathbf{r}}^{(t)}_{n},\tau^{(t)}\mathbf{I}) \beta^2_n}{(\beta_n + \tau^{(t)})^2} {(\hat{\mathbf{r}}^{(t)}_{n})^H \hat{\mathbf{r}}^{(t)}_{n}} \Big].
\end{align}
Next, we simplify (\ref{eq:h_est_o}). Firstly, we have:
\begin{align}\nonumber
	p(\mathbf{h}_n|\hat{\mathbf{r}}^{(t)}_{n}) &\overset{(a)}{=} \frac{p(\hat{\mathbf{r}}^{(t)}_{n}|\mathbf{h}_n) p(\mathbf{h}_n)}{\int p(\hat{\mathbf{r}}^{(t)}_{n}|\mathbf{h}_n) p(\mathbf{h}_n) d \mathbf{h}_n}\\ \nonumber
	&\overset{(b)}{=} \frac{f_{\mathcal{CN}}(\hat{\mathbf{r}}^{(t)}_{n};\mathbf{h}_n,\tau^{(t)}\mathbf{I})f_{\mathcal{CN}}(\mathbf{h}_n;0,\beta_n \mathbf{I})}{f_{\mathcal{CN}}(\hat{\mathbf{r}}^{(t)}_{n};\mathbf{0},(\beta_n +\tau^{(t)})\mathbf{I})}\\ \label{eq:app_h_r}
	&\overset{(b)}{=} f_{\mathcal{CN}}(\mathbf{h}_n;\frac{\beta_n\hat{\mathbf{r}}^{(t)}_{n} }{\beta_n + \tau^{(t)}},\frac{\beta_n \tau^{(t)}}{\beta_n + \tau^{(t)}}),
\end{align}
where $(a)$ is due to the Bayes' theorem, $(b)$ is due to the Gaussian product theorem.
Secondly, we have:
\begin{align} \nonumber
	&\mathbb{E}_{\hat{\mathbf{r}}^{(t)}_{n}}\big[\mathbb{E}_{\mathbf{h}_{n}}[ \mathbf{h}^H_{n}\mathbf{h}_{n}|\hat{\mathbf{r}}^{(t)}_{n}]\big] \\ \label{eq:h_est1}
	\overset{(a)}{=}& \frac{PM\beta_n \tau^{(t)}}{\beta_n + \tau^{(t)}} + \mathbb{E}_{\hat{\mathbf{r}}^{(t)}_{n}}\Big[\frac{\beta^2_n(\hat{\mathbf{r}}^{(t)}_{n})^H \hat{\mathbf{r}}^{(t)}_{n}}{(\beta_n + \tau^{(t)})^2}\Big],\\ \nonumber
	&\mathbb{E}_{\hat{\mathbf{r}}^{(t)}_{n}}\big[\mathbb{E}_{\mathbf{h}_{n}}[ \frac{g(\hat{\mathbf{r}}^{(t)}_{n},\tau^{(t)}\mathbf{I}) \beta_n}{\beta_n + \tau^{(t)}} {\mathbf{h}^H_{n} \hat{\mathbf{r}}^{(t)}_{n}}|\hat{\mathbf{r}}^{(t)}_{n}]\Big]\\ \label{eq:h_est2}
	\overset{(a)}{=}&  \mathbb{E}_{\hat{\mathbf{r}}^{(t)}_{n}}\Big[\frac{g(\hat{\mathbf{r}}^{(t)}_{n},\tau^{(t)}\mathbf{I}) \beta^2_n}{(\beta_n + \tau^{(t)})^2} {(\hat{\mathbf{r}}^{(t)}_{n})^H\hat{\mathbf{r}}^{(t)}_{n}}\Big],\\ \nonumber
	&\mathbb{E}_{\hat{\mathbf{r}}^{(t)}_{n}}\big[\mathbb{E}_{\mathbf{h}_{n}}[ \frac{g(\hat{\mathbf{r}}^{(t)}_{n},\tau^{(t)}\mathbf{I}) \beta_n}{\beta_n + \tau^{(t)}} {(\hat{\mathbf{r}}^{(t)}_{n})^H\mathbf{h}_{n} }|\hat{\mathbf{r}}^{(t)}_{n}]\Big] \\ \label{eq:h_est3}
	\overset{(a)}{=}& \mathbb{E}_{\hat{\mathbf{r}}^{(t)}_{n}}\Big[\frac{g(\hat{\mathbf{r}}^{(t)}_{n},\tau^{(t)}\mathbf{I}) \beta^2_n}{(\beta_n + \tau^{(t)})^2} {(\hat{\mathbf{r}}^{(t)}_{n})^H\hat{\mathbf{r}}^{(t)}_{n}}\Big].
\end{align}
where $(a)$ is due to (\ref{eq:app_h_r}). Finally, by substituting (\ref{eq:h_est1}), (\ref{eq:h_est2}), and (\ref{eq:h_est3}) into (\ref{eq:h_est_o}), we complete the proof.

\section{Proof of the Theorem \ref{theo:algorithm2}} \label{appendix:second}
First, we obtain $\hat{r}^{(t)}_{x_{n,p,m} \rightarrow f_{l,m}} - \hat{r}^{(t)}_{x_{n,p,m} }$. By (\ref{eq:lemma_r_x}) and  (\ref{eq:new_z2}), we have: 
\begin{align} \nonumber
	&\hat{r}^{(t)}_{x_{n,p,m} } \\ \label{app:new_r2}
	&=  \sum_{l \in \mathcal{L}}A^{*}_{l,n,p}(\tilde{z}^{(t)}_{f_{l,m}} + A_{l,n,p}\epsilon^{(t)}_{ x_{n,p,m} \rightarrow f_{l,m}}\hat{h}^{(t)}_{ x_{n,p,m} \rightarrow f_{l,m}}).
\end{align}
By (\ref{eq:lemma_r_f_x}) and  (\ref{eq:new_z2}), we have: 
\begin{align} \nonumber
	&\hat{r}^{(t)}_{x_{n,p,m} \rightarrow f_{l,m}} \\ \label{app:new_r_f2}
	&= \sum_{\substack{b \in \mathcal{L},\\ b \neq l}}A^{*}_{b,n,p}(\tilde{z}^{(t)}_{f_{b,m}} + A_{l,n,p}\epsilon^{(t)}_{ x_{n,p,m} \rightarrow f_{b,m}}\hat{h}^{(t)}_{ x_{n,p,m} \rightarrow f_{b,m}}).
\end{align}
By (\ref{app:new_r2}), (\ref{app:new_r_f2}), and (\ref{eq:new_z2}), we have:
\begin{align} \nonumber
	&\hat{r}^{(t)}_{x_{n,p,m} \rightarrow f_{l,m}} - \hat{r}^{(t)}_{x_{n,p,m} } = -A^{*}_{l,n,p}\hat{z}^{(t)}_{f_{l,m} \rightarrow x_{n,p,m}}\\ \nonumber
	&= -A^{*}_{l,n,p}(\tilde{z}^{(t)}_{f_{l,m}}- A_{l,n,p}\epsilon^{(t)}_{ x_{n,p,m} \rightarrow f_{l,m}}\hat{h}^{(t)}_{ x_{n,p,m} \rightarrow f_{l,m}}) \\ \label{eq:app:r_diff2}
	&\overset{(a)}{=}-A^{*}_{l,n,p}\tilde{z}^{(t)}_{f_{l,m}} + \mathcal{O}\left(\frac{1}{L}\right), \text{as $L,N \rightarrow \infty$},
\end{align} 
where $(a)$ is due to $|A_{l,n,p}|^2 = \mathcal{O}\left(\frac{1}{L}\right)$,  $\epsilon^{(t)}_{ x_{n,p,m} \rightarrow f_{l,m}}\hat{h}^{(t)}_{ x_{n,p,m} \rightarrow f_{l,m}} = \hat{x}^{(t)}_{ x_{n,p,m} \rightarrow f_{l,m}}$, and the boundness of $\hat{x}^{(t)}_{ x_{n,p,m} \rightarrow f_{l,m}}$ \cite{DonohoAMP}. 
Then, we obtain (\ref{eq:epsilon}). Firstly, by (\ref{eq:app:r_diff2}), we have:
\begin{align} \nonumber
	&\hat{r}^{(t)}_{x_{n,p,m} \rightarrow f_{l,m}} = \hat{r}^{(t)}_{x_{n,p,m}} + A^{*}_{l,n,p}\tilde{z}^{(t)}_{f_{l,m}} + \mathcal{O}\left(\frac{1}{L}\right)\\ \label{app:epslion_r}
	&\overset{(b)}{=} \hat{r}^{(t)}_{x_{n,p,m}} +\mathcal{O}\left(\frac{1}{\sqrt{L}}\right) +\mathcal{O}\left(\frac{1}{L}\right) \text{as $L,N \rightarrow \infty$},
\end{align}
where $(b)$ is due to that $A_{l,n,p} = \mathcal{O}\left(\frac{1}{\sqrt{L}}\right)$, and $\tilde{z}^{(t)}_{f_{l,m}} = \mathcal{O}(1)$ \cite{SchniterAMP}.
Secondly, by substituting (\ref{app:epslion_r}) into (\ref{eq:epsilon_origin}), we have (\ref{eq:epsilon}). Next, by substituting (\ref{eq:app:r_diff2}) into (\ref{eq:taylor_h}), we have:
\begin{align} \nonumber
	\hat{h}^{(t+1)}_{ x_{n,p,m} \rightarrow f_{l,m}} &= \hat{h}^{(t+1)}_{ n,p,m} + \frac{\beta_n A^{*}_{l,n,p}\tilde{z}^{(t)}_{f_{l,m}}}{\hat{\tau}^{(t)}_{m} + \beta_n}\\ \label{app:taylor_h}
	&+\mathcal{O}\left(\frac{1}{L}\right)+\mathcal{O}\left(\frac{1}{N}\right), \text{as $L,N \rightarrow \infty$}.
\end{align}
After that, we obtain (\ref{eq:theorem_z2}) and (\ref{eq:theorem_r2}). By substituting (\ref{app:taylor_h}) into (\ref{eq:new_z2}), we have:
\begin{align} 	\nonumber 
	\tilde{z}^{(t+1)}_{f_{l,m}} =& y_{l,m} - \sum_{n \in \mathcal{N}}\sum_{p \in \mathcal{P}} A_{l,n,p}\epsilon^{(t)}_{ x_{n,p,m} \rightarrow f_{l,m}} \hat{h}^{(t+1)}_{n,p,m} \\ \nonumber
	&+ z^{(t)}_{f_{l,m}}\sum_{n \in \mathcal{N}}\sum_{p \in \mathcal{P}} \frac{|A_{l,n,p}|^2\epsilon^{(t)}_{ x_{n,p,m} \rightarrow f_{l,m}}\beta_n}{\hat{\tau}^{(t)}_{m} + \beta_n} \\ \nonumber
	&+ \mathcal{O}(\frac{1}{L})\sum_{n \in \mathcal{N}}\sum_{p \in \mathcal{P}} A_{l,n,p}+ \mathcal{O}(\frac{1}{N})\sum_{n \in \mathcal{N}}\sum_{p \in \mathcal{P}} A_{l,n,p}\\ \nonumber
	\overset{(c)}{=}& y_{l,m} - \sum_{n \in \mathcal{N}}\sum_{p \in \mathcal{P}} A_{l,n,p} \epsilon^{(t)}_{ x_{n,p,m} \rightarrow f_{l,m}} \hat{h}^{(t+1)}_{n,p,m} \\ \nonumber
	&+ \frac{1}{L}   \tilde{z}^{(t)}_{f_{l,m}}\sum_{n \in \mathcal{N}}\sum_{p \in \mathcal{P}}   \frac{\epsilon^{(t)}_{ x_{n,p,m} \rightarrow f_{l,m}}\beta_n}{\hat{\tau}^{(t)}_{m} + \beta_n} + \mathcal{O}(\frac{N}{L^{\frac{3}{2}}}) \\ \nonumber
	& + \mathcal{O}(\frac{1}{N}), \text{as $L,N \rightarrow \infty$},
\end{align}
where $(c)$ is due to (\ref{eq:appro_A_h}) and $A_{l,n,p} = \mathcal{O}\left(\frac{1}{\sqrt{L}}\right)$ \cite{SchniterAMP}. Thus, we have (\ref{eq:theorem_z2}).
By substituting (\ref{app:taylor_h}) into (\ref{app:new_r2}), we have:
\begin{align} \nonumber
	\hat{r}^{(t)}_{x_{n,p,m}} =& \sum_{l \in \mathcal{L}}A^{*}_{l,n,p}z^{(t)}_{f_{l,m}}  + \epsilon^{(t)}_{ x_{n,p,m} \rightarrow f_{l,m}}\hat{h}^{(t)}_{ n,p,m}\sum_{l \in \mathcal{L}}|A_{l,n,p}|^2 \\ \nonumber
	&+ \sum_{l \in \mathcal{L}}\frac{|A_{l,n,p}|^2 A^{*}_{l,n,p}z^{(t)}_{f_{l,m}}\beta_n\epsilon^{(t)}_{ x_{n,p,m} \rightarrow f_{l,m}}}{\hat{\tau}^{(t)}_{m} + \beta_n} \\ \nonumber
	&+ \mathcal{O}(\frac{1}{L}) \sum_{l \in \mathcal{L}}|A_{l,n,p}|^2\epsilon^{(t)}_{ x_{n,p,m} \rightarrow f_{l,m}}\\ \nonumber
	& + \mathcal{O}(\frac{1}{N}) \sum_{l \in \mathcal{L}}|A_{l,n,p}|^2\epsilon^{(t)}_{ x_{n,p,m} \rightarrow f_{l,m}}\\ \nonumber
	\overset{(d)}{=}& \sum_{l \in \mathcal{L}}A^{*}_{l,n,p}z^{(t)}_{f_{l,m}}  + \epsilon^{(t)}_{ x_{n,p,m} \rightarrow f_{l,m}}\hat{h}^{(t)}_{ n,p,m} + \mathcal{O}(\frac{1}{\sqrt{L}})\\ \nonumber
	&+ \mathcal{O}(\frac{1}{L})  + \mathcal{O}(\frac{1}{N}) , \text{as $L,N \rightarrow \infty$},
\end{align}
where $(d)$ is due to that $\sum_{l \in \mathcal{L}}|A_{l,n,p}|^2 = 1$, $\frac{\beta_n}{\hat{\tau}^{(t)}_{m} + \beta_n} = \mathcal{O}(1)$,  $z^{(t)}_{f_{l,m}}= \mathcal{O}(1)$, $\epsilon^{(t)}_{ x_{n,p,m} \rightarrow f_{l,m}} \in [0,1]$, $A_{l,n,p} = \mathcal{O}\left(\frac{1}{\sqrt{L}}\right)$, and $|A_{l,n,p}|^2 = \mathcal{O}\left(\frac{1}{L}\right)$ \cite{SchniterAMP}.
Thus, we have (\ref{eq:theorem_r2}). Therefore, we complete the proof.
}

\bibliographystyle{IEEEtran}

\newpage

\vspace{11pt}

\vfill

\end{document}